\newcommand{\remove}[1]{{}}
\newcommand{\anna}[1]{{\color{cyan}Anna says: #1}}
\renewcommand{\comment}[1]{{}}  
\newcommand{\attention}[1]{{}}  
\newcommand{\changed}[1]{{\color{black}#1}}
\newcommand{\therese}[1]{{}}  
\newcommand{\anurag}[1]{{}}  
\newcommand{\graeme}[1]{{}}  
\newcommand{\peter}[1]{{}}  
\newcommand{\calR}{\mathcal{R}}
\newcommand{\eps}{{\varepsilon}}
\let\epsilon=\varepsilon
\title{Distant Representatives for Rectangles in the Plane}
\author{Therese  Biedl}{David R.~Cheriton School of Computer Science, University of Waterloo, Canada}{biedl@uwaterloo.ca}{https://orcid.org/0000-0002-9003-3783}{Supported by NSERC.}
\author{Anna Lubiw}{David R.~Cheriton School of Computer Science, University of Waterloo, Canada}{alubiw@uwaterloo.ca}{}{Supported by NSERC.}
\author{Anurag Murty Naredla}{David R.~Cheriton School of Computer Science, University of Waterloo, Canada}{amnaredla@uwaterloo.ca}{}{}
\author{Peter Dominik Ralbovsky}{David R.~Cheriton School of Computer Science, University of Waterloo, Canada}{peter.ralbovsky@gmail.com}{}{}
\author{Graeme Stroud}{David R.~Cheriton School of Computer Science, University of Waterloo, Canada}{grstroud@uwaterloo.ca}{}{}
\authorrunning{Biedl, Lubiw, Naredla, Ralbovsky, Stroud}
\keywords{Distant representatives, blocker shapes, matching, approximation algorithm, APX-hardness}
\date{\today}
\begin{document}
\maketitle

\begin{abstract}
The input to the distant representatives problem is a set of $n$ objects in the plane and the goal is to find a representative point from each object while maximizing the distance between the closest pair of points.
When the objects are axis-aligned rectangles, we give 
polynomial time constant-factor approximation algorithms for the $L_1$, $L_2$, and $L_\infty$ distance measures.
We also prove lower bounds on the approximation factors that can be achieved in polynomial time (unless P = NP).
\end{abstract}

\section{Introduction}

The \emph{distant representatives problem} was first introduced by  Fiala et al.~\cite{fiala2005systems}.  The name is a play-on-words 
on the  term ``distinct representatives'' 
from Philip Hall's classic work on bipartite matching~\cite{hall1935representatives}.
The input 
is a set of geometric objects  in  a metric  space.
The goal is to choose one ``representative'' point in each object such  that the points are distant from  each other---more precisely, the objective is to maximize the distance between the  closest pair of representative points.  In the  decision  version of the problem, we are given a bound $\delta$ and  the  question is whether we can  choose one  representative point in each  object such that the distance between any two  points is at least $\delta$.   

The distant representatives problem has   applications to map  labelling and data visualization.
To attach a label to each object, we can find representative points that are at least distance $\delta$ apart, and label each object with a ball of diameter $\delta$  (a square in $L_\infty$) centred at its representative point.  

The distant representatives problem is closely related to
dispersion and packing problems.
When all the objects are copies of a single object, the distant representatives problem becomes the dispersion problem: to  choose $k$ points in a region $R$  to  maximize the  minimum distance between any two chosen points~\cite{baur2001approximation}.  Equivalently, the problem is to pack $k$ disjoint discs (in the chosen metric) of diameter $\delta$ into an expanded region and maximize $\delta$.  
The distant representatives problem is also related to problems of ``imprecise points'' where standard computational geometry problems are solved when each input point is only known to lie within some small region~\cite{loffler2010-box}.

There is a polynomial time algorithm for the distant representatives problem when the objects are segments on a line\changed{\cite{simons1978fast}}.
This result 
comes from 
the scheduling literature---each representative point is regarded as the centre-point of a unit length job.
However, as shown by Fiala et al.~\cite{fiala2005systems}, the decision version of distant representatives becomes NP-hard in 2D when the objects are 
unit discs for the  $L_2$ norm or unit squares for the $L_\infty$ norm.

Cabello~\cite{cabello2007approximation} was the first to consider the optimization version of the distant representatives problem.  He  gave polynomial time approximation algorithms for the cases in 2D where the objects are squares under  the $L_\infty$ norm,  or discs  under the  $L_2$ norm.  The squares/discs may intersect and may have different sizes.  His algorithms achieve an approximation  factor of 2  in $L_\infty$ and $\frac{8}{3}$ in $L_2$, with an  improvement to 2.24 if the input  discs are  disjoint. A main idea in his solution is an ``approximate-placement'' algorithm that  chooses representative points from a fine-enough grid using a matching algorithm; small squares/discs that do not contain grid points are handled separately. 
Cabello noted that the NP-hardness proof of  Fiala et al.~\cite{fiala2005systems} can be modified to prove that there is  no polynomial time approximation scheme (PTAS) for these problems unless P=NP. However, no one has given  exact lower bounds on the approximation factors that can be achieved in polynomial time.

\paragraph*{Our Results}
We consider the distant representatives problem for axis-parallel rectangles in the plane. 
Rectangles are more versatile than squares or circles in many applications, e.g., for labelling rectangular Euler or Venn diagrams~\cite{marshall2005scaled}.

We give polynomial time approximation algorithms to find representative points for the rectangles such that the distance between any two representative points is at least $1/f$ times the optimum.   
The approximation factors $f$ are given in 
Table~\ref{table:bounds} for the $L_1$, $L_2$, and  $L_\infty$ norms. 
Since rectangles are not fat objects~\cite{chan2003polynomial}, Cabello's approach of discretizing the problem by choosing representative points from a grid does not extend.  Instead, we introduce a new technique of ``imprecise discretization'' and 
choose representative points from  1-dimensional shapes (e.g.,~$+$-shapes) arranged in a grid. After that, our plan is similar to Cabello's.  First we solve an approximation version of the decision problem---to find representative points 
so long as the
given distance $\delta$ 
is not too large compared to the optimum $\delta^*$.  
Then we perform a search to find an approximation to 
$\delta^*$.
Unlike previous algorithms which use the real-RAM model, we use the word-RAM model, and thus must address bit complexity issues.  

We accompany these positive results with lower bounds on the approximation factors that can be achieved in polynomial time (assuming P $\ne$ NP).
The lower bounds are shown in Table~\ref{table:bounds}.  They apply even in the special case of horizontal and vertical line segments in the plane.
The results are proved via gap-producing reductions from 
Monotone Rectilinear Planar 3-SAT~\cite{deberg2010optimal}.
These are the first explicit lower bounds on approximation factors for the distant representatives problem for any type of object.

\begin{table}[ht]
    \centering
    \begin{tabular}{c|c c c}
           & $L_1$ & $L_2$ & $L_\infty$\\
           \hline
          upper bound &  5 & $\sqrt{34} \approx 5.83$ & 6\\
          lower bound &  $1.5$ & $1.4425$ & $1.5$ \\
    \end{tabular}
    \caption{Bounds on polynomial time approximation factors for the  distant representatives problem for axis-aligned rectangles  in the plane.  A  lower bound of $x$ means that an approximation factor less than $x$ implies P $=$ NP. 
    (For other $L_p$ norms,
    there are some constant factors, but we have not optimized them.) 
}
    \label{table:bounds}
\end{table}

Finally, we consider the even more special case of unit-length horizontal line segments, and the decision version of distant representatives.  This is even closer to the tractable case of line segments on a line.  However, 
Roeloffzen in his Master's thesis~\cite{roeloffzenfinding} proved NP-hardness for the $L_2$ norm.  We give a more careful proof that takes care of bit complexity issues, and we show that the problem is NP-complete in the $L_1$ and $L_\infty$ norms.

For our algorithms and  our hardness results, we must deal with bit complexity issues. For rectangles under the $L_1$ and $L_\infty$ norms, we show that both the optimum value $\delta^*$ and the coordinates of an optimum solution have polynomially-bounded bit complexity. 
In particular, the decision problems lie in NP.  The $L_2$ norm remains more of a mystery, and the decision problem can only be placed in $\exists \mathbb{R}$ (for an explanation of this class, see~\cite{cardinal2015computational}).


\paragraph*{Background.}
In one dimension, the decision version of the distant representatives problem for intervals on a line was solved by Barbara Simons~\cite{simons1978fast}, as a scheduling problem of placing disjoint unit jobs in given intervals.  To transform the decision version of distant representatives to the scheduling problem,  scale so $\delta = 1$, then expand each interval by $1/2$ on each side. The midpoints of the unit jobs provide the desired solution.
Simons's decision algorithm 
was speeded up to $O(n \log n)$ by Garey et al.~\cite{garey1981scheduling}.
The optimum $\delta^*$ can be found using a binary search---in fact there is a discrete set of $O(n^3)$ possible $\delta^*$ values, which provides an $O(n^3 \log n)$ algorithm. (We see how to improve this to $O(n^2 \log n)$ but we are not aware of any published improvement.)
There has been recent work on the online version of the problem~\cite{chen2019efficient}.  
The (offline) problem is easier when the intervals are disjoint.  More generally, the problem is easier when the ordering of the representative points is specified, or is determined---for example if no interval is contained in another then there is an optimum solution where the ordering of the representative points is the same as the ordering of the interval's left endpoints. This ``dispersion problem for ordered intervals'' can be solved in linear time~\cite{li2018dispersing}. 
In a companion paper to 
this one, we improved this to
a simpler  algorithm using shortest paths in a polygon that solves the harder problem of finding the lexicographic maximum list of distances between successive pairs~\cite{biedl2021dispersion}.

Cabello~\cite{cabello2007approximation} gave polynomial time approximation algorithms for the distant representatives problem for balls in the plane, specifically for squares in $L_\infty$ and for discs in $L_2$, with approximation factors of 2 and $\frac{8}{3}= 2.6\dot{6}$, respectively.  For disjoint discs in $L_2$ he improved the approximation factor to $2.24$.
Jiang  and Dumitrescu~\cite{dumitrescu2012dispersion} further improved the approximation factor for disjoint discs to $1.414$ ($=1/.707$) by adding LP-based techniques to Cabello's  approach. They also considered the case of unit discs, where they gave an algorithm with approximation factor
 $2.14$ ($= 1/.4674$).
 For disjoint unit discs they gave a very simple algorithm with approximation factor $1.96$ ($= 1/.511$).
In a follow-up paper Jiang  and Dumitrescu~\cite{dumitrescu2015systems} gave bounds on the optimum $\delta^*$ for balls and cubes in $L_2$ depending on the minimum area of the union of subsets of $k$ objects---these results have the flavour of Hall's classic condition for the existence of a set of distinct representatives.

The geometric dispersion problem (when all objects are copies of one object) was studied by Bauer and Fekete~\cite{baur2001approximation}. 
They considered the problem of 
placing $k$ points  in  a rectilinear polygon with  holes to  maximize the min $L_\infty$  distance between any two  points or between a point and the boundary of the region.  Equivalently, the problem is to pack $k$ as-large-as-possible identical squares into the region. 
They gave a polynomial time $3/2$-approximation algorithm, and proved that $14/13$ is a lower bound on the approximation factor achievable in polynomial time.  
By contrast, if the goal is to pack as many squares of a given size into a region, 
the famous shifting-grid strategy of 
Hochbaum and Maas~\cite{hochbaum1985approximation} 
provides a PTAS.  Bauer and Fekete use this PTAS to design an approximate decision algorithm for their problem.

It is NP-hard to decide whether a square can be packed with given (different sized) 
squares~\cite{leung1990packing} or discs~\cite{demaine2010circle}.
For algorithmic approaches, see the survey~\cite{hifi2009literature}.
There is a vast literature on the densest packing of equal discs/squares in a region (e.g.~a large circle or square)---see the book~\cite{szabo2007new}.

Many geometric packing problems suffer from issues of bit complexity.  In particular, there are many packing problems that are not known to lie in NP (e.g.,~packing discs in a square~\cite{demaine2010circle}).
This issue is addressed in 
a recent general approach to geometric  approximation~\cite{erickson2020smoothing}.  
Another direction is to prove that packing problems are complete for the larger class $\exists \mathbb{R}$ (existential theory of the reals)~\cite{abrahamsen2020framework}.

The distant representatives problem is  closely  related to problems on imprecise points, where each point is only known to lie within some $\epsilon$-ball, and the worst-case or best-case representative points, under various measures, are considered.  Many geometric  problems on points (e.g., convex hulls, spanning  trees) have been explored  under the model  of imprecise points~\cite{chambers2017connectivity,dorrigiv2015minimum,loffler2010-box,loffler2010-CH}.

As mentioned above, the distant representatives problem has application to labelling and visualization, specifically it provides a new approach to the problem of labelling (overlapping) rectangular regions or line segments. 
Most map labelling research is about labelling point features with rectangular labels of a given size, and the objective is to label as many of the points as possible~\cite{formann1991packing}. There is a small body of literature on labelling line features~\cite{doddi1997map,wolff2001simple}, and even less on labelling regions, except by assuming a finite pre-specified set of label positions~\cite{wagner2001three}.


\paragraph*{Definitions and Preliminaries}
Suppose we are given a set $\calR$ of $n$ axis-aligned rectangles in 2D.
In the {\em distant representatives
problem}, the goal is to 
choose a point  $p(R) \in R$ for each rectangle $R$ in  $\cal R$ so as to 
maximize the minimum pairwise distance between points, i.e., we want to 
maximize $\min_{R,R'\in \calR} d_\ell(p(R),p(R'))$, where $d_\ell$ is the distance-function
of our choice.  We consider here $\ell=1,2,\infty$, i.e., the $L_1$-distance, the
Euclidean $L_2$-distance and the $L_\infty$-distance. 
We write $\delta^*_\ell$ for
the maximum such distance for $\ell\in \{1,2,\infty\}$, and omit `$\ell$' when it is
clear from the context.

In the {\em decision version} of the distant representatives problem, we are given
not only the rectangles but also a value $\delta$, and we ask whether there exists
a set of representative points that have pairwise distances at least $\delta$.

\section{Approximating the decision problem}
\label{sec:decision-alg}

In this  section we give an algorithm that takes as input a set $\cal R$ of axis-aligned rectangles, and a value $\delta$ 
and finds a set of representative points of distance at least $\delta$ apart so  long as $\delta$ is at most some fraction of the optimum, $\delta^*$, for this instance.  
Let $n = |{\cal R}|$ and suppose that the coordinates of the rectangle corners are even integers in the range $[0,D]$ (which guarantees that the rectangle centres also have integer coordinates).

The idea of the algorithm is to overlay a grid of \emph{blocker-shapes} on top of the rectangles as shown in Figure~\ref{fig:blocker-shapes}, while ensuring that any two blocker-shapes are distance at least $\delta$ apart.  
The hope is to use a matching algorithm to match every rectangle to a unique intersecting blocker-shape.  Then, if rectangle $R$ is matched to blocker-shape $B$, we choose any point in $R \cap B$ as the representative point for $B$, which guarantees distance at least $\delta$ between representative points since the blocker-shapes are distance at least $\delta$ apart.
The flaw in this plan is that there may 
be \emph{small} rectangles that do not intersect a blocker shape.  To remedy this, we represent a small rectangle by its centre point, and we eliminate any nearby blocker-shapes before running the matching algorithm.

For the $L_1$ and $L_\infty$ norms 
we assume that $\delta$ is given as a rational number with at most $t$ digits in the numerator and denominator.  
Because we are using the word-RAM model where we cannot compute square roots, we will work with $\delta^2$ for the $L_2$ norm.
Thus, for the $L_2$ norm, we assume that we are given $\delta^2$ as a rational number with at most $t$ digits in the numerator and denominator.
The bit size of the input is $\Theta(n\log D + t)$. 
Similarly, for $L_2$, any output representative point $(x,y)$ will be given as $(x^2,y^2)$.
With these nuances of input and output, we express the main result of this section as follows:

\begin{theorem}
\label{thm:placement}
There exists an algorithm {\sc Placement}$(\delta)$ that, given input
$\ell\in \{1,2,\infty\}$, 
rectangles $\cal R$, and
$\delta>0$, 
either finds an assignment of representative points for $\cal R$ of $L_\ell$-distance at least 
$\delta$, or determines that 
$\delta > \delta^*_\ell / f_\ell$.
Here
$f_1 = 5, f_2 = \sqrt{34} \approx 5.83, f_\infty = 6$.

The run-time of the algorithm is $O(n^2 \log n)$  in the word RAM model, i.e., assuming we can do basic arithmetic on numbers of size $O(\log D + t)$ in constant time.
\end{theorem}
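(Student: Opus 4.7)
The plan is to make the intuition at the top of Section~\ref{sec:decision-alg} precise: cover the plane by a regular grid of pairwise $\delta$-separated one-dimensional \emph{blocker shapes}, then reduce the search for representatives to a bipartite matching between rectangles and blocker shapes.

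For each norm $\ell$ I would design a family of blocker shapes consisting of a ``$+$'' placed at each vertex of a regular grid of spacing $s_\ell := f_\ell\delta$, with horizontal and vertical arm lengths tuned so that (a) any two distinct blocker shapes are at $L_\ell$-distance at least $\delta$, and (b) every axis-aligned rectangle whose width and height both exceed an appropriate threshold of size $\Theta(s_\ell)$ intersects at least one blocker shape. The constants $f_\ell=5,\sqrt{34},6$ fall out of the tight case of~(a): two adjacent ``$+$''s whose arm tips are nearest under~$L_\ell$. Call a rectangle \emph{small} if it fails~(b), i.e.\ if it misses every blocker shape, and \emph{large} otherwise. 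Replace each small rectangle by its centre point $p$ and delete from the grid every shape at $L_\ell$-distance less than $\delta$ from any such $p$. Construct a bipartite graph $G$ with rectangles on one side and the surviving shapes on the other, with an edge from a large $R$ to $B$ iff $R\cap B\neq\emptyset$, and from a small rectangle with centre $p$ to $B$ iff $p\in B$. Run bipartite matching. If a perfect matching of rectangles is produced, pick for each matched pair $(R,B)$ any explicit point of $R\cap B$ (or the centre of $R$ if $R$ is small); by construction all chosen points are pairwise at $L_\ell$-distance at least~$\delta$.

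The hard part is the converse: when no perfect matching in $G$ exists, I must certify $\delta > \delta^*_\ell / f_\ell$. I would prove the contrapositive---any feasible placement at pairwise $L_\ell$-distance $f_\ell\delta$ induces a perfect matching in $G$---by assigning each representative $p_R$ to the blocker shape closest to it in $L_\ell$. Since the grid spacing is exactly $s_\ell=f_\ell\delta$, no two points $f_\ell\delta$ apart can be ``closest'' to the same shape, so the assignment is a matching. Property~(b) ensures that the shape assigned to a large rectangle $R$ actually intersects $R$, giving a valid edge of $G$; for a small rectangle, the deletion step ensures that the nearest surviving shape contains the rectangle's centre. The delicate part is pinning down the arm lengths, the small-rectangle threshold, and the deletion radius so that these containments hold exactly with the claimed constants $f_\ell$, and so that a symmetric argument works for each of $L_1$, $L_2$, $L_\infty$.

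For bit complexity, every quantity manipulated under $L_1$ and $L_\infty$ is a rational of bit length $O(\log D + t)$; under $L_2$ I would carry only squared distances throughout, compatible with the stated input/output convention. Each chosen representative can be taken at a corner of the rectangle $R\cap B$ (or at a small rectangle's centre), so its coordinates have the same bit size as the inputs. After discarding blocker shapes that meet no rectangle, the relevant bipartite graph has $O(n)$ vertices per side and $O(n)$ edges per vertex, and a standard bipartite matching routine in the word-RAM model delivers the promised $O(n^2 \log n)$ running time.
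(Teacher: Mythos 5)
Your high-level framework (grid of $1$-dimensional blocker shapes $\geq\delta$ apart, small rectangles represented by their centres, bipartite matching between the remaining rectangles and blocker shapes, converse proved by rounding an optimal solution onto blockers) does match the paper. But the quantitative setup is off in a way that breaks the argument. You take grid spacing $s_\ell = f_\ell\delta$ and put a $+$-shape at \emph{every} vertex. Since the arms of every $+$ lie along grid lines, the open interior of each grid cell contains no blocker at all, so a rectangle of size up to almost $f_\ell\delta \times f_\ell\delta$ can be ``small.'' Two such small rectangles can then have centres within distance $\delta$ while the optimal representatives are nearly $f_\ell\delta$ apart in each rectangle: the triangle inequality gives $\delta^* < (f_\ell+1)\delta$, not $\delta^* < f_\ell\delta$, so the reported failure certificate is not valid. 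The paper avoids this with a grid of spacing $\gamma_\ell \approx \delta$ on which the blockers are placed \emph{sparsely} (anchor grid points satisfy parity conditions, $i\equiv j\pmod 4$ or $\pmod 3$). This makes the blocker shapes $\delta$-separated while keeping every blocker-free cavity small (at most $5\gamma_\ell$ in $L_1$-diameter), which is exactly what makes the $2.5\gamma_\ell$ bound on $d(p^*(r),p(r))$ and hence the constants $f_\ell$ come out.

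Two further steps do not go through as you state them. First, rounding each optimal $p^*_R$ to the nearest blocker shape does not give an edge of $G$: the blocker nearest to $p^*_R$ need not intersect $R$, only property (b) says \emph{some} blocker does. The paper instead rounds $p^*(R)$ to the nearest blocker point \emph{that also lies in $R$} (this exists for big rectangles by definition), and then proves that two representatives rounded to the same blocker must lie in an \emph{open} ball of diameter $f_\ell\delta$ centred at that blocker's anchor --- a genuinely different and more careful injectivity argument, hinging on the long-cavity structure of the sparse grid. Second, putting small rectangles into the matching via edges ``$p\in B$'' cannot work: for generic $\delta$ the centre $p$ lies on no $1$-dimensional blocker, so a small rectangle would have no incident edge and a perfect matching would never exist. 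The paper keeps small rectangles out of the matching entirely; it fixes their centres as representatives, explicitly tests pairwise distances between those centres, and deletes from the graph only the blockers ``owned'' by small centres. Finally, your claim that the matching graph has $O(n)$ edges per vertex is not automatic --- a single big rectangle can cross $\Theta(D/\delta)$ blockers --- and the paper must explicitly truncate each big rectangle's degree to $n$ (justified by an exchange argument) to keep the graph to $O(n^2)$ edges.
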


To describe our algorithm, we think of
overlaying the $D \times D$ bounding box of the rectangles 
with a 
grid of horizontal and  vertical lines such that the diagonal distance across a square of the grid is $\delta$.  This means that grid lines are 
spaced $\gamma_\ell$  apart, where
$\gamma_1 = \delta /2$, 
$\gamma_2 = \delta / \sqrt{2}$,
and $\gamma_\infty  = \delta$.
For $L_2$ we will work with $\gamma_2^2=\delta^2/2$ which is rational.
Note that the algorithm does not explicitly construct the grid.
Number the grid lines from left to right and bottom to top, and identify a grid point by its two indices.
Note that the number of indices is $D/\gamma_\ell$, so the 
size of each index is $O(\log D + t)$.
We imagine filling the grid with {\em blocker-shapes}, where the chosen shape depends on 
the norm $L_\ell$ that is used---see Figure~\ref{fig:blocker-shapes}.

\begin{itemize}
\itemsep -1pt
\item For $\ell=1,2$, we use {\em $+$-shapes}.  Each $+$-shape consists
of the four incident grid-segments of one \emph{anchor} grid-point, where
$(i,j)$ is the anchor of a $+$-shape iff $i$ is even and $i \equiv j \mod 4$.

\item For $\ell=\infty$, we use {\em $L$-shapes}.   
Each  $L$-shape consists of 
the two incident grid-segments above and to the right 
of one \emph{anchor} grid-point,
where $(i,j)$ is the anchor of an $L$-shape iff $i \equiv j \mod 3$.

\end{itemize}
Observe that, by our choice of grid size $\gamma_\ell$,  any two blocker shapes are distance $\delta$ or more apart in the relevant norm.

\begin{figure}[ht]
\hspace*{\fill}
\includegraphics[scale=0.9,page=1,trim=10 30 90 80,clip]{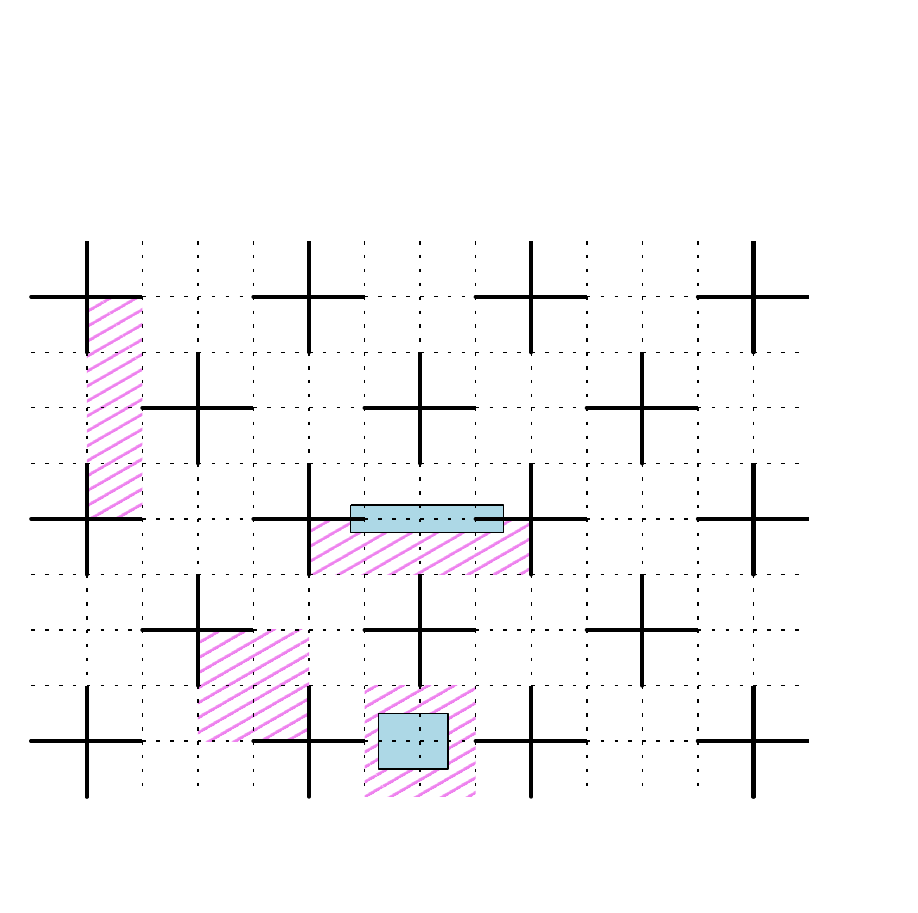}
\hspace*{\fill}
\includegraphics[scale=0.9,page=1,trim=60 30 30 10,clip]{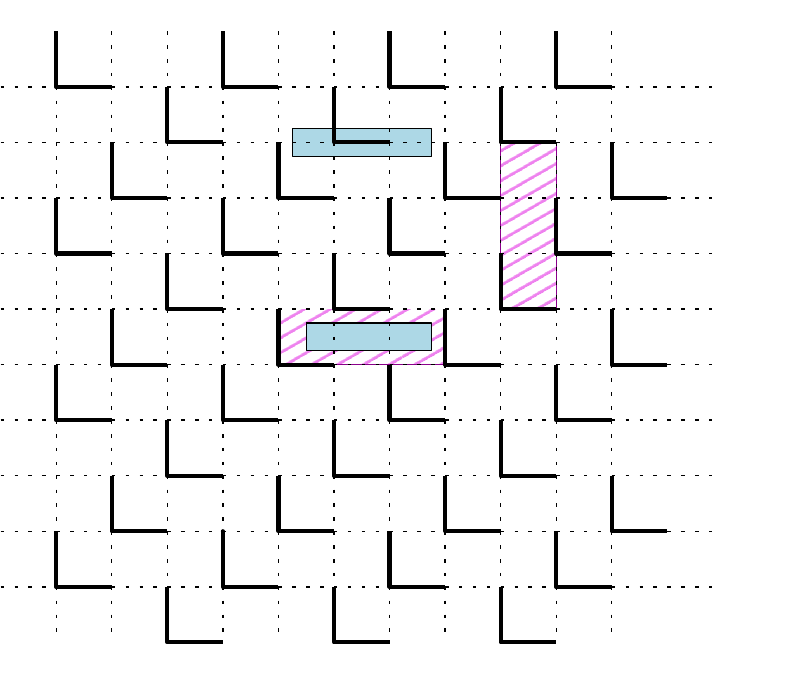}
\hspace*{\fill}
\caption{Grids and blocker-shapes.  We indicate in each a big and a small rectangle (shaded blue) and some 
cavities
(hatched pink).  Small rectangles are contained in cavities.
(Left) The grid of $+$-shapes
for the $L_1$ and $L_2$ norms,
with two long cavities and two square cavities.
(Right) The grid of $L$-shapes
for the $L_\infty$-norm with two long cavities.
}
\label{fig:blocker-shapes}
\label{fig:cells}
\end{figure}

\paragraph*{\bf Algorithm {\sc Placement}($\delta$).}
We now give the rough outline of 
our algorithm
to compute a representative point $p(R)$ for each rectangle $R$.  
The details of how to implement each step are given later on. 
Our algorithm consists of the following steps:
\begin{enumerate}
\item Partition the input rectangles into \emph{small} and \emph{big} rectangles.
Roughly speaking, a rectangle is \emph{big} if it intersects a blocker-shape, but we give a more precise definition below to deal with  intersections on the boundary of the rectangle.  

\item For any small rectangle $r$, let $p(r)$ be the centre of $r$, i.e., the point where the two diagonals of $r$ intersect each other. %
\item 
If two points $p(r),p(r')$ of two small rectangles $r,r'$ have $L_\ell$-distance less than $\delta$,  then declare that $\delta^*< f_\ell\delta$, and halt.

\item 
Find all the blocker-shapes that are \emph{owned} by small rectangles, where 
a blocker-shape $B$ \emph{is  owned by} a small rectangle $r$ if
$p(r)$ has distance strictly less than $\delta$ to some point of $B$.
For $L_2$ we will enlarge ownership as follows: $B$ \emph{is  owned by} $r$ if
$d_1(p(r), B) < \sqrt{2} \delta$.
To justify that this enlarges ownership, note that $d_1(p, q) \le \sqrt{2} d_2(p,q)$ so $d_2(p(r),B) < \delta$ implies $d_1(p(r), B) < \sqrt{2} \delta$.

\item Define a bipartite graph $H$ as follows. On one side, $H$ has a vertex for each big rectangle, and on the other side, it has a vertex for each blocker-shape that is not owned  by a small rectangle.  Add an edge whenever the rectangle intersects the blocker-shape.
\item Construct a subgraph $H^-$ of $H$ as follows.  For any big rectangle, if it has degree more than $n$ in $H$, then arbitrarily delete incident edges until it has degree $n$. 
Also delete any blocker-shape that has no incident edges.  
\item Compute a maximum matching $M$ in $H^-$.    We say that it {\em covers all big rectangles} if every big rectangle has an incident matching-edge in $M$.
\item 
If $M$ does not cover all big rectangles, then declare that $\delta^*<  f_\ell \delta$ and  halt.
\item 
For each big rectangle $R$ let $B$ be the blocker-shape that $R$ is matched to, and let $p(R)$ be an arbitrary point in $B\cap R$.
(This exists since $(B,R)$ was an edge.)
\item Return the set $\{p(R)\}$ (for both big and small rectangles $R$) as an approximate set of distant representatives.
\end{enumerate}

We now define \emph{big} rectangles more precisely.
The intuition is that a rectangle is big if it intersects a blocker shape even if $\delta$ is decreased by an infinitesimal amount.
Note that, as $\delta$ decreases, the blocker-shapes change position and size continuously.  More formally, a rectangle is \emph{big} 
with respect to $\delta$ if 
there is some $\epsilon_0 > 0$ such that for all $\epsilon$, $0 \le \epsilon < \epsilon_0$, there is a point in the (closed) rectangle and in a blocker-shape (for the blocker-shapes at $\delta - \epsilon$).
The reason for this definition is so the set of big rectangles remains the same if $\delta$ is decreased by an infinitesimal amount, a property that becomes relevant when we use the {\sc Placement} algorithm to approximately solve the optimization version of distant representatives.

For implementation details and the correctness proof, we need one more definition. 
A \emph{cavity} is a closed maximal axis-aligned rectangular region  with no points of blocker shapes in its interior.  We distinguish a 
\emph{square cavity}, which is a $2 \times 2$ block of grid squares (only possible for $+$-shapes), 
and a 
\emph{long cavity} which lies between two consecutive grid lines.  For $+$-shapes a long cavity is a $1 \times 4$ or $4 \times 1$ block of grid squares, and for $L$-shapes, a long cavity is a $1 \times 3$ or $3 \times 1$ block of grid squares.
Observe that any small rectangle is contained in a cavity.


\subparagraph*{Implementation and Runtime.}
In order to implement the algorithm efficiently we discuss:  
\begin{itemize}
\item How to test whether a rectangle is big/small.
\item How to find the blocker shapes owned by a small rectangle.
\item How to construct $H^-$.
\item How to efficiently compute the matching.
\end{itemize}

We first show how to find which grid square contains a given point.
Identify a grid square by the indices of its lower left grid point.
Given a point $(x,y)$ in the plane (e.g., a corner of an input rectangle) the vertical grid line just before $x$ has index $i$ where 
$i\gamma_\ell \le x < (i+1)\gamma_\ell$
so $i=\lfloor x/\gamma_\ell\rfloor$.
For $\ell=1$,   $i  
=\lfloor 2x/\delta \rfloor$.  
For $\ell=\infty$, $i = \lfloor x/\delta \rfloor$.
For $\ell=2$,  $i$ is the largest natural number such that $i^2 \le 2x^2/\delta^2$, i.e.,~$i$ is the integer square root of $\lfloor 2x^2/\delta^2 \rfloor$.  The integer square root of a number with $O(\log D + t)$ bits can be found in time 
$O(\log D + t)$ on a word RAM.

We apply the above procedure $O(n)$ times to find the grid squares of all the rectangles' corners and centres.
Using the 
differences and parities of the indices of the grid squares containing the corners, 
we can test if a rectangle 
contains points of blocker shapes in its interior or on its boundary.
From this, we 
can test if a rectangle is big or small in constant time. (Note that our complicated rule is really just testing boundary conditions.)

Each small rectangle $r$ owns 
a constant number of
blocker shapes and these can be found by testing 
a constant number of grid squares
that are near $p(r)$.

Next we show how to construct the bipartite graph $H^-$ and compute a maximum matching.  Note that blocker-shapes, which form one vertex set of $H^-$, are specified using  
$O(\log D + t)$ bits each, although we do not write that in our run-time bounds.
To construct $H^-$ we first build a dictionary for the $O(n)$ blocker-shapes owned by small rectangles.
Then for each big rectangle $R$, enumerate 
blocker-shapes intersecting $R$ in arbitrary order until we have found $n$ that are not owned  by a small rectangle, or until we have found all of them, whichever happens first.  
The run-time for this step is $O(n^2 \log n)$
which will in fact be the bottleneck in our runtime.
The graph $H^-$ has $O(n^2)$ vertices and edges.

To find the maximum matching in $H^-$, we can use the standard algorithm by Hopcroft and Karp \cite{Hopcroft73Karp}
which has run-time $O(\sqrt{\nu}|E|)$, where $\nu$ is the size of the maximum matching 
\cite[Theorem 16.5]{schrijver2003combinatorial}.  
We have $\nu\leq n$ and $|E|=O(n^2)$, so the
run-time to find the matching is $O(n^{2.5})$.
With appropriate further data structures the runtime of computing the matching can be reduced to $O(n\sqrt{n}\log n)$; see the full version.
Therefore the runtime becomes $O(n^2 \log n)$.


\paragraph*{Correctness.}
The algorithm outputs either a set of points or a 
 declaration that
$\delta^*_\ell < f_\ell \delta$.
We first show that the algorithm is correct if it outputs a set of points.

\begin{lemma}
If the algorithm returns a point-set, then
the $L_\ell$-distance between  any two points  chosen by the algorithm 
is at least $\delta$.
\end{lemma}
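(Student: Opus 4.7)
The plan is to case-split on the types of the two rectangles whose representatives we are comparing. Say the points are $p(R_1)$ and $p(R_2)$.

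\medskip

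\noindent\textbf{Case (i): both $R_1,R_2$ are small.} Then $p(R_1),p(R_2)$ are their centres. Since the algorithm did not halt in step 3, all pairs of small-rectangle centres are at $L_\ell$-distance at least $\delta$, so we are done.

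\medskip

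\noindent\textbf{Case (ii): both $R_1,R_2$ are big.} Then $R_i$ was matched in $M$ to some blocker-shape $B_i$, and $p(R_i)\in B_i\cap R_i$. Since $M$ is a matching, $B_1\ne B_2$, and by the choice of grid spacing $\gamma_\ell$ any two distinct blocker-shapes have $L_\ell$-distance at least $\delta$. Hence $d_\ell(p(R_1),p(R_2))\ge d_\ell(B_1,B_2)\ge\delta$.

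\medskip

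\noindent\textbf{Case (iii): $R_1$ is small and $R_2$ is big.} Then $p(R_1)$ is the centre of $R_1$, and $p(R_2)\in B$ for some blocker-shape $B$ that was a vertex of $H$ and hence was not owned by any small rectangle. In particular $B$ is not owned by $R_1$. For $\ell\in\{1,\infty\}$, this means $d_\ell(p(R_1),B)\ge\delta$, so $d_\ell(p(R_1),p(R_2))\ge\delta$. For $\ell=2$, the enlarged definition of ownership gives $d_1(p(R_1),B)\ge\sqrt{2}\,\delta$; combined with the inequality $d_1(x,y)\le\sqrt{2}\,d_2(x,y)$ (Cauchy--Schwarz), this yields
\[
d_2(p(R_1),p(R_2))\;\ge\;\frac{d_1(p(R_1),p(R_2))}{\sqrt{2}}\;\ge\;\frac{d_1(p(R_1),B)}{\sqrt{2}}\;\ge\;\delta.
\]

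\medskip

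The only non-routine point is Case (iii) for $L_2$: it is exactly why ownership was inflated by a factor $\sqrt{2}$ via the $L_1$-distance in step~4. All other cases are immediate from the construction.
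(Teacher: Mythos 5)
Your proof is correct and follows the same case split (small--small, big--big, small--big) as the paper's own argument. You actually give slightly more detail than the paper in Case (iii) for $L_2$, explicitly unwinding why the enlarged $L_1$-based ownership test still guarantees $d_2 \ge \delta$; the paper leaves that conversion implicit, so your version is a welcome clarification rather than a divergence.
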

\begin{proof}
For two small rectangles $r,r'$, this holds since we test $d_\ell(p(r),p(r'))$
explicitly.  
For any two big rectangles $R,R'$, the two assigned points $p(R)$ and $p(R')$ 
lie on different blocker-shapes, and hence have  distance 
at least $\delta$.  
For any big rectangle $R$ and small rectangle $r$, point $p(R)$ lies on 
a blocker-shape that 
is  not owned by  $r$, so the blocker shape, and hence $p(R)$, has
distance at least $\delta$
from $p(r)$.  
\end{proof}

If the algorithm does not output a set of points, then it 
outputs a declaration that $\delta$ is too large compared to the optimum $\delta^*$, viz., 
$\delta^*_\ell < f_\ell \delta$.  This declaration is made either in Step 3 because the points chosen for small rectangles are too close, or in Step 8 because no matching is found. 
We must prove correctness in each case, Lemma~\ref{lem:approxDistant} for Step 3, and Lemma~\ref{lem:approxSucceeds} for Step 8.
In the remainder of this section we 
let $p^*(R)$, $R \in {\cal R}$ denote an optimum set of distant representatives, i.e., $p^*(R)$ is a point in $R$ and every two such points have $L_\ell$-distance at least $\delta_\ell^*$.

\begin{lemma}
\label{lem:approxDistant}
If two points $p(r),p(r')$ of two small rectangles $r,r'$ have distance
less than $\delta$, then $\delta^*_\ell < f_\ell \delta$.
\end{lemma}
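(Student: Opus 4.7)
The plan is to use the triangle inequality together with the observation (stated just after the cavity definition) that every small rectangle lies inside some cavity whose size is controlled by $\delta$. Let $p^*(r)\in r$ and $p^*(r')\in r'$ denote optimum representatives, so $d_\ell(p^*(r),p^*(r'))\ge \delta^*_\ell$. Then
\[
\delta^*_\ell \;\le\; d_\ell(p^*(r),p(r)) \;+\; d_\ell(p(r),p(r')) \;+\; d_\ell(p^*(r'),p(r')),
\]
and the middle summand is strictly less than $\delta$ by hypothesis, so it suffices to bound $d_\ell(p^*(s),p(s))$ for each $s\in\{r,r'\}$ in terms of $\delta$.

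Since $p(s)$ is the centre of $s$ and $p^*(s)\in s$, we have $d_\ell(p^*(s),p(s))\le \tfrac12 \mathrm{diam}_\ell(s) \le \tfrac12 \mathrm{diam}_\ell(C)$, where $C$ is the cavity containing $s$. The possible cavity shapes are enumerated in the preceding definition: a square of side $2\gamma_\ell$ or a long rectangle of dimensions $\gamma_\ell\times 4\gamma_\ell$ for $+$-shapes (norms $\ell=1,2$), and a long rectangle $\gamma_\infty\times 3\gamma_\infty$ for $L$-shapes. In each case the long cavity has the strictly larger $L_\ell$-diameter (when both shapes exist), so it controls the worst case.

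Substituting $\gamma_1=\delta/2$, $\gamma_2=\delta/\sqrt{2}$, and $\gamma_\infty=\delta$, the worst-case half-diameters work out to $\tfrac54\delta$ for $L_1$ (from the $\tfrac\delta2\times 2\delta$ cavity), $\tfrac{\sqrt{34}}{4}\delta$ for $L_2$ (from the $\tfrac{\delta}{\sqrt2}\times 2\sqrt2\,\delta$ cavity, via $\sqrt{(\delta/(2\sqrt2))^2+(\sqrt2\,\delta)^2}$), and $\tfrac32\delta$ for $L_\infty$ (from the $\delta\times 3\delta$ cavity). Plugging back into the triangle inequality gives $\delta^*_1<\tfrac72\delta$, $\delta^*_2<\bigl(1+\tfrac{\sqrt{34}}{2}\bigr)\delta$, and $\delta^*_\infty<4\delta$, all strictly smaller than $f_\ell\delta=5\delta,\sqrt{34}\,\delta,6\delta$. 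The proof is essentially a finite case check and so presents no real obstacle; the slack in these bounds reflects the fact that the binding case for $f_\ell$ arises at Step 8, not here.
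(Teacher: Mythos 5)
Your proof is correct and follows essentially the same plan as the paper: apply the triangle inequality and bound $d_\ell(p^*(s),p(s))$ by half the $L_\ell$-diameter of the cavity containing the small rectangle $s$. The only minor difference is that you compute the $L_\ell$-diameter of each cavity directly, whereas the paper bounds it by the $L_1$-diameter (using $d_\ell\le d_1$) and then uniformly applies the bound $2.5\gamma_\ell$; your direct computation yields slightly tighter constants for $L_2$ and $L_\infty$ but reaches the same conclusion.
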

\begin{proof}
We first show that 
for any small rectangle $r$, points  $p^*(r)$ and $p(r)$ are close together, specifically, $d_\ell(p^*(r),p(r))\leq 2.5\gamma_\ell$.
Because $L_1$-distance dominates $L_2$ and $L_\infty$-distances, 
it suffices to prove that 
$d_1(p^*(r),p(r))\leq 2.5\gamma_\ell$.
Any small rectangle is contained in a cavity.   
The $L_1$ diameter of a cavity (i.e., the maximum distance between any two points in the cavity) is at most $5 \gamma_\ell$---it is $5 \gamma_\ell$ for a long cavity with $+$-shapes; $4 \gamma_\ell$ for a square cavity with $+$-shapes; and $4 \gamma_\ell$ for a long cavity with L-shapes.  
This implies that any point of $r$ is within distance $2.5 \gamma_\ell$ from $p(r)$, the centre of rectangle $r$.  

Now consider two small rectangles $r$ and $r'$ with $d_\ell(p(r), p(r')) < \delta$.  We will bound the distance between $p^*(r)$ and $p^*(r')$ by applying the triangle inequality: 
\[d_\ell(p^*(r),p^*(r'))
\leq d_\ell(p^*(r),p(r)) + d_\ell(p(r),p(r')) + d_\ell(p(r'),p^*(r')) 
< 2.5\gamma_\ell + \delta +2.5\gamma_\ell =  \delta +  5 \gamma_\ell .
\]
Plugging in  the values 
$\gamma_1 = \delta /2$, 
$\gamma_2 = \delta / \sqrt{2}$,
and $\gamma_\infty  = \delta$,
we obtain bounds of $3.5\delta$, $(1 + 5/ \sqrt{2})\delta \approx 4.5\delta$, and  $6\delta$, respectively.  
Since $f_1 = 5$, $f_2 \approx 5.8$, and $f_\infty = 6$, 
these bounds are at most $f_\ell \delta$ in  all three cases.
Thus $\delta^* < f_\ell \delta$, as required.
\end{proof}

\begin{lemma}
\label{lem:approxSucceeds}
If there is no matching $M$ in $H^-$ that covers all big rectangles,
then $\delta^*_\ell < f_\ell \delta$.
\end{lemma}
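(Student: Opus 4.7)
The plan is to prove the contrapositive: assume $\delta^*_\ell \ge f_\ell \delta$ and produce a matching in $H^-$ that covers every big rectangle. By Hall's theorem, this reduces to verifying $|N_{H^-}(S)| \ge |S|$ for every subset $S$ of big rectangles.

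First I would handle the degree cap. If some $R \in S$ has degree $n$ in $H^-$, then $|N_{H^-}(S)| \ge n \ge |S|$, because there are only $n$ big rectangles in total. Otherwise every $R \in S$ has all of its $H$-edges preserved in $H^-$, so $N_{H^-}(S) = N_H(S)$, and it suffices to exhibit an injection $\phi : S \to N_H(S)$.

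To construct $\phi$, I will use the optimum points $p^*(R)$. For each big $R$, let $C(R)$ be a cavity containing $p^*(R)$ (or whose boundary contains $p^*(R)$ when $p^*(R)$ itself lies on a blocker-shape). Because $R$ is big it is not confined to the interior of $C(R)$, so any segment inside $R$ from $p^*(R)$ to a point of $R$ on some blocker-shape must cross $\partial C(R)$, and $\partial C(R)$ lies entirely on blocker-shapes; hence some blocker-shape intersecting $R$ borders $C(R)$. I would set $\phi(R)$ to be the blocker-shape $B$ intersecting $R$ that minimises $d_\ell(p^*(R), B \cap R)$. The two properties to verify are that $\phi(R) \in N_H(R)$ (i.e., is not owned by a small rectangle) and that $\phi$ is injective on $S$; both arguments go by contradiction with $\delta^*_\ell \ge f_\ell \delta$. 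For ownership, if a small $r$ owned $\phi(R)$, then chaining $p^*(r) \to p(r) \to \phi(R) \to \phi(R) \cap R \to p^*(R)$ through the triangle inequality --- using Lemma~\ref{lem:approxDistant} for $d_\ell(p^*(r), p(r))$, the ownership definition for $d_\ell(p(r), \phi(R))$, the $L_\ell$-diameter of $\phi(R)$ for the hop across it, and the choice of $\phi(R)$ for $d_\ell(p^*(R), \phi(R) \cap R)$ --- would produce $d_\ell(p^*(r), p^*(R)) < f_\ell \delta \le \delta^*_\ell$, contradicting optimality. For injectivity, if $\phi(R_1) = \phi(R_2) = B$, I pick $b_i \in B \cap R_i$ nearest $p^*(R_i)$ and apply $d_\ell(p^*(R_1), p^*(R_2)) \le d_\ell(p^*(R_1), b_1) + d_\ell(b_1, b_2) + d_\ell(b_2, p^*(R_2))$, where the middle term is at most $\mathrm{diam}_\ell(B)$ and the outer terms are controlled by the cavities $C(R_i)$; the total is again strictly less than $f_\ell \delta \le \delta^*_\ell$.

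The main obstacle will be the distance bookkeeping: verifying that $f_1 = 5$, $f_2 = \sqrt{34}$, and $f_\infty = 6$ are large enough for these chained triangle inequalities to be strict. These constants are essentially twice the $L_\ell$-diameter of the widest cavity (measured in units of $\delta$), so the verification needs a careful bound on $d_\ell(p^*(R), \phi(R) \cap R)$ in terms of the geometry of $C(R)$, together with a short case analysis over the cavity shapes ($1\times 4$, $4\times 1$, and $2\times 2$ for the $+$-shape grid; $1\times 3$ and $3\times 1$ for the $L$-shape grid).
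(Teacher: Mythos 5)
Your high-level plan is sound: using Hall's condition in $H^-$ is a legitimate alternative to the paper's exchange argument (the paper first builds a matching in $H$ via $R\mapsto B(R)$, then swaps $R$'s matching edge to an unmatched $H^-$-neighbour if needed, using the degree-$n$ cap; your observation that a degree-$n$ vertex automatically satisfies Hall for any set containing it plays the same role). Your map $\phi(R)$ is also exactly the paper's $b(R)$/$B(R)$: the blocker-shape point inside $R$ closest to $p^*(R)$.

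The genuine gap is in the distance bookkeeping, and you correctly flagged it as ``the main obstacle'' --- but the chained triangle inequality as you sketch it cannot deliver the claimed constants. Take $L_1$ and the injectivity case: your chain bounds $d_1(p^*(R_1),p^*(R_2))$ by (cavity diameter) $+$ (blocker diameter) $+$ (cavity diameter) $= 5\gamma_1 + 2\gamma_1 + 5\gamma_1 = 12\gamma_1 = 6\delta$, which exceeds the target $f_1\delta = 10\gamma_1 = 5\delta$. The ownership chain fares similarly: $2.5\gamma_1 + 2\gamma_1 + 2\gamma_1 + 5\gamma_1 = 11.5\gamma_1 > 10\gamma_1$. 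Summing diameters in a chain double-counts the region around the blocker shape. The paper's Lemma~\ref{lem:matchH} avoids this by a strictly tighter ``single ball'' argument: Claims~\ref{lem:ball-big} and \ref{lem:ball-small} show that \emph{every} relevant $p^*$-point (a big rectangle's optimum point with $B(R)=B$, or a small rectangle's optimum point when $r$ owns $B$) lies inside one fixed open $L_\ell$-ball $C_\ell(B)$ of diameter $f_\ell\delta$ centred at the \emph{anchor} of $B$. Two points in one such ball are automatically within $f_\ell\delta$, with no additive loss. The containment itself is proved geometrically: via Claim~\ref{claim:rounded-point}, $p^*(R)$ and $b(R)$ share a long cavity, and every long cavity touching $B$ lies inside $C_\ell(B)$; the analogous fact for small owners uses the Minkowski-sum region $C'$. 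To repair your proposal you would need to replace the chained triangle inequalities with this anchor-centred ball argument (or something of equivalent tightness) --- the Hall-theorem scaffolding can stay.
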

\begin{proof}
We prove the contrapositive, using the following plan.
Take an optimal set of distant representatives, $p^*(R)$, $R \in {\cal R}$ with $L_\ell$-distance $\delta_\ell^*\geq f_\ell\delta$. 
For any big rectangle $R$, we ``round'' $p^*(R)$ to a point $b(R)$ that is in $R$ and on a blocker-shape $B(R)$.  
More precisely, we define $b(R)$ to be a point 
that is in $R$, on a blocker-shape, and closest (in  $L_\ell$ distance) to $p^*(R)$.  
In case  of ties, choose  $b(R)$ so  that the smallest rectangle containing $p^*(R)$ and $b(R)$ is minimal (this is only relevant in $L_\infty$).  Break further ties arbitrarily.
Observe that $b(R)$ exists, since a big \changed{rectangle} contains blocker-shape points. Define $B(R)$ to be the blocker-shape containing $b(R)$. \therese{Rearranged a lot to save space.}

By Lemma~\ref{lem:matchH} (stated below) the pairs $R, B(R)$ form a matching in $H$ that covers all big rectangles.  
We convert this to a matching in $H^-$ by repeatedly applying the following exchange step.
If big rectangle
$R$ is matched to a blocker shape $B(R)$ that is not in $H^-$, 
then $R$ has degree exactly
$n$ in $H^-$.  Not all its $n$ neighbours can be used in the current matching since there are at most $n-1$
big rectangles other than $R$.  So change the matching-edge at $R$ 
to go to one of the unmatched neighbours in $H^-$ instead.
\end{proof}

\begin{lemma}  
\label{lem:matchH}
Let $R$ be a big rectangle and let $B = B(R)$.
If $\delta^*\geq f_\ell\delta$ then (1) no other big rectangle $R'$ has $B(R') = B$, and (2) no small rectangle owns $B$.
\end{lemma}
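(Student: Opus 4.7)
The plan is to prove both parts by contradiction.  Assume $\delta^*_\ell \geq f_\ell \delta$ and fix an optimal solution $p^*(R)$, $R \in \calR$.  For~(1), suppose some big $R' \neq R$ also has $B(R') = B$; for~(2), suppose some small $r$ owns $B$.  In each case the goal is to bound $d_\ell(p^*(R), p^*(R'))$ or $d_\ell(p^*(R), p^*(r))$ strictly below $f_\ell \delta$ via the triangle inequality, contradicting the distant-representatives property of $p^*$.

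For~(1) the natural chain is $p^*(R) \to b(R) \to b(R') \to p^*(R')$: since $b(R), b(R') \in B$, the middle step is bounded by $\mathrm{diam}_\ell(B)$, an explicit constant multiple of $\gamma_\ell$ read off from the shape of $B$ (a $+$-shape or an $L$-shape).  For~(2) the chain is $p^*(R) \to b(R) \to q \to p(r) \to p^*(r)$, where $q \in B$ is any point witnessing $r$'s ownership of $B$ (so $d_\ell(q, p(r)) < \delta$, using $d_2 \le d_1$ for the $L_2$ case as in Step~4 of the algorithm); the last step is bounded by $d_\ell(p(r), p^*(r)) \le 2.5\gamma_\ell$, which is exactly the estimate established inside the proof of Lemma~\ref{lem:approxDistant}.

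The crux of the argument, and the main obstacle, is to bound $d_\ell(p^*(R), b(R))$ for a big rectangle~$R$.  Intuitively, the blocker-shape grid is dense enough that any point in the plane is within $O(\gamma_\ell)$ of a blocker-shape point, since every cavity has bounded $L_\ell$-diameter (at most $5\gamma_\ell$ in $L_1$, with correspondingly smaller bounds in $L_2$ and $L_\infty$).  Because $R$ is big, it does contain a blocker-shape point, and by convexity of $R$ the straight segment from $p^*(R)$ to any such point stays inside~$R$.  A short case analysis on the cavity $C$ containing $p^*(R)$ (long versus square cavity for $+$-shapes, or the single long type for $L$-shapes), together with an analysis of the direction in which $R$ extends to reach a blocker shape, will yield an explicit bound $d_\ell(p^*(R), b(R)) \le A_\ell \gamma_\ell$ for small constants~$A_\ell$.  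The delicate sub-case is a thin $R$ that just barely protrudes from~$C$ to touch a single blocker-shape point on the far side; here the tie-breaking rule built into the definition of $b(R)$ (minimality of the enclosing rectangle) prevents pathological choices, and it is invoked only for $L_\infty$.

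Plugging the bounds for $d_\ell(p^*(R), b(R))$, $\mathrm{diam}_\ell(B)$, the ownership witness distance, and the $2.5\gamma_\ell$ estimate into the two chains, and substituting $\gamma_1 = \delta/2$, $\gamma_2 = \delta/\sqrt{2}$, and $\gamma_\infty = \delta$, reduces the proof to a short arithmetic check in each of the three norms.  This will yield the strict inequalities $d_\ell(p^*(R), p^*(R')) < f_\ell \delta$ and $d_\ell(p^*(R), p^*(r)) < f_\ell \delta$ for $f_1 = 5$, $f_2 = \sqrt{34}$, and $f_\infty = 6$, giving the required contradiction.
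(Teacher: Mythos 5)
Your approach is genuinely different from the paper's, and it has a real gap. The paper never bounds $d_\ell(p^*(R), b(R))$ directly. Instead it shows (Claims~\ref{claim:rounded-point}--\ref{lem:ball-small}) that the long cavity containing both $p^*(R)$ and $b(R)$ lies entirely inside the \emph{open} $L_\ell$-ball $C_\ell(B)$ of diameter $f_\ell\delta$ centred at the \emph{anchor} of $B$, and likewise $p^*(r)$ lies in $C_\ell(B)$; the distance bound then falls out as the diameter of a single ball, with strictness supplied by openness plus the observation that boundary points of the ball lie on a different blocker shape. Your chain $p^*(R)\to b(R)\to b(R')\to p^*(R')$ (or $\to q\to p(r)\to p^*(r)$) does not route through the anchor, and that loses a constant that you cannot recover.

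Concretely, the constant $A_\ell$ you hope is ``small'' is not small enough. Take $L_\infty$, an $L$-shape $B$ anchored at $(3,0)$ (grid units), $b(R)=(3,\tfrac12)$ on its vertical arm, and $p^*(R)=(\varepsilon,\tfrac12)$ at the far end of the $3\times 1$ long cavity $[0,3]\times[0,1]$, with $R$ the thin horizontal segment joining them; then $T$ contains no other blocker point, so $b(R)$ is as required, yet $d_\infty(p^*(R),b(R))\to 3\gamma_\infty$. With $\mathrm{diam}_\infty(B)=\gamma_\infty$ your part-(1) chain gives at best $3+1+3=7\gamma_\infty>6\gamma_\infty=f_\infty\delta$, so no contradiction is reached. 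The paper's argument survives because $d_\infty(p^*(R),\text{anchor})\to 3\gamma_\infty$ while the ball has radius $3\gamma_\infty$, so the two $p^*$-points sit in the same diameter-$6\gamma_\infty$ ball. The same pathology occurs for $+$-shapes: with $B$ anchored at $(2,-2)$, $b(R)=(2-\varepsilon,-2)$ and $p^*(R)=(2-\varepsilon,2-\varepsilon')$, one gets $d_1(p^*(R),b(R))\to 4\gamma_1$ and your part-(2) chain totals $4+2+2+2.5=10.5\gamma_1>10\gamma_1=f_1\delta$; similarly in $L_2$ the chain overshoots $\sqrt{68}\,\gamma_2$. In all three norms $b(R)$ can sit essentially at the anchor while $p^*(R)$ is across the long cavity, and separately $q$ or $b(R')$ can be on the opposite arm of $B$, so the detour through $b(R)$ is genuinely longer than the path through the anchor. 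To make a chain argument work you would have to route it through the anchor and bound $d_\ell(p^*(R),\text{anchor})$ directly --- which is exactly what the paper's ball containment accomplishes, with the figure verifying that every long cavity touching $B$ fits in $C_\ell(B)$.

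A secondary issue: you state ``the tie-breaking rule \dots\ is invoked only for $L_\infty$,'' but the tie-breaking (minimal enclosing rectangle) is what guarantees, in the proof of Claim~\ref{claim:rounded-point}, that $T$ contains no blocker point other than $b(R)$; this is needed for all three norms (ties along the $L_1$ anti-diagonal, for instance), even though the paper's parenthetical also says ``only relevant in $L_\infty$.'' You also do not address strictness: your chain bounds are established as non-strict inequalities, whereas the contradiction requires strict ones, which the paper obtains via the open ball and the boundary-on-another-blocker argument.
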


The general idea to prove this lemma
is to show that either type of collision ($B(R) = B(R')$ or $B(R)$ owned by $r$) gives points $p^*$ that are ``close together'', where close together means in a ball of appropriate radius centred at the anchor of $B(R)$.

Let $C_\ell(B)$ be the open $L_\ell$-ball centred at the anchor of $B$ and with diameter $f_\ell \delta$.  
See Figure~\ref{fig:blocker-ball} and note that the diameter of the ball in the appropriate metric is: 
\[f_1 \delta = 5 \delta = 10 (\delta /2)  = 10 \gamma_1
\quad
\quad
f_2 \delta = \sqrt{34} \delta = \sqrt{68} (\delta / \sqrt{2}) = \sqrt{68} \gamma_2
\quad
\quad
f_\infty \delta = 6 \delta = 6\gamma_\infty
\]

\begin{figure}[htb]
\hspace*{\fill}
{\includegraphics[width=.3\textwidth]{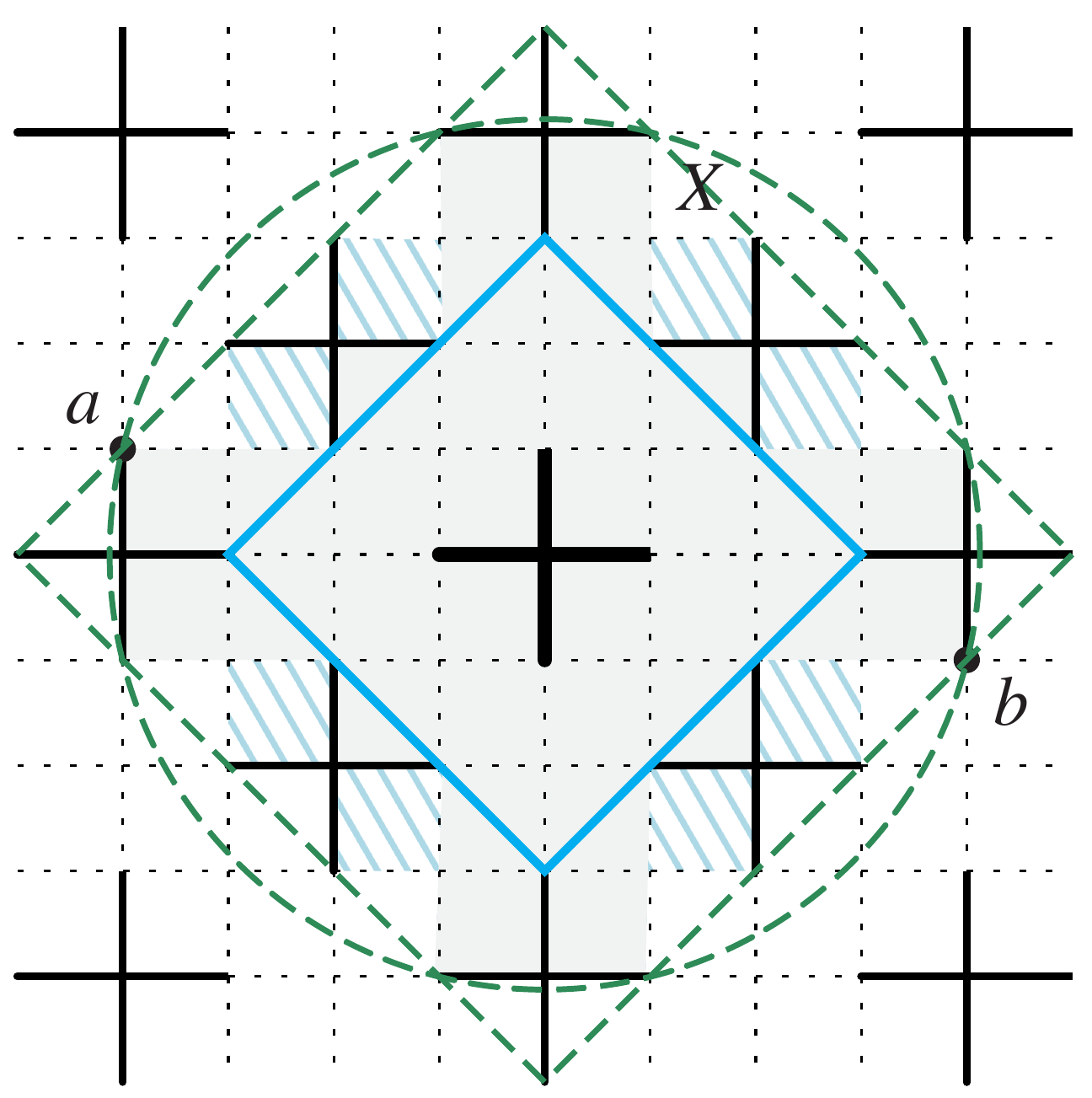}}
\hspace*{\fill}
{\includegraphics[width=.3\textwidth]{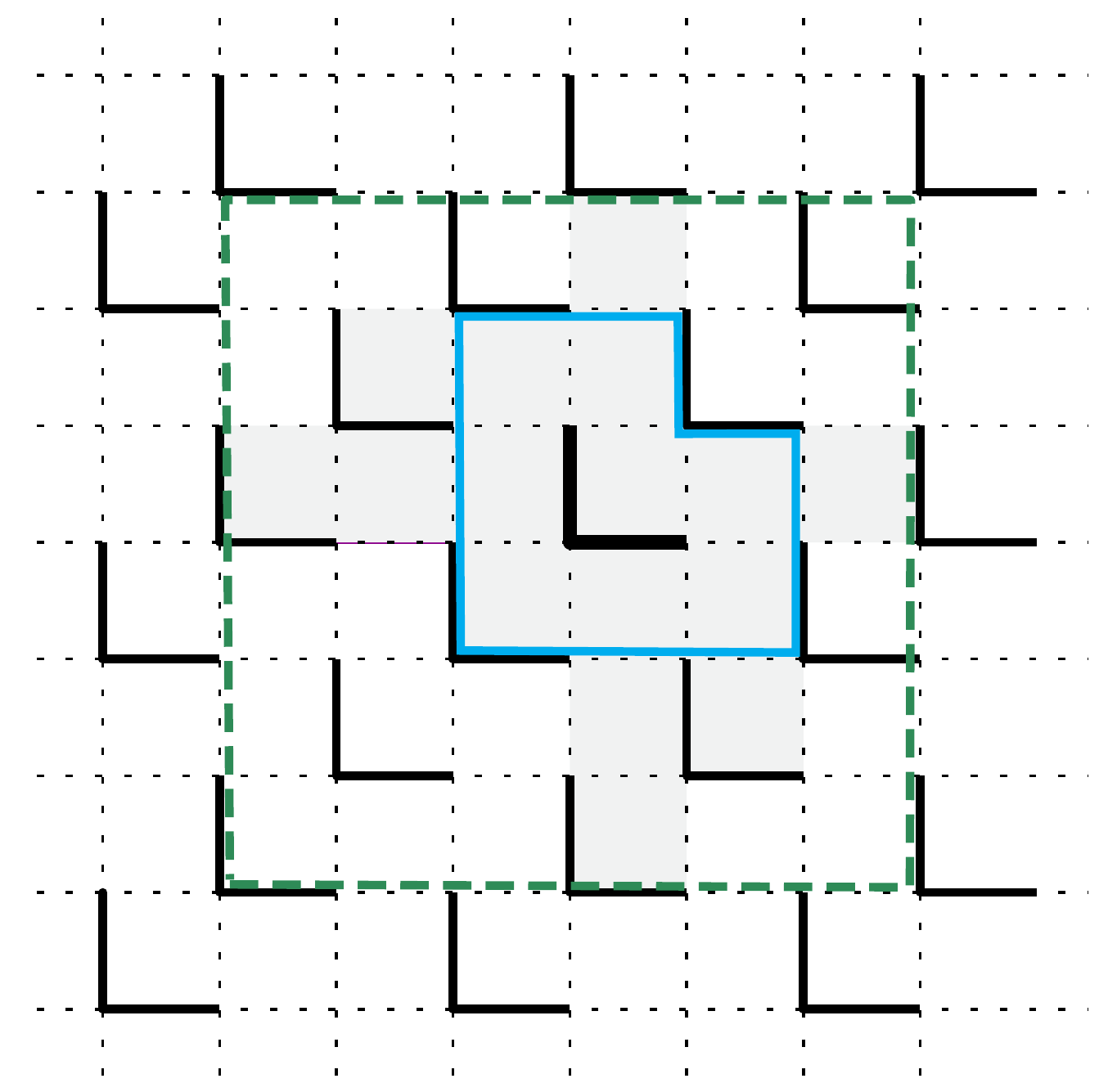}}
\hspace*{\fill}
\caption{A blocker shape $B$ (heavy   black) and  $C_\ell(B)$, the $L_\ell$-ball of diameter $f_\ell \delta$ centred at $B$'s anchor (dashed green) which is a diamond for $L_1$ (left), a circle for $L_2$ (left), and a square for $L_\infty$ (right).  The long cavities that touch $B$ are shaded gray. If a small rectangle $r$ owns $B$, then $p(r)$ lies 
in $C'$ (in cyan), and $r$ 
is contained in the union of the gray and blue-hatched regions. 
}
\label{fig:blocker-ball}
\end{figure}

We \changed{need a few claims} localizing $p^*(R)$ relative to $b(R)$:

\begin{claim}
\label{claim:rounded-point}
For any big rectangle $R$, the points $b(R)$ and $p^*(R)$ lie 
in one long cavity.
\end{claim}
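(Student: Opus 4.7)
The plan is to study the segment $s = [p^*(R), b(R)]$ together with its axis-aligned bounding box $\hat s$, and then to exhibit an explicit long cavity containing both endpoints. First, $s \subseteq R$ by convexity of $R$, and no relative-interior point of $s$ can lie on a blocker shape: such a point would be in $R$ and strictly closer to $p^*(R)$ than $b(R)$, since for every $L_\ell$-norm considered the distance along a line segment is strictly monotone, contradicting the minimality of $b(R)$. In particular this forces $p^*(R)$ to lie weakly on one side of the grid line carrying the grid-segment $\sigma$ of $B = B(R)$ on which $b(R)$ sits.

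Next I would designate $L'$ to be the long cavity adjacent to $\sigma$ on the side of $p^*(R)$: a $4\gamma \times \gamma$ rectangle for the $+$-shapes used in $L_1$ and $L_2$, or a $3\gamma \times \gamma$ rectangle for the $L$-shapes used in $L_\infty$. Clearly $b(R) \in L'$, so the task becomes to show $p^*(R) \in L'$. The key auxiliary observation is that $\hat s$ is also contained in $R$ because $R$ is axis-aligned, and that $p^*(R)$ and $b(R)$ realise the $L_\ell$-diameter of $\hat s$ for each $\ell \in \{1,2,\infty\}$; consequently every blocker-shape point inside $\hat s$ has $L_\ell$-distance to $p^*(R)$ at most $d_\ell(p^*(R), b(R))$, and the tie-breaking rule used to define $b(R)$ (minimum bounding box) rules out any other blocker point in $\hat s$.

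The remainder is case analysis. If $p^*(R)$ were outside $L'$ it would either sit past the far short edge (at distance $4\gamma$, resp.\ $3\gamma$, from $\sigma$) or past one of the two long edges (at perpendicular distance $\gamma$ from $\sigma$). In the former case $\hat s$ would contain the parallel blocker arm of the next anchor in that direction, which is strictly closer to $p^*(R)$ than $b(R)$. In the latter case $\hat s$ would contain an endpoint of $\sigma$, again a blocker point strictly closer to $p^*(R)$ than $b(R)$, unless $b(R)$ itself already sits at that endpoint. This last configuration is the main obstacle: three long cavities meet at such an endpoint, so $L'$ must instead be chosen to be a long cavity perpendicular to $\sigma$, and the tie-breaking rule (picking $b(R)$ with the smallest bounding box over $\{p^*(R), b\}$) is what guarantees $p^*(R)$ lies in that particular cavity. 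Once the correct $L'$ is fixed, both $p^*(R)$ and $b(R)$ lie in $L'$, completing the proof.
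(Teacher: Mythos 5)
Your first paragraph and your ``key auxiliary observation'' together do establish the same crucial fact that underlies the paper's proof: the bounding box $\hat s$ (the paper's $T$) of $p^*(R)$ and $b(R)$ contains no blocker-shape point other than $b(R)$. Your argument for it---that any blocker point in $\hat s$ is within the $L_\ell$-diameter of $p^*(R)$, realized by the opposite corners, and that the tie-breaking rule then forces it to be $b(R)$---is essentially the justification the paper leaves implicit. (Incidentally, the observation about the open segment $s$ is correct but unused; only the statement about $\hat s$ matters.)

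The gap is in the final step, and you half-recognize it yourself. You try to name a specific long cavity $L'$ ``adjacent to $\sigma$ on the side of $p^*(R)$'' and then run a case analysis on where $p^*(R)$ could be relative to $L'$. But that designation of $L'$ is ambiguous: for a $+$-shape, a single arm $\sigma$ is adjacent on one side to both a long cavity \emph{parallel} to $\sigma$ and one \emph{perpendicular} to it, and your case split (``past the far short edge'' / ``past one of the two long edges'') is not exhaustive and does not cleanly handle the situation where $b(R)$ is a tip of $\sigma$, where three long cavities meet. You concede at the end that ``$L'$ must instead be chosen to be a long cavity perpendicular to $\sigma$, and the tie-breaking rule~\ldots is what guarantees $p^*(R)$ lies in that particular cavity,'' but this is asserted rather than proved, and it is precisely the part that requires care. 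The paper avoids committing to any particular $L'$: having shown $T$ has no blocker in it except $b(R)$, it invokes the \emph{definition} of a cavity (a maximal blocker-free closed rectangle) to conclude $T$ lies in \emph{some} cavity, and then rules out the one bad case---$T$ inside a $2\times 2$ square cavity containing the central grid point in its interior---by a short contradiction (such a $T$ would have a corner $b(R)$ strictly between the four midpoint blocker points of the square cavity's boundary, so $b(R)$ could not be a blocker point, or else $T$ would pick up a second blocker point). That routes around the ambiguity entirely. To repair your argument you would either need to make the case analysis genuinely exhaustive and prove the perpendicular-cavity claim in the tip case, or switch to the cavity-maximality argument as the paper does.
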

\begin{proof}
Let $T$ be the rectangle with corners $p^*(R)$ and $b(R)$.  By definition of $b(R)$, there are no points of blocker-shapes in or on the boundary of $T$ except $b(R)$. 
Thus $T$ is contained in a cavity.  
Furthermore, if $T$ is contained in a square cavity, then we claim that $T$ does not contain the central grid point of the square cavity in its interior (otherwise 
$b(R)$ could not be 
the unique point of a blocker shape in $T$,
see Figure~\ref{fig:blocker-shapes}).  Thus $T$ is contained in a long cavity.
\end{proof}

\begin{claim}
\label{lem:ball-big}
Let $B$ be a blocker shape.  Let $R$ be a big rectangle with $B(R)=B$.  Then  $p^*(R)$ is contained in 
the ball $C_\ell(B)$. 
\end{claim}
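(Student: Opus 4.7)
The strategy is to combine Claim~\ref{claim:rounded-point} with a direct geometric case analysis. By Claim~\ref{claim:rounded-point}, $b(R)$ and $p^*(R)$ both lie in a single long cavity $K$. Since the interior of $K$ contains no blocker-shape points, the arm of $B$ on which $b(R)$ lies must coincide with a grid segment on $\partial K$, and the anchor $a$ of $B$ is an endpoint of this arm. Inspection of the $+$-shape and $L$-shape patterns shows that $a$ sits at one of a small number of distinguished positions relative to $K$: either a corner of $K$ (``corner configuration''), or a midpoint (on $\partial K$ in the $L$-shape case, just outside $K$ in the $+$-shape case, reaching $K$ via an arm that touches $\partial K$ at a single point).

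I will then show that $K$ is contained in the \emph{closed} $L_\ell$-ball of radius $f_\ell\delta/2$ centred at $a$, so that $p^*(R)\in K$ yields $d_\ell(p^*(R),a)\leq f_\ell\delta/2$. In the $+$-shape case, a long cavity is a $\gamma_\ell\times 4\gamma_\ell$ rectangle, and a direct computation gives maximum distances $5\gamma_1=f_1\delta/2$ and $\sqrt{17}\gamma_2=f_2\delta/2$ in the corner configuration, and strictly smaller maxima in the midpoint configuration. In the $L$-shape case, a long cavity is a $\gamma_\infty\times 3\gamma_\infty$ rectangle, and the maximum is $3\gamma_\infty=f_\infty\delta/2$ in the corner configuration and strictly smaller in the midpoint configuration.

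To upgrade this to the strict inequality required for $p^*(R)\in C_\ell(B)$, I will use the hypothesis $B(R)=B$ together with the tie-breaking rule defining $b(R)$. In each tight corner configuration, the subset of $K$ attaining the extremal distance from $a$ lies entirely on a single \emph{other} blocker shape $B'$: a single grid point (the opposite corner of $K$) in the $+$-shape case, and the entire far edge of $K$ in the $L$-shape case. If $p^*(R)$ lay in this extremal subset, then $p^*(R)\in B'$ would force $b(R)=p^*(R)\in B'$ and hence $B(R)=B'$, contradicting the hypothesis $B(R)=B$. Therefore $d_\ell(p^*(R),a)<f_\ell\delta/2$, i.e., $p^*(R)\in C_\ell(B)$.

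The main obstacle is carrying out the case analysis in the middle step cleanly and verifying the strictness argument in the $L_\infty$ setting, where the extremal set is an entire edge of $K$ rather than a single point; one must confirm that this whole edge is covered by a single arm of another blocker shape, so that $p^*(R)$ cannot touch it without breaking $B(R)=B$.
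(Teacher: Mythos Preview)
Your proposal is correct and follows essentially the same route as the paper: use Claim~\ref{claim:rounded-point} to place $p^*(R)$ in a long cavity touching $B$, verify that every such cavity sits in the closed ball of radius $f_\ell\delta/2$ about the anchor, and then obtain strictness from $B(R)=B$ because the extremal points of the cavity lie on a different blocker shape. The paper compresses the middle step into ``from Figure~\ref{fig:blocker-ball} we see,'' whereas you do the explicit distance computations; the strictness argument is identical.

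One minor omission in your case enumeration: for $L$-shapes, a long cavity such as $[0,1]\times[0,3]$ (in grid units) is also touched by an $L$-shape whose anchor lies \emph{outside} $K$ and whose arm tip meets $\partial K$ at a single point (e.g.\ anchor $(-1,2)$ touching at $(0,2)$), not only by anchors at corners or on an edge of $K$. This extra configuration gives maximum $L_\infty$-distance $2\gamma_\infty < 3\gamma_\infty$, so it is non-tight and your strictness analysis of the corner configuration remains the only place where the boundary is attained.
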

\begin{proof}
By  the previous lemma, $p^*(R)$ lies in a long cavity  that contains a point of $B$.  
From  Figure~\ref{fig:blocker-ball} we see that any  long cavity that contains a point of $B$ lies inside the closed ball $C_\ell(B)$.  
Furthermore,  note that if $p^*(R)$ lies on the boundary of the ball, then it lies on a different blocker shape, contrary   to $B(R) =B$.
Thus $p^*(R)$ is contained in 
 $R_\ell(B)$.
\end{proof}

\begin{claim}
\label{lem:ball-small}
Let $B$ be a blocker shape.  Let $r$ be a small rectangle that owns $B$.  Then  $p^*(r)$ is contained in 
the ball $C_\ell(B)$. 
\end{claim}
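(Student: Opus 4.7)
The plan is to imitate the proof of Claim~\ref{lem:ball-big}: localize $p^*(r)$ inside a small region determined by ownership, and then read off from Figure~\ref{fig:blocker-ball} that this region lies inside $C_\ell(B)$.

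First, since $r$ is small, it is contained in a single cavity $\sigma$, and both its centre $p(r)$ and its optimal representative $p^*(r)$ are points of $r$, hence lie in $\sigma$. Therefore it is enough to show $\sigma \subseteq C_\ell(B)$, with $p^*(r)$ not on the boundary of that ball.

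Second, translate the ownership hypothesis into a constraint on $\sigma$: because $r$ owns $B$, $p(r)$ lies in the open neighborhood $C' = \{x : d_\ell(x,B) < \delta\}$ for $\ell\in\{1,\infty\}$, or $\{x : d_1(x,B) < \sqrt{2}\,\delta\}$ for $\ell=2$ --- precisely the cyan region of Figure~\ref{fig:blocker-ball}. Since $p(r) \in \sigma$, the cavity $\sigma$ must intersect $C'$. I would then enumerate the finitely many cavities that meet $C'$ and check that each is contained in the closed ball $\overline{C_\ell(B)}$ of diameter $f_\ell\,\delta$. For $+$-shapes the relevant cavities are both the long cavities that touch $B$ (the gray region of the figure) and the "one-away" long cavities whose corner comes within $\delta$ of $B$ (the blue-hatched region); for $L$-shapes the list is analogous but shorter. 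The verification reduces to a bounded case check in each norm, comparing the extent of $C'\cup\sigma$ to the half-diameter of $C_\ell(B)$ in units of $\gamma_\ell$.

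Finally, I would upgrade "contained in the closed ball" to "contained in the open ball" by the same device used in Claim~\ref{lem:ball-big}: if $p^*(r)$ were on the boundary of $C_\ell(B)$ it would lie on a blocker shape different from $B$, contradicting the fact that $p^*(r) \in \sigma$, whose interior contains no blocker point. The main obstacle is the case check, especially for $L_2$: because ownership is defined with the $L_1$ metric while $C_2(B)$ is a round disc, we must verify that the $L_1$-fattening of $B$ of radius $\sqrt{2}\,\delta$, once extended to the enclosing cavities, still fits inside the disc of radius $\tfrac{1}{2}\sqrt{34}\,\delta$. A direct triangle-inequality bound using $d_\ell(p^*(r),p(r)) \le d_1(p^*(r),p(r)) \le 2.5\gamma_\ell$ and $d_\ell(p(r), \mathrm{anchor}(B)) < \delta + \gamma_\ell$ is too loose --- it overshoots the required radius by a constant fraction of $\delta$ --- so the proof genuinely needs the figure-level geometric localization rather than a purely metric estimate.
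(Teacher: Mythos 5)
Your overall strategy is sound and matches the paper's in spirit, but there is a genuine gap at the step where you reduce the claim to ``the cavity $\sigma$ containing $r$ is a subset of $C_\ell(B)$'' and then propose to verify this by enumerating cavities meeting $C'$ and checking that each lies inside the ball. That check would \emph{fail} for $\ell = 1,2$ (the $+$-shape case): as the paper notes, the union of cavities that intersect $C'$ is strictly larger than the gray and blue-hatched regions --- it also contains a grid square $X$ (and its symmetric images) that sticks outside $C_\ell(B)$. So the cavity containing $r$ need not be contained in the ball, and your containment $\sigma \subseteq C_\ell(B)$ is simply not true in general. The enumeration you sketch (``long cavities touching $B$'' plus ``one-away long cavities'') silently omits exactly the problematic pieces.

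The paper's proof closes this gap by arguing about $r$ directly rather than about $\sigma$: it observes that any small rectangle containing a point of $X$ must have its centre outside $C'$, which contradicts the ownership hypothesis $p(r) \in C'$. Hence $r$ (not $\sigma$) misses $X$ and its counterparts, and so $r \subseteq C_\ell(B)$. You need this extra step --- a constraint linking the \emph{centre} of $r$ to the \emph{extent} of $r$ --- to exclude the offending parts of the cavity; knowing merely that $p(r)\in C'$ and $r\subseteq\sigma$ is not enough, because the whole of $\sigma$ is not inside the ball. For $\ell=\infty$ your enumeration is fine (the long cavities intersecting $C'$ do all fit in $C_\infty(B)$), and your open-ball upgrade and your caution that a purely metric triangle-inequality bound is too loose are both correct and consistent with the paper; it is only the $+$-shape case where the argument as written breaks.
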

\begin{proof}
Recall that $p(r)$ is the centre of $r$, and that, by the definition of ownership, for $\ell = 1, \infty$ we have $d_\ell(p(r),B) < \delta$, and for $\ell = 2$ we have $d_1(p(r),B) < \sqrt{2} \delta$.
Such points $p(r)$ lie in the open region $C'$ drawn in cyan in Figure~\ref{fig:blocker-ball}.
Here $C'$ is  
the Minkowski sum of an open ball with $B$ where we use an $L_1$-ball 
of radius $\delta$ for $\ell = 1$, an $L_1$-ball of radius $\sqrt{2}\delta$ for $\ell=2$ and an $L_\infty$-ball of radius $\delta$ for $\ell=\infty$.

We next show that $r$ is contained in the ball $C_\ell(B)$.
For $\ell=\infty$, $r$ must lie in a long cavity intersecting $C'$, i.e. in the open shaded gray region, thus in $C_\infty(B)$.

For $\ell=1,2$, $r$ is contained in a cavity that intersects $C'$.  
The union of the cavities that intersect $C'$ consists of the grey and blue-hatched region plus the grid square $X$ and its symmetric counterparts.  
But observe that a small rectangle that contains points of $X$ has a centre outside $C'$. 
Therefore $r$ is contained in $C_\ell(B)$.
\end{proof}

\begin{proof}[Proof of Lemma~\ref{lem:matchH}]
Let $R$ be a big rectangle and let $B = B(R)$.  By Claim~\ref{lem:ball-big}, $p^*(R)$ lies in $C_\ell(B)$.
If there is another big rectangle $R'$ with $B(R') = B$, then $p^*(R')$ also lies in $C_\ell(B)$.
If there is a small rectangle $r$ that owns $B$, then by Claim~\ref{lem:ball-small},  $p^*(r)$ lies in $C_\ell(B)$.

In either case, the distance between $p^*(R)$ and the other $p^*$ point is less than $f_\ell \delta$, since that is the diameter of $C_\ell(B)$.  Thus $\delta^* < f_\ell \delta$. 
\end{proof}

\section{Approximating the   optimization problem}
\label{sec:optimization-alg}

In this section we 
use 
{\sc Placement} to 
design approximation algorithms for the optimization version of the distant representatives problem for rectangles:

\begin{theorem}
\label{theorem:opt-alg}
There is an $f_\ell$-approximation algorithm for the distant representatives problem on rectangles in the $L_\ell$-norm, $\ell=1,2,\infty$ with run time $O(n^2 (\log n)^2)$ for $L_\infty$ and run time $O(n^2 {\rm \ polylog} (nD))$ for $L_1$ and $L_2$. 
Here (as before) $f_1 = 5, f_2 = \sqrt{34} \approx 5.83, f_\infty = 6$.
\end{theorem}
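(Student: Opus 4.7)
My approach is to use {\sc Placement}($\delta$) from Theorem~\ref{thm:placement} as an approximate decision oracle inside a binary search on $\delta$.  The key observation is that, by the guarantee of Theorem~\ref{thm:placement}, for every $\delta \le \delta^*_\ell / f_\ell$ the oracle must return an assignment (with pairwise distance $\ge \delta$), while for every $\delta > \delta^*_\ell$ it must fail.  Hence, if we locate a value $\delta_L$ at which {\sc Placement} succeeds and a slightly larger $\delta_U$ at which it fails, then $\delta^*_\ell < f_\ell \delta_U$, while the returned assignment has distance $\ge \delta_L$, yielding an $f_\ell$-approximation as $\delta_U/\delta_L \to 1$.

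For $L_\infty$, I would enumerate a discrete set $S$ of $O(n^2)$ candidate values for the transition, obtained from pairwise differences $|c_i - c_j|$ of the $x$- and $y$-coordinates of the rectangle corners (together with a few derived values needed to capture representatives on edges).  Sorting $S$ costs $O(n^2 \log n)$, and binary-searching for the largest successful candidate uses $O(\log n)$ calls to {\sc Placement}, each $O(n^2 \log n)$; the total is $O(n^2 (\log n)^2)$.  Soundness requires checking that the largest $\delta \in S$ at which {\sc Placement} succeeds is at least $\delta^*_\infty/f_\infty$, which follows because $\delta^*_\infty$ itself lies in $S$ (it is realized by a pair of representatives touching rectangle boundaries) and neighbouring candidates are arbitrarily close in the sense needed for the approximation bound.

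For $L_1$ and $L_2$, the natural candidate sets are larger (for $L_1$, sums of two coordinate differences; for $L_2$, sums of two squared coordinate differences) and too expensive to enumerate, but the bit-complexity analysis referenced elsewhere in the paper gives $\delta^*_1$ and $(\delta^*_2)^2$ rational representations of $O(\log(nD))$ bits, so any two distinct candidate values are separated by at least $1/\mathrm{poly}(nD)$.  Thus a continuous binary search on the interval $[0, 2D]$ (respectively on $\delta^2 \in [0, 2D^2]$) carried out to precision $1/\mathrm{poly}(nD)$ locates $\delta_L$ in $O(\log(nD))$ iterations of $O(n^2 \log n)$ each, for total $O(n^2 \, \mathrm{polylog}(nD))$.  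For $L_2$ every arithmetic operation is performed on squared distances so that the search stays in the word-RAM model.

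The main technical obstacle is the bit-complexity bookkeeping for $L_1$ and $L_2$: one must confirm (i) that all midpoints used by the binary search can be represented with a number of bits that stays polylogarithmic throughout, (ii) that successive candidate transition values differ by at least $1/\mathrm{poly}(nD)$ so the search actually converges in $O(\log(nD))$ steps, and (iii) for $L_2$, that the squared-distance reformulation still lets us invoke Theorem~\ref{thm:placement} correctly at each step.  Once these bookkeeping points are settled, the binary search itself is routine and produces the claimed approximation and run-time bounds.
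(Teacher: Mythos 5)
Your high-level plan (use {\sc Placement} as an approximate decision oracle and search for the success-to-failure transition) matches the paper's, but there are three concrete gaps, one of which is fatal to the $L_\infty$ argument and one of which is fatal to the $L_1/L_2$ argument.

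\paragraph*{Wrong candidate set for $L_\infty$.}
You justify that the set $S$ of $O(n^2)$ pairwise coordinate differences suffices by claiming that $\delta^*_\infty$ ``is realized by a pair of representatives touching rectangle boundaries.'' This is false in general: in a tight optimal solution a representative point can be pinned not by a rectangle boundary but by another representative at distance exactly $\delta^*_\infty$, which is in turn pinned by a third, and so on. The paper's Lemma~\ref{lemma:n-cubed} shows that the correct structure is a chain, giving $\delta^*_\infty = (t-b)/k$ for some integer $k \le n$ and coordinates $t,b$. That is $O(n^3)$ candidates, not $O(n^2)$. With your set $S$, the value $\delta^*_\infty$ generally does not lie in $S$, so ``the largest $\delta \in S$ at which {\sc Placement} succeeds is at least $\delta^*_\infty/f_\infty$'' has no justification. (The paper still achieves the $O(n^2(\log n)^2)$ run time despite the $O(n^3)$ candidates by sorting the $O(n^2)$ numerators and using a sorted-matrix selection algorithm.)

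\paragraph*{Non-monotonicity.}
You speak of ``binary-searching for the largest successful candidate.'' {\sc Placement} is not monotone in $\delta$: it may fail at some $\delta$ and succeed at a larger $\delta'$. A standard binary search therefore does not find the largest success; it can at best find \emph{some} transition index $i$ with success at $c_i$ and failure at $c_{i+1}$. That is what the paper does, and the argument then appeals to the fact that $\delta^*$ lies in the candidate set $\Delta$ to conclude $\delta^* \le d_i$. The non-monotonicity is not a cosmetic point; the correctness of the whole scheme rests on combining transition-point binary search with the containment $\delta^* \in \Delta$.

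\paragraph*{$L_1, L_2$: approximation vs.\ exact approximation factor.}
For $L_1,L_2$ you run a continuous binary search to precision $1/\mathrm{poly}(nD)$, ending with success at $\delta_L$ and failure at $\delta_U$ with $\delta_U-\delta_L<\eps$. From failure you get $\delta^*_\ell < f_\ell\,\delta_U < f_\ell(\delta_L+\eps)$, and your output has min distance $\ge \delta_L$, which yields approximation factor $f_\ell(1+\eps/\delta_L)$. Since $\delta_L$ can be as small as $\Theta(1/n)$, this is $f_\ell(1+\eps n)$, which is strictly worse than $f_\ell$ for any $\eps>0$. The theorem claims an exact $f_\ell$-approximation. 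The paper gets this by introducing ``critical values'' (right endpoints of success intervals), proving they are achieved (Lemma~\ref{lem:closed=on-right}), bounding their bit complexity (Lemma~\ref{lemma:critical-bits-bound}), and then using a continued-fraction search (plus a symbolic $\delta+\eps$ test, Lemma~\ref{lemma:detect-critical}) to land \emph{exactly} on a critical value rather than merely near it. Your step (ii), checking that consecutive candidate values are $1/\mathrm{poly}(nD)$ apart, is necessary but not sufficient: once the binary search has isolated a single critical value in $(\delta_L,\delta_U)$ you would still need a rounding step (e.g.\ Stern--Brocot or continued-fraction rounding, using the denominator bound) to recover it exactly and evaluate {\sc Placement} there, and you would also need the fact that {\sc Placement} succeeds \emph{at} the critical value, which the paper must prove separately. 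As written, your argument gives an $(f_\ell+\eps)$-approximation, not $f_\ell$.
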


One complicating factor is that {\sc Placement} is not monotone, i.e., it may happen that {\sc Placement} fails for a value $\delta$ but succeeds for a larger value $\delta'$. We note that Cabello's algorithm~\cite{cabello2007approximation} behaves the same way.
We deal with $L_\infty$ in 
Section~\ref{sec:L-infty-optimization-alg} and with $L_1$ and $L_2$ in Section~\ref{sec:L-1,2-optimization-alg}.

We need some upper and lower bounds on $\delta^*$.  Note that if the input contains two rectangles that are single identical points, then
$\delta^*=0$.  Since this is easily detected, we assume from now on that no two input rectangles are single identical points.

\begin{claim}
\label{claim:delta-bounds}
We have $1/n \le \delta^* \le 2D$. 
\end{claim}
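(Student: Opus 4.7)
My plan for the upper bound is immediate. Since every rectangle lies in the bounding box $[0,D]^2$, so does every representative point; the diameter of this square is $2D$ in $L_1$, $D\sqrt{2}$ in $L_2$, and $D$ in $L_\infty$, all of which are at most $2D$. Hence $\delta^*_\ell \le 2D$ for $\ell\in\{1,2,\infty\}$.

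For the lower bound it suffices to prove $\delta^*_\infty \ge 1/n$, because a configuration of points with pairwise $L_\infty$-distance at least $1/n$ automatically has pairwise $L_1$- and $L_2$-distance at least $1/n$. The key structural fact I will exploit is that the rectangle corners are even integers, so every rectangle centre $c_i$ has integer coordinates in $[0,D]$. I first try the naive assignment $p(R_i)=c_i$: if the $c_i$ are pairwise distinct, they lie at $L_\infty$-distance at least $1\ge 1/n$, finishing the proof in that case.

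If some centres coincide, I will group the rectangles into clusters by common centre. Under the standing assumption that no two input rectangles are identical single points, each cluster contains at most one point-rectangle, and every other rectangle in the cluster contains an axis-aligned segment of length at least $1$ through the shared centre (since its corners are even integers and it is not a single point). Inside each cluster of $k \le n$ rectangles I will place the $k$ representatives in a small axis-aligned neighbourhood of the shared centre, using these unit-length segments, achieving pairwise $L_\infty$-distance at least $1/n$ within the cluster. Because distinct centres are at $L_\infty$-distance at least $1$, keeping each perturbation below a suitable $O(1/\sqrt{n})$ bound leaves inter-cluster distances comfortably above $1/n$ for $n \ge 3$, while the tiny cases $n=1,2$ are handled by direct inspection.

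The main obstacle will be the intra-cluster placement when a cluster contains a mixture of horizontal segments, vertical segments, proper rectangles, and possibly one point-rectangle. The plan is to reduce this to placing representatives on a cross of axis-aligned segments through the shared centre and then spacing them greedily along the two arms, observing that $k \le n$ representatives fit comfortably with $1/n$-spacing inside a box of side $O(1/\sqrt{n})$ around the shared centre.
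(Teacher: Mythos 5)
Your upper bound matches the paper's. For the lower bound you take a genuinely different route: the paper lays down a global grid of spacing $1/n$ and argues by a Hall-type matching (each single-point rectangle sits on exactly one grid point and these are distinct, while each positive-area rectangle has a side of length at least $2$ and hence contains more than $n$ grid points, so a system of distinct representatives on the grid exists). You instead cluster by shared centres and place representatives on a cross of axis-aligned arms through each shared centre. Both strategies can work, and yours is more hands-on, but there is a concrete flaw in the step that controls inter-cluster distances.

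The claim that $k \le n$ representatives with pairwise $L_\infty$-spacing $1/n$ ``fit comfortably inside a box of side $O(1/\sqrt n)$'' is false under your own constraints. That area count would require laying the representatives out on a two-dimensional grid, but in a cluster whose rectangles are all degenerate horizontal segments the representatives are confined to a one-dimensional arm, and packing $k$ of them with $1/n$-spacing there needs an arm of length $\Theta(k/n)$, which is $\Theta(1)$ when $k=\Theta(n)$. So the perturbations are \emph{not} $O(1/\sqrt n)$, and subtracting that phantom bound from the inter-cluster centre distance of $\ge 1$ does not justify the conclusion. Fortunately the argument is salvageable without your $n\ge 3$ special-casing: centre the offsets about the shared centre (each half-arm has length at least $1$, since non-point rectangles have even-integer corners and hence a side of length $\ge 2$), so in a cluster of size $k$ every offset is at most $(k-1)/(2n) \le (n-1)/(2n)$. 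Two representatives from distinct clusters then have $L_\infty$-distance at least $1 - 2\cdot\frac{n-1}{2n} = \frac{1}{n}$, exactly what is needed. With that replacement, plus a careful intra-cluster rule (the unique point-rectangle at the shared centre; horizontal-only rectangles at offsets $\pm i/n$ on the horizontal arm; vertical-only ones at $\pm j/n$ on the vertical arm; proper rectangles wherever there is room), and the observation that a point at nonzero offset $i/n$ on one arm and a point at nonzero offset $j/n$ on a perpendicular arm have $L_\infty$-distance $\max(i,j)/n \ge 1/n$, your approach goes through.
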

\begin{proof}
The upper bound is obvious.  For the lower bound, place
a grid of points 
distance $\frac{1}{n}$ 
 apart. All rectangles with non-zero dimensions will intersect at least 
$n+1$
points, and 
single-point
rectangles will hit one point.  
Since no two single-point rectangles are identical, they
can be matched to the sole grid point that overlaps the rectangle. The remaining rectangles can be matched trivially. 
\end{proof}
Note that a solution with distance at least $\tfrac{1}{n}$ can be found easily, following the steps above.

\subsection{Optimization problem for $L_\infty$}
\label{sec:L-infty-optimization-alg}

For the $L_\infty$ norm we use the following result about the possible optimum values; a proof is in
the full version. 

\begin{lemma}
\label{lemma:n-cubed}
In $L_\infty$, $\delta^*_\infty$ takes on one of the  $O(n^3)$ values of the form $(t-b)/k$ where $k \in \{1,\dots,n\}$ and $t,b$ are rectangle coordinates. 
\end{lemma}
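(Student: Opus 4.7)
Fix an optimum assignment $\{p^*(R)\}_{R \in \mathcal{R}}$ realizing $\delta^* = \delta^*_\infty$. My strategy is to identify a ``rigid'' substructure of tight distance constraints that forces $\delta^*$ into the form $(t-b)/k$.

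\textbf{Step 1 (Tight graphs).} For every pair $(R,R')$ with $d_\infty(p^*(R),p^*(R')) = \delta^*$, at least one of $|x^*(R)-x^*(R')| = \delta^*$ or $|y^*(R)-y^*(R')| = \delta^*$ holds; assign the pair a single \emph{tight direction} (breaking ties arbitrarily). Let $G_x$ be the graph whose edges are the x-tight pairs, and $G_y$ the graph of y-tight pairs. Within any connected component of $G_x$, the x-coordinates are rigidly determined up to a global translation: along any path $R_0,R_1,\ldots,R_m$, consecutive x-coordinates differ by $\pm\delta^*$, so $x^*(R_m)-x^*(R_0) = j\delta^*$ for some integer $j$ with $|j|\le m$. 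An analogous statement holds for $G_y$.

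\textbf{Step 2 (Doubly-pinned component).} I claim that at least one component $C$ of $G_x$ or of $G_y$ is \emph{doubly pinned}: there exist $R_t,R_b \in C$ with (say) $x^*(R_t)=r_{R_t}$, $x^*(R_b)=l_{R_b}$, and $x^*(R_t) > x^*(R_b)$, so that the x-span of $C$ equals a positive integer multiple $k\delta^*$ with $1 \le k \le |C|-1 \le n-1$. Then
\[
\delta^* \;=\; \frac{r_{R_t}-l_{R_b}}{k}
\]
(or the symmetric expression $(t_{R_t}-b_{R_b})/k$ if the pinned component lies in $G_y$), which has the required form.

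\textbf{Step 3 (Perturbation argument).} Suppose, for contradiction, that no component of either $G_x$ or $G_y$ is doubly pinned. Then each x-component $C$ admits a uniform \emph{expansion}: since the rectangles pinning the min-side and the max-side of $C$ cannot both be rigidly pinned to opposite edges, I can scale $C$ about some interior point by a factor $1+\eps$ and still place every $R \in C$ inside its x-range, for small enough $\eps>0$. Do this for all x-components and y-components simultaneously. After the perturbation, every intra-component tight distance has grown to $(1+\eps)\delta^*$, every inter-component distance was strictly greater than $\delta^*$ and remains so for small $\eps$, and rectangle containment is preserved by construction. This strictly increases the minimum pairwise $L_\infty$-distance, contradicting optimality of $\delta^*$.

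\textbf{Counting.} The values $r_{R_t},l_{R_b}$ (or $t_{R_t},b_{R_b}$) range over the $O(n)$ rectangle coordinates, and $k$ ranges over $\{1,\ldots,n\}$, giving $O(n)\cdot O(n)\cdot O(n) = O(n^3)$ candidate values for $\delta^*$.

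\textbf{Main obstacle.} Step 3 is the delicate part. The perturbation must simultaneously expand every x-component and every y-component, and must respect (i) rectangle-containment for every rectangle in every component, (ii) inter-component distance constraints, and (iii) the freedom in how the tight-direction assignment was chosen (pairs tight in both directions allow either placement in $G_x$ or $G_y$). One needs a careful choice of center of expansion for each component---e.g., choosing each component's scale center so that no rectangle's x- or y-range becomes the binding constraint for small $\eps$---to ensure feasibility. Handling pairs that are tight in both coordinates and verifying that inter-component distances are bounded away from $\delta^*$ by a quantity independent of $\eps$ are the points that need the most care.
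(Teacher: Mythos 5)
Your approach is sound and genuinely different from the paper's, but Step~3 has a real gap in its justification. You claim that because ``the rectangles pinning the min-side and the max-side of $C$ cannot both be rigidly pinned to opposite edges'' you can scale about some interior point. That weaker statement (extreme rectangles not both pinned) is \emph{not} by itself enough: an \emph{intermediate} rectangle in $C$ that is left-pinned blocks any scaling whose center lies to its right, even if the component's extremes are free. What you actually need is the full strength of ``no doubly-pinned pair,'' which says that for every right-pinned $R_t$ and left-pinned $R_b$ in $C$ we have $x^*(R_t)\le x^*(R_b)$. This gives
\[
\max\{x^*(R) : R \text{ right-pinned in } C\} \;\le\; \min\{x^*(R) : R \text{ left-pinned in } C\},
\]
and any $c$ in this (nonempty, possibly degenerate) interval is a valid scaling center: rectangles with $x^*<c$ are not left-pinned (slack to the left), rectangles with $x^*>c$ are not right-pinned (slack to the right), and rectangles at $x^*=c$ do not move. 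Degenerate (zero-width) rectangles can only sit at $x^*=c$, since otherwise two of them would form a doubly-pinned pair. With this corrected choice of $c$, the rest of your Step~3 goes through: the concerns you flag about doubly-tight pairs and inter-component distances can all be handled by taking $\varepsilon$ small enough, since non-tight distances exceed $\delta^*$ by a fixed margin and the changes are $O(\varepsilon)$.

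As for comparison with the paper: the paper avoids the tight-graph/component machinery entirely by first selecting, among all optimal solutions, the one minimizing $\sum_R\bigl(x(R)+y(R)\bigr)$. This forces each coordinate to satisfy either $x(R)=\ell(R)$ or $x(R)=x(R')+\delta^*$ (otherwise a left-shift improves the sum), which unrolls by induction to $x(R)=\ell(R')+k\delta^*$, and symmetrically in $y$. The case analysis then reduces to: either some right- or top-coordinate closes such a chain with $k>0$ (giving the desired form immediately), or no such closure exists, in which case a single \emph{global} scaling $x'(R)=(1+\varepsilon)x(R)$ applied only to non-right-pinned coordinates (and similarly in $y$) strictly improves the solution. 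The auxiliary minimization replaces your component decomposition with a canonical normal form, and the perturbation becomes one global affine map rather than per-component scalings with per-component centers. Your route is more ``physical'' and may generalize better to other norms where the min-sum trick doesn't cleanly separate coordinates; the paper's route is shorter here because $L_\infty$ decouples $x$ and $y$ so neatly.
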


\newcommand{\calD}{\ensuremath{\mathcal{D}}}

Let $\Delta$ be the set of $O(n^3)$ values from the lemma.
We can compute the set $\Delta$ in $O(n^3)$ time and 
sort it in $O(n^3 \log n)$ time.  
Say $\Delta = \{d_1,d_2,\dots,d_N\}$ in sorted order, and
set $c_i:=d_i/f_\infty$. 
Because of non-monotonicity, we cannot efficiently find the maximum $c_i$ for which {\sc Placement} succeeds.  
Instead, we use binary search to find $i$ such that {\sc Placement}($c_i$) succeeds but {\sc Placement}($c_{i+1}$) fails.
Therefore
$\delta^*_\infty < f_{\infty}c_{i+1} =  d_{i+1}$ which implies that  $\delta^*_\infty \le d_{i} = f_{\infty}c_i$ 
and the representative points found by {\sc Placement}($c_i$) provide an $f_\infty$-approximation.

To initialize the binary search, we first run {\sc Placement}$(c_N)$, and, if it succeeds, return its computed representative points
 since they provide an $f_\infty$-approximation
of the optimum assignment. 
Also note that {\sc Placement}$(c_1)$ must succeed---if it fails then 
$\delta_\infty^*<f_\infty c_1=d_1$,
which contradicts $\delta_\infty^*\in \Delta$.
Thus we begin with the interval $[1,N]$ and perform binary search on within this interval to find a value $i$ such that {\sc Placement}($c_i$) succeeds but {\sc Placement}($c_{i+1}$) fails.

We can get away with sorting just the $O(n^2)$ numerators and performing an implicit binary search, to avoid the cost of generating and sorting all of $\Delta$. Let $t$ be the sorted array of $O(n^2)$ numerators, which takes $O(n^2\log n)$ time to generate and sort. The denominators are just $[n]$, so there is no need to generate and sort it explicitly. Define the implicit \emph{sorted} matrix $a$, where $a[r, c] = t[r] / (n-c)$, for $0\le r\le |t|-1, 0 \le c \le n-1$. Each entry of $a$ can be computed in $O(1)$ time. Since the matrix is sorted, the matrix selection algorithm of Frederickson and Johnson \cite{frederickson1984generalized} can be used to get an element of $\Delta$ at the requested index in $O(n \log n)$ time. Using selection, one can perform the binary search on $\Delta$ implicitly. While accessing elements of $\Delta$ takes more time, it is still less than the time to call {\sc Placement} on the element accessed. Each iteration of the binary search is dominated by the runtime of {\sc Placement}, so the total runtime is 
$O(n^2(\log n)^2)$. 
This proves Theorem~\ref{theorem:opt-alg} for $L_\infty$.



\subsection{Optimization problem for $L_1$ and $L_2$}
\label{sec:L-1,2-optimization-alg}

In this section we give an approximation algorithm for the optimization version of distant representatives for rectangles in the $L_1$ and $L_2$ norms. 
Define a \emph{critical} value to be a right endpoint of an interval where {\sc Placement} succeeds.  See Figure~\ref{fig:critical}.

We use the following results whose proofs can be found below.

\begin{lemma}
\label{lem:closed=on-right}
{\sc Placement} succeeds at critical values, i.e., the intervals where {\sc Placement} succeeds are closed at the right.  Furthermore, a critical value provides an $f_\ell$-approximation.
\end{lemma}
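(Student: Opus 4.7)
The plan has two parts. For the easier claim that a critical value $\delta_0$ gives an $f_\ell$-approximation, I would use that $\delta_0=\sup I$ for a maximal interval $I$ of successes, so there is $\eta>0$ with no successes in $(\delta_0,\delta_0+\eta)$. Theorem~\ref{thm:placement} then gives $\delta^*_\ell < f_\ell\delta$ for every $\delta$ in this gap, and letting $\delta\to\delta_0^+$ yields $\delta^*_\ell\le f_\ell\delta_0$. Combined with the first claim (that {\sc Placement}$(\delta_0)$ succeeds and so outputs points at pairwise distance at least $\delta_0$), this establishes the $f_\ell$-approximation guarantee.

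The substantive work is to show that if {\sc Placement} succeeds at every term of a sequence $\delta_k\uparrow\delta_0$, then it succeeds at $\delta_0$. Two easy ingredients come first. By the infinitesimal-perturbation clause in the definition of ``big,'' the big/small classification at $\delta_0$ agrees with the classification at $\delta_k$ for all sufficiently large $k$, so Steps 1--2 are stable. Moreover, the centres $p(r)$ of small rectangles do not depend on $\delta$ at all, so the Step 3 inequality $d_\ell(p(r),p(r'))\ge \delta_k$ holding for every $k$ yields $d_\ell(p(r),p(r'))\ge \delta_0$ by taking limits.

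The crux is producing a matching in $H^-(\delta_0)$ that covers all big rectangles. Each matching $M_k$ found in $H^-(\delta_k)$ is combinatorially a finite set of pairs (big rectangle, anchor index of a blocker-shape), and since the input rectangles lie in a bounded box only finitely many anchor indices are ever relevant for $\delta$ near $\delta_0$, so by pigeonhole I may pass to a subsequence on which $M_k$ equals a single abstract matching $M$. For each pair $(R,(i,j))\in M$, let $B_k$ be the blocker-shape with anchor $(i,j)$ at parameter $\delta_k$ and $B_0$ the one at $\delta_0$. The blocker-shape depends continuously on $\delta$, so $B_k\to B_0$ in Hausdorff distance; compactness of $R$ promotes a sequence of witness points in $R\cap B_k$ to a limit in $R\cap B_0$, giving $(R,B_0)\in H(\delta_0)$. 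The ownership exclusion transfers via the strict inequality in the definition of ``owned'': if some small $r$ owned $B_0$ at $\delta_0$, then by continuity of $d_\ell(p(r),B_\cdot)$ the rectangle $r$ would also own $B_k$ at $\delta_k$ for all large $k$, contradicting $(R,B_k)\in H^-(\delta_k)$. Thus $M$ is a matching in $H(\delta_0)$ covering all big rectangles, and the exchange argument from Lemma~\ref{lem:approxSucceeds} converts it into a matching in $H^-(\delta_0)$. The main obstacle is precisely this limiting step: aligning the continuous motion of the blocker-shapes with the finite combinatorics of $H^-$ is where the strictness in the ownership definition and the closedness of the rectangles do the work.
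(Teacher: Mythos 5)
Your proof is correct, and it is essentially the contrapositive of the paper's argument. The paper shows the set of failure values is open on the left: it enumerates, in Observation~\ref{obs:events}, the discrete events that can flip {\sc Placement}'s outcome as $\delta$ increases, and then rules out each one in a sufficiently small interval below a failure point. You instead argue directly that the success set is closed under limits from below: you extract a fixed abstract matching by pigeonhole over the finitely many relevant anchor indices, push each edge of that matching through the limit $\delta_k\to\delta_0$ using compactness of the rectangles and the strictness in the ownership criterion, and then convert to $H^-$ via the exchange argument from the proof of Lemma~\ref{lem:approxSucceeds}. The underlying technical ingredients are identical in both proofs---stability of the big/small classification via the $\epsilon_0$ clause, closedness of rectangles and blocker shapes (so intersections persist), and the strict inequality in the ownership definition (so non-ownership persists)---so the two arguments are morally the same; your compactness framing is a bit more self-contained and sidesteps the events observation, whereas the paper's event framework has the advantage of being reused almost verbatim to prove the bit-complexity bound of Lemma~\ref{lemma:critical-bits-bound}. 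One shared gloss: both you and the paper treat the big/small classification as locally constant on the left of $\delta_0$ without spelling out why, which really rests on the fact that the set of $\delta$ at which a given rectangle touches a blocker shape is a finite union of closed intervals (a consequence of the grid coordinates being linear in $\delta$); worth making explicit, but not a gap specific to your write-up.
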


\begin{figure}[htb]
    \centering
    \includegraphics[width=.58\textwidth]{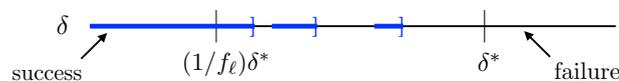}
    \caption{An illustration of critical values.}
    \label{fig:critical}
\end{figure}

Thus our problem reduces to finding a critical value.

\begin{lemma}
\label{lemma:detect-critical}
The {\sc Placement} algorithm can be modified to detect critical values.
\end{lemma}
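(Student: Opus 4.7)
The plan is to modify {\sc Placement} so that, on input $\delta$, it detects whether $\delta$ is a critical value.  By Lemma~\ref{lem:closed=on-right}, $\delta$ is critical exactly when {\sc Placement}$(\delta)$ succeeds and {\sc Placement}$(\delta + \eta)$ fails for all sufficiently small $\eta > 0$.  So I would run {\sc Placement} twice on the same input: once at $\delta$ with the original conditions, and once with conditions that simulate the behavior at an infinitesimally larger value $\delta^+$; then I would report $\delta$ as critical iff the first run succeeds and the second run fails.

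The $\delta^+$ simulation is obtained by replacing each $\delta$-dependent condition in {\sc Placement} with its right-continuous analog.  In Step~3, the test $d_\ell(p(r),p(r')) < \delta$ becomes the non-strict $d_\ell(p(r),p(r')) \le \delta$.  In Step~4, the ownership conditions ($d_\ell(p(r),B) < \delta$ for $\ell \in \{1,\infty\}$, and $d_1(p(r),B) < \sqrt{2}\,\delta$ for $\ell = 2$) become their non-strict counterparts.  For the big/small classification in Step~1 and the edges of $H$ in Step~5, we need intersections of a rectangle with a blocker shape that persist under an infinitesimal increase of $\delta$; equivalently, the rectangle must contain a point of the blocker shape in its open interior, not merely touch one on the boundary.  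Each change is a constant-time adjustment to an intersection test already carried out in the original algorithm and uses exactly the same index computations, so the modified algorithm still runs in $O(n^2 \log n)$ time.

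The correctness of the modification rests on showing that the $\delta^+$ simulation produces the same output as {\sc Placement}$(\delta + \eta)$ for all sufficiently small $\eta > 0$.  The main obstacle will be verifying this for the big/small classification and for the edges of $H$, because the grid spacing and the positions of the blocker shapes vary continuously with $\delta$.  I would argue that the combinatorial structure used by {\sc Placement} can only change at a finite set of ``events'' --- values of $\delta$ at which some inequality comparing a rectangle coordinate, a center-to-center distance, a center-to-blocker distance, or a blocker intersection becomes tight --- and that between consecutive events the output of {\sc Placement} is constant.  The strict-versus-non-strict distinction built into the $\delta^+$ simulation captures exactly the effect of crossing any such event at $\delta$, which completes the argument.
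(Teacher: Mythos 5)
Your high-level plan matches the paper's: run {\sc Placement} once at $\delta$ and once while simulating an infinitesimally larger value $\delta^+$, declaring $\delta$ critical iff the first run succeeds and the second fails. But the specific simulation rules you propose do not correctly capture the behaviour at $\delta+\eta$ for small $\eta>0$, and the gap is not minor.

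The ``open interior'' rule for the big/small classification is wrong. Consider the $L_\infty$ grid with $\delta=2$, the $L$-shape $B$ anchored at the origin (so one arm is the segment $[0,2]\times\{0\}$), and the rectangle $R=[0,1]\times[-1,0]$. Then $R\cap B$ is the segment $[0,1]\times\{0\}$, entirely on the boundary of $R$, so your rule declares $R$ small in the $\delta^+$ simulation. Yet at $\delta+\eta$ the arm becomes $[0,2+\eta]\times\{0\}$ and $R$ still meets it along $[0,1]\times\{0\}$; hence $R$ is in fact big at every $\delta'\in[\delta,\delta+\eta]$ and your simulation misclassifies it. The issue is that a boundary contact can persist because the contact slides along the rectangle's edge as $\delta$ grows; containment in the open interior is sufficient for persistence but not necessary. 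A symmetric argument shows the strict-to-nonstrict swap for ownership is not justified either: the ownership test compares $d_\ell(p(r),B(\delta))$ to $\delta$, but $B(\delta)$ itself moves and stretches with $\delta$, so the sign of $d_\ell(p(r),B(\delta+\eta))-(\delta+\eta)$ when the two sides are tied at $\delta$ depends on the direction of motion, not only on whether equality holds. (Only Step~3 is safe, because the center-to-center distances are constants independent of $\delta$.) These tie cases are exactly the ones that matter, since Observation~\ref{obs:events} shows that criticality is caused by an event that occurs precisely at $\delta$.

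The paper instead proposes running {\sc Placement} ``symbolically'' at $\delta+\eps$: for each $\delta$-dependent comparison it computes the sign as a function of a formal infinitesimal $\eps$, so ties are broken according to the actual direction in which the quantity changes, rather than by a fixed strict/non-strict convention. Alternatively, the paper notes one may use the bounds of Lemma~\ref{lemma:critical-bits-bound} to compute an explicit $\eps$ small enough that no event lies strictly between $\delta$ and $\delta+\eps$, and simply run {\sc Placement}$(\delta+\eps)$. Either variant avoids the unsound fixed rules; your last paragraph recognizes that the big/small and $H$-edge cases need verification but stops short of giving an argument, and as shown above the argument would fail.
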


\begin{lemma}
\label{lemma:critical-bits-bound}
In $L_1$ any critical value $\delta$  is a rational number with numerator and denominator at most $4Dn$.  In $L_2$ for any critical value $\delta$, $\delta^2$ is a rational number with numerator and denominator at most $8D^2n^2$. 
\end{lemma}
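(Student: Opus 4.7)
My plan is to show that every critical value $\delta$ (and $\delta^2$ in the $L_2$ case) is determined by one of a small number of geometric events, each of which forces $\delta$ into the claimed rational form. With a fixed tie-breaking rule in Step~6 of {\sc Placement} (for example, by rectangle and blocker-shape indices), the algorithm's output depends on $\delta$ only through a finite list of comparisons: (i)~a grid line at $i\gamma_\ell$ coincides with a rectangle-corner coordinate $x\in\mathbb{Z}\cap[0,D]$, which governs big/small status and the edge set of $H$; (ii)~$d_\ell(p(r),B)=\delta$ (or $d_1(p(r),B)=\sqrt{2}\delta$ for $\ell=2$) for some small rectangle $r$ and blocker shape $B$, which governs ownership; and (iii)~$d_\ell(p(r),p(r'))=\delta$ for two small-rectangle centers in $\mathbb{Z}^2$, which governs the Step~3 test. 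Between consecutive transitions of these comparisons {\sc Placement}'s decision is invariant, so any critical value must equal a transition value of (i), (ii), or (iii). Throughout we may restrict to $\delta\ge 1/n$, since a trivial solution of distance $1/n$ exists by Claim~\ref{claim:delta-bounds}; this forces every relevant grid index to satisfy $i\le D/\gamma_\ell\le 2Dn$.

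For $L_1$ (where $\gamma_1=\delta/2$): event (i) gives $\delta=2x/i$ with numerator $\le 2D$ and denominator $\le 2Dn$; event (iii) gives $\delta=|a-c|+|b-d|\in\mathbb{Z}_{\le 2D}$; event (ii), by case analysis on the position of $p(r)=(a,b)$ relative to the $+$-shape anchored at $(i\delta/2,j\delta/2)$, reduces to an equation of the form $\alpha a+\beta b=(c+\kappa i+\lambda j)\delta/2$ with small integer coefficients $\alpha,\beta\in\{-1,0,1\}$ and $c,\kappa,\lambda$ bounded by a small constant, hence $\delta=2(\alpha a+\beta b)/(c+\kappa i+\lambda j)$ with numerator $\le 4D$ and denominator of order $O(Dn)$. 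All three cases fit within $4Dn$.

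For $L_2$ we argue analogously for $\delta^2$: event (i) gives $\delta^2=2x^2/i^2$ with numerator $\le 2D^2$ and denominator $\le 4D^2n^2$; event (iii) gives $\delta^2=(a-c)^2+(b-d)^2\in\mathbb{Z}_{\le 2D^2}$; event (ii), after solving the linear-in-$\delta$ equation $d_1(p(r),B)=\sqrt{2}\delta$ (noting that $d_1(p(r),B)$ is a linear combination of $1$ and $\delta/\sqrt{2}$ with small integer coefficients) and squaring, gives $\delta^2$ of the form $2(\alpha a+\beta b)^2/(c+\kappa i+\lambda j)^2$ with numerator $\le 8D^2$ and denominator of order $O(D^2n^2)$. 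All three cases fit within $8D^2n^2$.

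The main obstacle is the event-(ii) case analysis: one must enumerate the relative positions of $p(r)$ with respect to the blocker shape $B$ (projection lies on one of the four arms, beyond an arm endpoint, or in an open quadrant) and check that in each case the $L_1$-distance expands into a linear expression whose solution has the stated form. A minor subtlety is justifying that {\sc Placement} is truly piecewise constant between the events listed, which reduces to inspecting each step to confirm that no branching depends on $\delta$ in a way not captured by comparisons (i)--(iii); Step~6's arbitrary edge deletions are made deterministic by the fixed tie-breaking rule.
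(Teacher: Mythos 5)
Your proposal is correct and follows essentially the same route as the paper: both arguments identify that a critical value must witness one of three event types (grid lines coinciding with integer rectangle coordinates, small-rectangle centres hitting the ownership boundary $C'$, and two small-rectangle centres at distance exactly $\delta$), and in each case express $\delta$ or $\delta^2$ as a ratio of bounded integers using $\delta\ge 1/n$ to cap the grid indices. The paper's treatment of your event~(ii) is marginally slicker---it observes directly that $\partial C'$ consists of slope-$\pm 1$ segments, giving the compact form $(i\pm j)\gamma_\ell=k/2$---where you instead sketch a sign-based case analysis that leads to the same conclusion with slightly looser (but still sufficient) numerator bounds.
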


Based on these lemmas, we use continued fractions to find a critical value.
We need the following properties of continued fractions.
\begin{enumerate}
\item A continued fraction has the form $a_0 + \frac{1}{a_1 + \frac{1}{a_2 + \cdots \frac{1}{a_k}}}$, where the $a_i$'s are natural numbers.

\item Every positive rational number $\frac{a}{b}$ has a continued fraction representation. Furthermore, 
the number of terms, $k$, is $O(\log (\max\{a,b\}))$.  This follows from the fact that computing the continued fraction representation of $\frac{a}{b}$ exactly parallels the Euclidean algorithm; see~\cite[Theorem 4.5.2]{Bach-Shallit} or the wikipedia page on the Euclidean algorithm~\cite{wiki_euclididean}.
For the same reason, 
each $a_i$ is bounded by $\max\{a,b\}$.

\item Suppose we don't know $\frac{a}{b}$ explicitly, but we know some bound $G$ such that $a,b \le G$, 
and we have a test of whether a partial continued fraction is 
greater than, less than, or equal to $\frac{a}{b}$.  Then we can find the continued fraction representation of $\frac{a}{b}$ as follows. For $i=1 \ldots \log G$, use binary search on $[2..G]$ to find the best value for $a_i$.  
Note that the continued fraction with $i$ terms is increasing in $a_i$ for even $i$ and decreasing for odd $i$, and we adjust the binary search correspondingly.
In each step, we have a lower bound and an upper bound on $\frac{a}{b}$ and the step shrinks the interval.
If the test runs in time $T$, then
the time to find the continued fraction for $\frac{a}{b}$ is $O(T (\log G)^2)$, plus the cost of doing arithmetic on continued fractions (with no $T$ factor).
\end{enumerate}

\subparagraph*{Algorithm for $L_1, L_2$.}
Run the continued fraction algorithm using {\sc Placement} (enhanced to detect a critical value) as the test. The only difference from the above description is that we do not have a specific target $\frac{a}{b}$; rather, our interval contains at least one critical value and we search until we find one.  At any point we have two values $b_l$ and $b_u$ both represented as continued fractions, where {\sc Placement} succeeds at $b_l$ and fails at $b_u$, so there is at least one critical value between them.
We can use the initial interval $[1/n, 2D]$.  
To justify this, note that if 
{\sc Placement}($\frac{1}{n}$) fails , then $f_\ell/n > \delta^*$ by Theorem \ref{thm:placement}, so we get an 
$f_\ell$-approximation by using the representative points for $\delta = \frac{1}{n}$
(see the remark after Claim~\ref{claim:delta-bounds}).

For the runtime, we use the bound $O(T(\log G)^2)$ from point 3 above, plugging in $T = O(n^2 \log n)$ for {\sc Placement} and the bounds on $G$ from Lemma~\ref{lemma:critical-bits-bound}, to obtain a runtime of $O(n^2 {\rm polylog}(nD))$, 
which proves Theorem~\ref{theorem:opt-alg} for $L_1$ and $L_2$.

The run-time for Theorem~\ref{theorem:opt-alg} can actually be improved to $O(n^2(\log n)^2)$ (i.e., without the dependence on $\log D$) with an approach that is very specific to the problem at hand (and similar to Cabello's approach).  
The details are complicated for such a relatively small improvement and hence omitted here.

\subparagraph*{Missing proofs.}  For space reasons we can here only give the briefest sketch of the proofs of Lemmas~\ref{lem:closed=on-right}, \ref{lemma:detect-critical}, and~\ref{lemma:critical-bits-bound}; details are in 
the full version.
A crucial ingredient is to study what must have happened if {\sc Placement}($\delta$) goes from success to failure (when viewing its outcome as a function that changes over time $\delta$).

\begin{observation}
\label{obs:events}
Assume {\sc Placement}($\delta$) succeeds but {\sc Placement}($\delta'$) fails for some $\delta'>\delta$.  Then at least one of the following
events occurs 
as we go from $\delta$ to $\delta'$:
\begin{enumerate}
\item the set of small/big rectangles changes,
\item the distance between the centres of two small rectangles equals $\hat{\delta}$ for some $\delta\leq \hat{\delta}<\delta'$,
\item the set of blocker-shapes owned by a small rectangle increases,
\item the set of blocker-shapes intersecting a big rectangle decreases.
\end{enumerate}
\end{observation}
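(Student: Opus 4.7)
The plan is to prove the contrapositive: suppose that as $\hat\delta$ ranges from $\delta$ to $\delta'$ none of the four events occurs, and deduce that {\sc Placement}($\delta'$) must succeed as well. Since the algorithm can declare failure only in Step~3 (two small-rectangle centres lie within distance $\delta$) or in Step~8 (no matching in $H^-$ covers every big rectangle), it suffices to rule out these two failure modes at the new parameter $\delta'$.

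Ruling out a Step~3 failure is immediate. Because event~1 does not occur, the partition into small and big rectangles is unchanged, and the centres $p(r)$ depend only on the rectangles, not on $\delta$. Every pairwise distance $d_\ell(p(r),p(r'))$ is at least $\delta$ by the assumed success of {\sc Placement}($\delta$). If Step~3 were to fail at $\delta'$, some such pairwise distance would be strictly less than $\delta'$, hence in the half-open interval $[\delta,\delta')$, which is exactly event~2.

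Ruling out a Step~8 failure takes a little more care; the idea is to re-use the covering matching $M$ produced at $\delta$ and port it to $\delta'$. For each edge $(R,B)\in M$, I plan to verify that $(R,B)$ is still present in $H(\delta')$: $R$ is still big at $\delta'$ by absence of event~1; $R$ still meets $B$ at $\delta'$, since if the incidence of $B$ with $R$ were lost this would register as event~4 (which in particular covers the case that the shape $B$ leaves the grid altogether); and $B$ is still un-owned at $\delta'$ by absence of event~3. Thus $M$ is a valid matching in $H(\delta')$ that covers every big rectangle. The only remaining issue is that some edge of $M$ might have been pruned when $H^-(\delta')$ was built from $H(\delta')$; this is handled by exactly the exchange step from the proof of Lemma~\ref{lem:approxSucceeds}: any $R$ whose matched edge was pruned has degree exactly $n$ in $H^-(\delta')$, at most $n-1$ of those $n$ neighbours are used by the remaining big rectangles, so $R$ can be rerouted to a free neighbour in $H^-(\delta')$. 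Iterating yields a covering matching in $H^-(\delta')$, so Step~8 also succeeds.

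The main delicate point I expect to write carefully is the bookkeeping of what ``the same blocker-shape'' means across different values of $\delta$: the grid spacing $\gamma_\ell$ is a function of $\delta$, so blocker-shapes move continuously and may even appear or disappear at the edge of the $[0,D]^2$ window. The events in the observation are chosen exactly so as to capture every combinatorial transition relevant to $H^-$: a shape disappearing while it was intersecting a big rectangle is an instance of event~4, a shape newly falling under the ownership of some small-rectangle centre is event~3, and a previously-safe distance between two small-rectangle centres becoming violating is event~2. Once this correspondence is recorded cleanly, the proof collapses to the two short arguments above and the edge-rerouting already developed in Section~\ref{sec:decision-alg}.
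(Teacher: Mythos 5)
Your proposal is correct and takes essentially the same route as the paper's own brief justification: both prove the contrapositive by checking that absence of events (1)--(4) fixes the set of big rectangles, leaves the pairwise centre distances of small rectangles safe, and forces $H$ to only gain vertices and edges, so that the covering matching at $\delta$ persists to $H(\delta')$ and hence (via the exchange step of Lemma~\ref{lem:approxSucceeds}) to $H^-(\delta')$. The paper states this more tersely and leaves the $H$-to-$H^-$ pruning issue implicit; you spell it out, but the substance is identical.
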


Roughly speaking,  Lemma~\ref{lem:closed=on-right} 
can now be  shown by arguing that such events do not happen in a sufficiently small time-interval before {\sc Placement} fails (hence the intervals where it fails are open on the left).  Lemma~\ref{lemma:critical-bits-bound} holds because there necessarily must have been an event at time $\delta$, and we can analyze the coordinates when events happen.  
Finally Lemma~\ref{lemma:detect-critical}  is achieved by running {\sc Placement} at time $\delta$ and also symbolically at time $\delta+\varepsilon$.



\section{Hardness results}
\label{sec:hardness}

In this section we outline NP-hardness and 
APX-hardness 
results for the 
distant representatives problem.
For complete details see the full version of the paper.
We first show that, even for the special case of unit horizontal segments, the decision version of the problem is NP-complete
for $L_1$ and $L_\infty$ and NP-hard for $L_2$ (where bit  complexity issues prevent us from placing the problem in NP). This $L_2$ result was proved previously by Roeloffzen in  his Master's thesis~\cite[Section 2.3]{roeloffzenfinding} but 
we add details regarding bit complexity that were missing from his proof.

Next, we enhance our reductions to ``gap-producing reductions''  to obtain lower bounds on the approximation  factors that can be  achieved in polynomial time.  Since our goal is to compare with 
 our approximation algorithms
 for 
 rectangles, 
 we consider the more general case of horizontal and vertical segments in  the plane (not just unit horizontals). 
Our main result is that,
assuming P $\ne$ NP, no polynomial time approximation algorithm achieves a factor better than
$1.5$ in $L_1$ and $L_\infty$ and $1.4425$ in $L_2$.

Our reductions are from the NP-complete problem Monotone Rectilinear Planar 3-SAT~\cite{deberg2010optimal} in which
each clause has either three positive literals or three negative literals, each variable is represented by a thin vertical rectangle at $x$-coordinate $0$,
each positive [negative] clause is represented by a thin vertical rectangle at a positive [negative, resp.] $x$-coordinate, and there is a horizontal line segment joining any variable to any  clause that contains it.  See Figure~\ref{fig:3-SAT-plan}(a) for an example instance of the problem.  
For 
$n$ variables and $m$ clauses, the representation can be on an $O(m) \times O(n+m)$ grid. 

\subsection{ NP-hardness }
\label{sec:NP-hardness}

\begin{theorem}
The decision version of the distant representatives problem for unit horizontal segments in the $L_1, L_2$ or $L_{\infty}$ norm is NP-hard.  
\end{theorem}

Lemma~\ref{lemma:n-cubed} implies that the decision problem lies in NP for the $L_\infty$ norm, even for  rectangles.  
In the full version 
we show the same for $L_1$, and  we discuss the bit complexity issues that prevent us from placing the decision problem in NP  for the $L_2$  norm.

For our reduction from Monotone Rectilinear Planar 3-SAT we first modify the representation so that 
each clause rectangle has fixed height and is connected to its three literals via three ``wires''---the middle one remains horizontal, the bottom one bends to enter the clause rectangle from the bottom, and the top one bends twice to enter the clause rectangle from the far side.  See Figure~\ref{fig:3-SAT-plan}(b).
Each wire is directed from the variable to the clause, and represents a literal. 
The representation is still on an 
$O(m) \times O(n+m)$ grid.

To complete the  reduction to the distant representatives problem we 
replace the rectangles with variable and clause gadgets constructed from unit horizontal intervals, and also implement the  wires using such intervals.  
The details, which can be found in the full version of the paper, 
depend on the norm $L_\ell$, $\ell  = 1, 2, \infty$.
We also set a value of $\delta_\ell$ to obtain a decision problem that asks if  there is an assignment of a representative point to each interval that is \emph{valid}, i.e., such that no two points are closer than $\delta_\ell$.  
We set $\delta_1 = 2$, $\delta_2 = \frac{13}{5}$, and $\delta_\infty = \frac{1}{2}$.
An example of the construction for $L_1$ (with $\delta_1 = 2$) is shown in Figure~\ref{fig:3-SAT-plan}(c). 

\begin{figure}[tb]
    \centering
    \includegraphics[width=.87\textwidth]{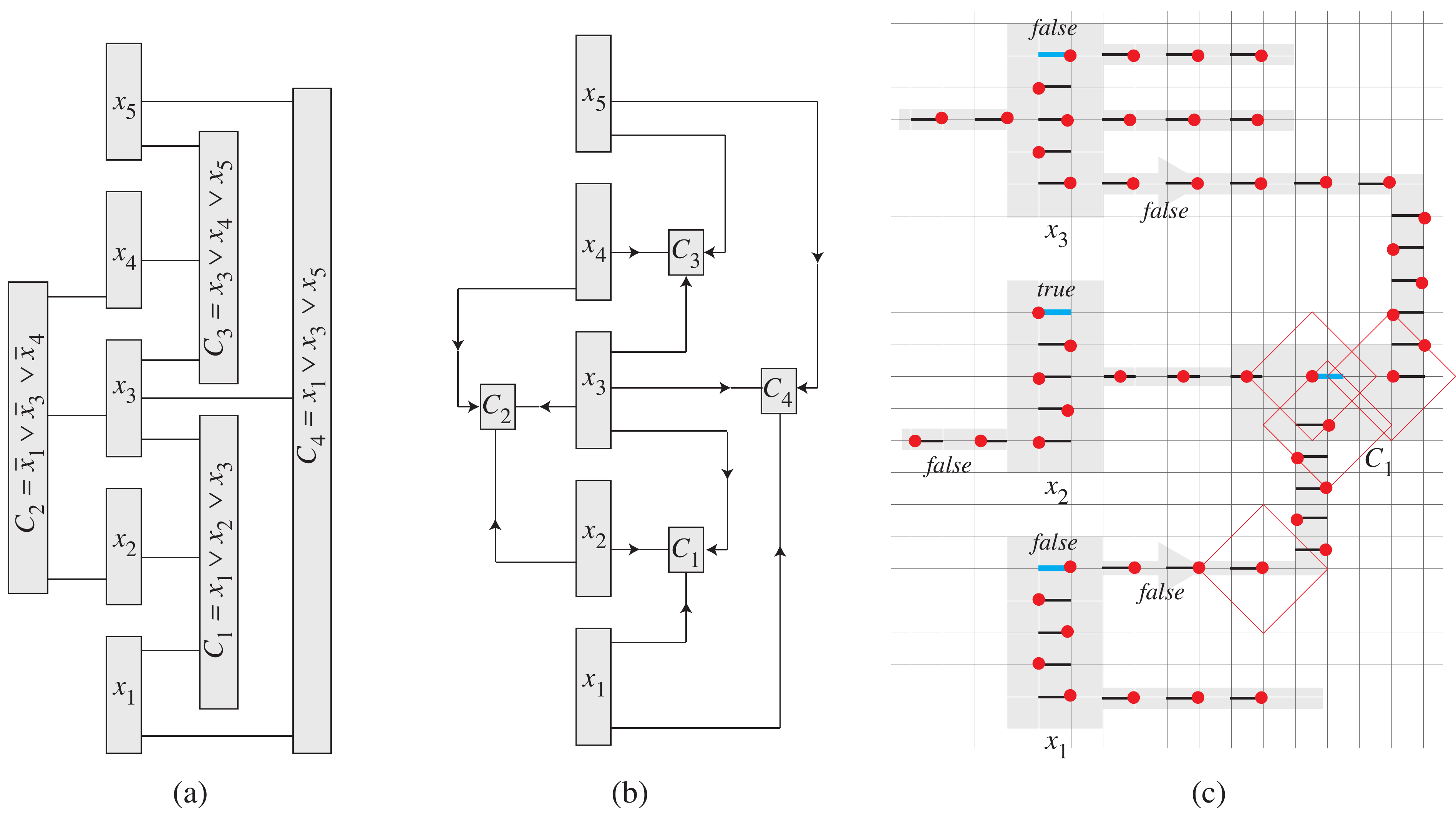}
    \caption{(a)  An instance of Monotone Rectilinear Planar 3-SAT. (b) The modified representation used for our NP-hardness proofs, with wires from variable to clause gadgets. 
    (c) A detail of our NP-hardness construction for clause $C_1 = x_1 \vee x_2 \vee x_3$ in the $L_1$ norm
    showing how the truth-value setting $x_1 = $ False, $x_2 = $ True,  $x_3 = $ False, permits representative points (shown as red dots) at distance at least $\delta_1 = 2$.
}
    \label{fig:3-SAT-plan}
\end{figure}

For the $L_2$ norm, the bit complexity issue missed in Roeloffzen's reduction~\cite[Section 2.3]{roeloffzenfinding} is that the interval endpoints and their distances must have polynomially-bounded bit complexity.
We resolve this by using 
Pythagorean triples (see Figure~\ref{fig:approx-plan}(a)).

\subsection{APX-hardness}
\label{sec:APXX-hardness}

In this section, we prove hardness-of-approximation results for the distant representatives problem on  horizontal and vertical segments in the plane.
Specifically, 
we prove lower bounds on the approximation factors that can be achieved in polynomial time, assuming P $\ne$ NP.

\begin{theorem} For $\ell = 1, 2, \infty$, let $g_\ell$ be the constant shown in Table~\ref{Tab:ratios}.  Suppose P $\ne NP$.  Then, for the $L_\ell$ norm,  there is no polynomial time algorithm with approximation factor less than $g_\ell$ for the distant representatives problem for horizontal and vertical segments.
\label{PTAS_claim}
\end{theorem}
\begin{table}[ht]
    \centering
    
    \begin{tabular}{c|c c c}
           & $L_1$ & $L_2$ & $L_\infty$\\
           \hline
           lower bound &  $g_1 = 1.5$ & $g_2 = 1.4425$ 
           & $g_{\infty} = 1.5$ \\
    \end{tabular}
    
    \caption{Best approximation ratios that can be achieved unless P=NP.} 
    \label{Tab:ratios}
\end{table}

We prove Theorem~\ref{PTAS_claim} using a 
\emph{gap reduction}.  
This standard approach is based on the fact that if there were polynomial time approximation algorithms with approximation factors better than $g_\ell$ then the \emph{gap versions} of the problem (as stated below) would be solvable in polynomial time.
Thus, proving that the gap versions are NP-hard implies that there are no polynomial time $g_\ell$-approximation algorithms unless P=NP.

Recall that $\delta^*_\ell$ is the max over all assignments of representative points, of the min distance between two points.

\begin{quote}
{\bf Gap Distant Representatives Problem.}\\
{\bf Input:} A set $I$ of horizontal and vertical segments in the plane.\\
{\bf Output:} 
\begin{itemize}
    \item YES if $\delta^*_\ell(I) \ge 1$;
    \item NO if $\delta^*_\ell(I) \le 1/g_\ell$;
    \item  and it does not matter what the output is for other inputs.
\end{itemize}

\end{quote}

To prove Theorem~\ref{PTAS_claim} it therefore suffices to prove:

\begin{theorem}
The Gap Distant Representatives problem is NP-hard.
\label{theorem:gap-NP-hard}
\end{theorem}

This is proved via a reduction from Monotone Rectilinear Planar 3-SAT, much like in the previous section. 
The gadgets are simpler because we can use vertical segments, but we must prove stronger properties. 
Given an instance $\Phi$ of Monotone Rectilinear Planar 3-SAT
we construct in polynomial time a set of horizontal and vertical segments $I$ 
such that:

\begin{claim}
\label{claim:if-SAT}
If $\Phi$ is satisfiable then $\delta^*_\ell(I) = 1$.
\end{claim}

\begin{claim}
\label{claim:if-not-SAT}
If $\Phi$ is not satisfiable then $\delta^*_\ell(I) \le 1/g_\ell$.
\end{claim}

Thus a polynomial time algorithm for the Gap Distant Representatives problem yields a polynomial time algorithm for Monotone Rectilinear Planar 3-SAT.
We give some of the reduction details, but defer the proofs of the claims to the full version.

\paragraph*{Reduction details}

We reduce directly from Monotone Rectilinear Planar 3-SAT.  

\begin{figure}[htb]
    \centering
    \includegraphics[width=.8\textwidth]{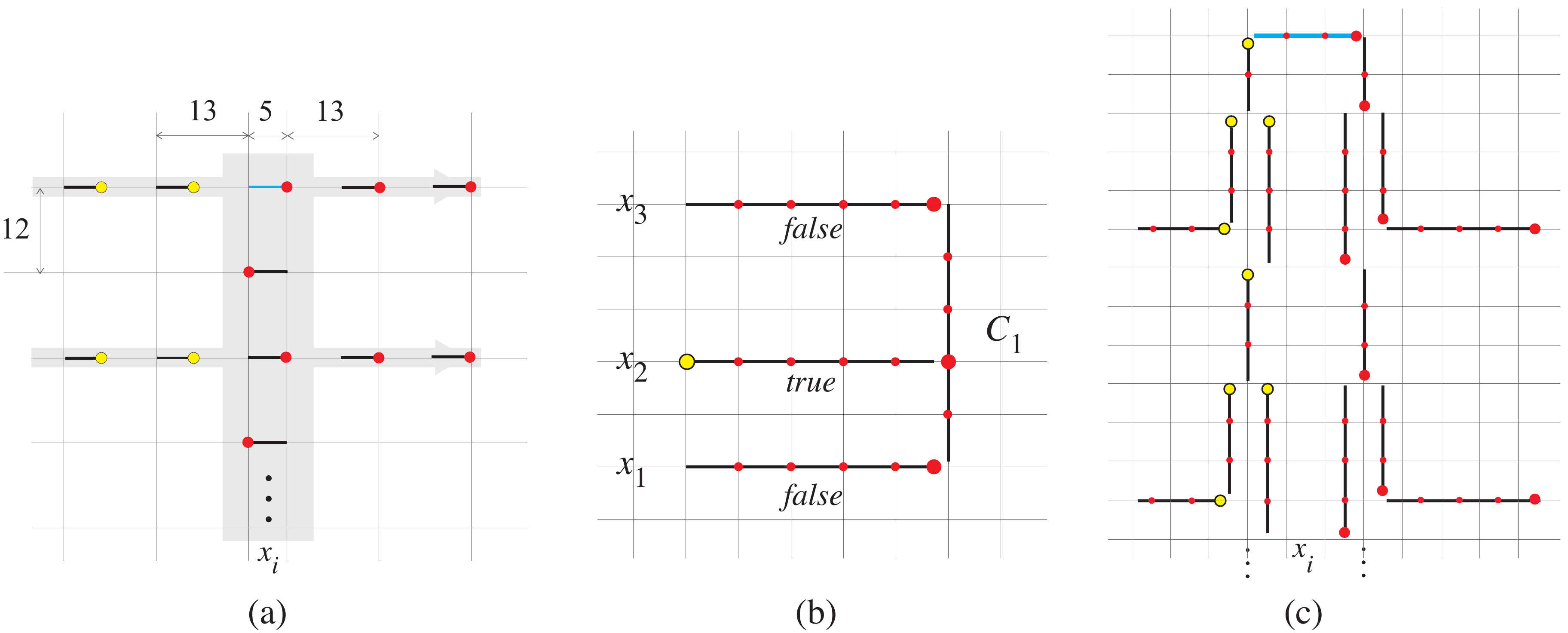}
    \caption{
    (a) A variable gadget for NP-hardness for $L_2$, based on Pythagorean triple $5,12, 13$.  To achieve $\delta = 13$ the representative point for the variable interval (in cyan) is forced to the left (true) or the right (false) in which case intervals on the right are also forced.
    (b) A clause gadget for the APX-hardness reduction, with three horizontal wires attached.  For clarity, segments are not drawn all the way to their endpoints.
     Wires $x_1$ and $x_2$ are in the false setting and wire $x_2$ is in the true setting, which allows the representative point for $C_1$ to be placed where the $x_2$ wire meets it, while keeping representative points at least distance 1 apart.
    (c) The basic splitter gadget for APX-hardness for $L_{\infty}$ placed on the half grid and showing two wires extending left and two right.  The variable segment (in thick cyan) for the variable $x_i$ has its representative point (the large red dot) at the right, which is the false setting.
    The representative points shown by large red/yellow dots are distance at least 1 apart in $L_\infty$.
     }
    \label{fig:approx-plan}
\end{figure}

\subparagraph*{Wire.} 
A wire is a long horizontal segment 
with 0-length segments at unit distances along it, except at its left and right endpoints. 
See Figure~\ref{fig:approx-plan}(b). 
The representative point for a 0-length segment must be the single point in the segment (shown as small red dots in the figure).
As before, a wire is directed from the variable gadget to the clause gadget.
We distinguish a 
``false setting'' where the wire has its representative point within distance 1 of its forward end (at the clause gadget) and a ``true setting'' where the wire has its representative point within distance 1 of its tail end (at the variable gadget).

\subparagraph*{Clause gadget.}  
A clause gadget is a vertical segment.  Three wires corresponding to the three literals in the clause meet the vertical segment as shown in Figure~\ref{fig:approx-plan}(b).  There are 0-length segments at unit distances along the clause interval except where the three wires meet it.

\subparagraph*{Variable gadget.} 
A variable segment has length 3, with two 0-length segments placed 1 and 2 units from the endpoints.  A representative point in the right half corresponds to a false value for the variable, and a representative point in the left half corresponds to a true value. 
In order to transmit the variable's value to all the connecting horizontal wires we build a ``splitter'' gadget.  The basic splitter gadget for $L_\infty$ is shown in Figure~\ref{fig:approx-plan}(c).  The same splitter gadget works for the other norms but we can improve the lower bounds using modified splitter gadgets as described 
in the full version.


\section{Conclusions}

We gave good approximation algorithms for the distant representatives problem for rectangles in the plane using a new technique of ``imprecise discretization'' where we limit the choice of representative points not to a discrete set but to a set of one-dimensional ``shapes''.  This technique may be more widely applicable, and can easily be tailored, for example by using a weighted matching algorithm to prefer representative points near the centres of rectangles. 

We also gave the first explicit lower bounds on approximation factors that can be achieved in polynomial time for distant representatives problems.  

Besides the obvious questions of improving the approximation factors, the run-times, or the lower bounds, we mention several other avenues for further research.

\begin{enumerate}
\item Is the distant representatives problem for rectangles in $L_2$ hard for existential theory of  the reals? Recently, some packing problems have been proved $\exists \mathbb{R}$-complete~\cite{abrahamsen2020framework}, but they seem substantially harder.

\item Is there a good [approximation] algorithm for any version of distant representatives for a \emph{lexicographic} objective function.  For example, suppose we wish to maximize the smallest distance between points, and, subject to that, maximize the second smallest distance, and so on.
Or suppose we ask to  lexicographically maximize the sorted vector consisting of the $n$ distances from each chosen point to its nearest neighbour.
For the case of \emph{ordered} line segments in 1D there is a linear time algorithm to lexicographically minimize the sorted vector of distances between successive pairs of points~\cite{biedl2021dispersion}.  It is an open problem to extend this to unordered line segments.

\item What about weighted versions of distant representatives? Here each rectangle $R$ has a weight $w(R)$, and rather than packing disjoint balls of radius $\delta$ we pack disjoint balls of radius $w(R) \delta$ centred at a representative point $p(R)$ in $R$.  Again, there is a solution for ordered line segments in 1D~\cite{biedl2021dispersion}. 

\end{enumerate}

\paragraph*{Acknowledgements.}
We thank Jeffrey Shallit for help with continued fractions, and we thank anonymous reviewers for their helpful comments.

\bibliographystyle{plainurl}   
\bibliography{distant}

\newpage

\begin{appendix}

\section{Decreasing the runtime for the decision problem}
\label{app:runtime}

In the main part of the paper, we proved a run-time of $O(n^{2.5})$ for {\sc Placement}, with the bottleneck being the runtime for finding the matching (all other aspects take $O(n^2\log n)$ time).  In this section, we show how to reduce the runtime for finding the matching to $O(n^{1.5}\log n)$ using a data structure containing the blocker shapes from the matching graph,  which hence decreases the runtime for {\sc Placement} to $O(n^2\log n)$.

We follow the idea that 
Cabello \cite{cabello2007approximation} used (in his Lemma 7) that speeds up each phase of the Hopcroft and Karp algorithm to $O(n\log n)$ time. 
We can follow Cabello's approach exactly if we have a data structure for blocker shapes that satisfies the following properties.

\begin{enumerate}
    \item Inserting all blocker shapes into the data structure takes linear time. 
    \item Inserting or deleting a blocker into the data structure takes $O(\log n)$ time.
    \item Given an input rectangle $r$, return a \emph{witness, i.e.,} any blocker shape $b$ in the data structure that touches the rectangle, otherwise return none, in $O(\log n)$ time.
\end{enumerate}

In Cabello's algorithm, blocker shapes are just points, and the input shapes are squares, so Cabello uses the orthogonal range query data structure found in \cite{datastruct}. 


Our blockers are not points, so instead we need a different data structure. Our data structure wraps around two of the orthogonal range query data structures from \cite{datastruct}, call them $\mathcal{D}_1$ and $\mathcal{D}_2$. To insert a blocker $b$, we insert the topmost and rightmost grid points of the blocker shapes into $\mathcal{D}_2$, and insert the rest of $b$'s grid points into $\mathcal{D}_1$. To delete $b$, delete all $O(1)$ grid points from $\mathcal{D}_1$ and $\mathcal{D}_2$. Initializing can be done by initializing  $\mathcal{D}_1$ and $\mathcal{D}_2$ with the appropriate grid points. Initialization takes time linear in the number of blockers inserted initially, which will be $O(n^2)$. Insertion and deletion take $O(\log n)$ time.


The final question is how to find a witness blocker shape in our data structure that intersects the given input rectangle $r$. First, if the rectangle is small, return none, as these rectangles are not considered to be touching any blockers. Otherwise, if $r$ intersects a grid point of a blocker in the data structure, we can just query $\mathcal{D}_1$ and $\mathcal{D}_2$ for a grid point, and return the blocker it belongs to. If $r$ only intersects a segment of a blocker $b$ in the data structure, let's say it is the vertical edges of the rectangle hitting a horizontal blocker segment without loss of generality. Then round the vertical edges of $r$ down to the nearest grid edge, call this rectangle $r'$. $r'$ will hit the left or middle or bottom grid point of $b$, so we can query $\mathcal{D}_1$ with $r'$. Similarly, round the horizontal edges of $r$ down and perform a query on $\mathcal{D}_1$ with this rounded rectangle. If any of the queries fail, then there can be no witness. In particular, if a rounded rectangle intersects a left/middle/bottom grid point of blocker $b$, then the original $r$ must have intersects $b$. This operation takes $O(\log n)$ time.


With this data structure, the exact same argument as in Cabello's proof of Lemma 7 holds. We will briefly reiterate the idea of the proof. At the start of each phase of the Hopcroft and Karp matching algorithm, we require the data structure contain all blocker shapes from the bipartite matching graph $H^-$. The blockers are inserted into the data structure in $O(n^2)$ time before the first phase, and later we'll describe how to reset the data structure at the end of each phase. Next, note the observation that the number of vertices in all layers of the layered graph computed by the Hopcroft and Karp algorithm, excluding the last layer with at least one exposed vertex, has at most $2n$ vertices, as there are at most $n$ rectangles and at most $n$ matched blocker shapes.
Following Cabello's argument, we can compute the layered graph without the last layer using a modified version of breadth first search in $O(n\log n)$ time. Edges adjacent to a rectangle vertex are found by repeatedly querying for a witness blocker shape until none is returned, then deleting the witness from the data structure, and using the graph edge from the rectangle to the blocker shape. This is opposed to constructing the graph's adjacency list to perform breadth first search, as there are too many edges to do breadth first search in the usual way. We will end up partially constructing the last layer, but will stop after seeing the first exposed (not matched) blocker. The data structure will not contain any vertices from before the last layer, and all other vertices remain (at most $n$ blocker vertices from the last layer may have been deleted earlier and can be reinserted). Then, the vertex disjoint augmenting paths are computed using depth first search on the layered graph, but using the data structure to find edges towards the last layer. Deleting exposed vertices (witness blocker shapes) from the data structure ensures the vertices are not in two different augmenting paths. Only $O(n)$ queries will be used in this step, as there are at most $O(n)$ vertices in the second last layer. 

Each step of each phase takes $O(n\log n)$ time, as $O(n)$ insertions, deletions and witness queries are performed. Each phase must start with the data structure being full, so we reinsert all deleted blocker vertices in $O(n\log n)$ time. With $\sqrt{n}$ phases, the total runtime to find the matching is $O(n^{1.5}\log n).$ This runtime is ignoring the time to initialize the data structure, but this time is not the bottleneck either. Therefore, the runtime for {\sc Placement} is $O(n^2 \log n)$ time.

\section{Proof of Lemma~\ref{lemma:n-cubed}}
\label{app:candidates}

In this section, we prove Lemma~\ref{lemma:n-cubed}, which states that for the $L_\infty$-distance the optimal value $\delta^*_\infty$ takes one of $O(n^3)$ possible values.  More specifically, 
$\delta_\infty^*=(t-b)/k$ where $k \in \{1,\dots,n\}$ and either $t$ is the top coordinate $t(R)$ of some rectangle  $R$ and $b$ is the bottom coordinate $b(R')$ of a different rectangle $R'$, or $t$ is the right $r(R)$ coordinate of some rectangle $R$ and $b$ is the left coordinate $\ell(R')$ of a different rectangle $R'$.

To prove this, without loss of generality, assume that all rectangle corners have
non-negative coordinates.  Among all optimal solutions, take the one
that minimizes $\sum_R (x(R)+y(R))$, where $(x(R),y(R))$ is the representative of $R$.
For any rectangle $R$, then
either $x(R)=\ell(R)$ or $x(R)=x(R')+\delta^*_\infty$
for some other rectangle $R'$.  For if neither were the case, then the point
$(x(R){-}\varepsilon,y(R))$ (for sufficiently small $\varepsilon$) would also 
lie within $R$, and have distance $\delta^*_\infty$ or more from all other
points.  

In consequence, we can write $x(R)=\ell(R')+k\delta^*_\infty$
for some integer $k\in \{0,\dots,n\}$, where $R'$ is some other
rectangle (possibly $R=R'$).
Namely, either $x(R)=\ell(R)$ (then $k=0$) or $x(R)=x(R')+\delta^*_\infty$ for some rectangle $R'$,
in case of which $x(R')<x(R)$ and by induction (on the rank of $R$ with respect to $x(R)$) we have $x(R')=\ell(R'')+k'\delta_\infty^*$ for some $R'',k'$
and so $x(R)=\ell(R'')+(k'+1)\delta_\infty^*$.

In a completely symmetric way we can show that for any rectangle $R$ we 
have $y(R)=b(R')+k\delta_\infty^*$ for some rectangle $R'$ and $k\in \{0,\dots,n\}$.
Now we have three cases.  Assume first that for some rectangle $R$ we have $x(R)=r(R)$ 
and $x(R)=\ell(R')+k\delta_\infty^*$ for some $k>0$ and rectangle $R'$.  Then the 
claim holds since $\delta_\infty^*=(x(R)-\ell(R')/k = (r(R)-\ell(R'))/k$.    
Assume next that for some $R,R'$ we have $y(R)=t(R)$ and
$y(R)=b(R')+k\delta_\infty^*$ for some $k>0$.  Then again the claim holds.

Now assume neither of the above cases holds.  We show that this contradicts 
maximality of $\delta_\infty^*$.  Define a new solution $(x'(R),y'(R))$ by 
choosing a sufficiently small $\varepsilon$ and essentially scaling the 
solution by $1{+}\varepsilon$ in both directions.  However, we must be careful 
to ensure that this is a solution.  Formally, if $x(R)\neq r(R)$, then set 
$x'(R)=(1{+}\varepsilon)x(R)$; otherwise keep $x'(R)=x(R)$.  Proceed symmetrically 
for $y'(R)$.  Clearly this is a set of representatives (for small enough 
$\varepsilon$), and we claim that its distance exceeds $\delta_\infty^*$.    
To see this, consider two rectangle $R,R'$ whose representatives had distance 
exactly $\delta_\infty^*$ in the first solution.  (For all other pairs of 
rectangles, the distance continues to exceeds $\delta_\infty^*$ if 
$\varepsilon$ is chosen sufficiently small.)  Because we are in the $L_\infty$-distance, we know that $d(R,R')$ is achieved in one of the two cardinal directions, 
so (say) $x(R)=x(R')+\delta_\infty^*$.  This 
implies $x(R)=x(R'')+k\delta_\infty^*$ for some $k>0$ and some rectangle $R''$.  Therefore
$x(R)\neq r(R)$, otherwise we would have been in the first case.  
So $x'(R)=(1{+}\varepsilon)x(R)=(1{+}\varepsilon)(x(R')+\delta_\infty^*) \geq x'(R')+(1{+}\varepsilon)\delta_\infty^*$,
the desired contradiction.

\section{Proofs of Lemmas~\ref{lem:closed=on-right}, \ref{lemma:detect-critical}, and~\ref{lemma:critical-bits-bound}}
\label{app:proofs}

In this section, we give three omitted proofs.  Recall that Observation~\ref{obs:events} characterized events, one of which
must have happened if {\sc Placement} goes from success to failure.  We restate this observation here for ease of reference.

\begin{observation}
Assume {\sc Placement}($\delta$) succeeds but {\sc Placement}($\delta'$) fails for some $\delta'>\delta$.  Then at least one of the following
events occurs as we go from $\delta$ to $\delta'$:
\begin{enumerate}
\item the set of small/big rectangles changes,
\item the distance between the centres of two small rectangles equals $\hat{\delta}$ for some $\delta\leq \hat{\delta}<\delta'$,
\item the set of blocker-shapes owned by a small rectangle increases,
\item the set of blocker-shapes intersecting a big rectangle decreases.
\end{enumerate}
\end{observation}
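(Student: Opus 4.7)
The plan is to prove the contrapositive: if none of the four events occur as we vary the parameter from $\delta$ to $\delta'$, then {\sc Placement}($\delta'$) succeeds, contradicting the hypothesis. I would split the argument according to how {\sc Placement}($\delta'$) could fail, namely at Step~3 (small-rectangle centres within $\delta'$ of each other) or at Step~8 (no matching in $H^-$ covering all big rectangles).

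Suppose first Step~3 fails at $\delta'$, so some small rectangles $r,r'$ at $\delta'$ have $d_\ell(p(r),p(r'))<\delta'$. If either $r$ or $r'$ has a different classification at $\delta$, then event~1 has occurred. Otherwise both are small at $\delta$, and since {\sc Placement}($\delta$) passed Step~3 we have $d_\ell(p(r),p(r'))\ge \delta$. Setting $\hat{\delta} := d_\ell(p(r),p(r'))$ gives $\delta\le \hat{\delta}<\delta'$, which is event~2.

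Now suppose Step~3 passes at $\delta'$ but Step~8 fails, and assume none of events~1,~3,~4 occur. I claim the bipartite graph $H$ at $\delta'$ contains every edge of $H$ at $\delta$ on the same vertex sets. The rectangle classification is preserved by the absence of event~1. For any blocker shape $B$ unowned by all small rectangles at $\delta$, the absence of event~3 means the owned set of every small rectangle never grows during $[\delta,\delta']$; hence $B$ remains unowned at $\delta'$, so $B$ is still a vertex of $H$ at $\delta'$. Similarly, the absence of event~4 means the intersection set of every big rectangle with blocker shapes never shrinks, so every edge of $H$ at $\delta$ persists in $H$ at $\delta'$. The matching $M$ that {\sc Placement}($\delta$) produced, which lived in $H^-\subseteq H$ at $\delta$, is therefore a matching in $H$ at $\delta'$ covering every big rectangle.

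To finish, I would invoke the exchange argument from the proof of Lemma~\ref{lem:approxSucceeds}: any big rectangle $R$ whose $M$-partner has been trimmed out of $H^-$ at $\delta'$ still has $n$ neighbours in $H^-$ at $\delta'$, of which at most $n-1$ are used by other big rectangles, so we can reroute the edge of $R$ to a free neighbour. The resulting matching covers all big rectangles in $H^-$ at $\delta'$, so {\sc Placement}($\delta'$) succeeds, contradicting the hypothesis. The main obstacle is fixing the precise semantics of "increases" and "decreases" in events~3 and~4: the argument requires reading them as "at some $\hat{\delta}\in[\delta,\delta']$ a new ownership appears" and "at some $\hat{\delta}$ an intersection is lost", so that the no-event assumption forces the ownership sets to be monotonically non-increasing and the intersection sets to be monotonically non-decreasing over the interval, which in turn yields the edge-containment needed above.
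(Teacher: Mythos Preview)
Your proposal is correct and follows essentially the same contrapositive route as the paper's brief justification: assume no event occurs and conclude that {\sc Placement}($\delta'$) must succeed. You are in fact more careful than the paper on one point---the paper's justification speaks only of ``the graph'' gaining edges, whereas you explicitly separate $H$ from $H^-$ and invoke the exchange argument of Lemma~\ref{lem:approxSucceeds} to pass from a matching in $H$ at $\delta'$ to one in $H^-$; this is a genuine detail the paper elides. Your closing caveat about the semantics of ``increases'' and ``decreases'' is also apt: the paper indeed treats these as pointwise events over the interval (cf.\ the proofs of Lemmas~\ref{lem:closed=on-right} and~\ref{lemma:critical-bits-bound}), which is exactly the reading your argument needs.
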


To justify this, note that if event (1) does not happen, then the set of big rectangles is the same at $\delta$ and at $\delta'$. The big rectangles form one side of the graph on which the matching algorithm is run.  Furthermore, if events (3) and (4) do not happen, then the graph can only gain edges as we go from $\delta$ to $\delta'$.  Also, nothing changes with respect to event (2).  Thus if {\sc Placement} succeeds at $\delta$ it will also succeed at $\delta'$.  

Observe that coordinates of grid points (hence of blocker shapes) are linear in $\delta$ and hence change continuously over time; we think of them as
``shifted'' (and for blocker shapes, ``scaled'') as we change $\delta$.  Also recall that both rectangles and blocker-shapes are closed.  Therefore we have:

\begin{observation}
\label{obs:big-shift}
If a 
rectangle $R$ does not intersect a blocker-shape $B$ at time $\delta$ then
$R$ does not intersect (the shifted and scaled) $B$ in a neighbourhood of $\delta$.
\end{observation}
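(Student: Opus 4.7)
The plan is to prove Observation~\ref{obs:big-shift} by a standard compactness-plus-continuity argument. Both the rectangle $R$ (at all times) and the blocker-shape $B$ (at any fixed time) are compact subsets of the plane. Since $R\cap B=\emptyset$ at time $\delta$, the $L_\ell$-distance between them attains a strictly positive minimum $\eta := d_\ell(R,B)>0$. The goal is to show that if $\delta'$ is sufficiently close to $\delta$, then the (shifted and scaled) blocker-shape $B(\delta')$ is contained in the open $\eta$-neighbourhood of $B(\delta)$, which is disjoint from $R$.

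To carry this out, first I would recall that a blocker-shape is the union of a constant number of grid segments incident to its anchor $(i\gamma_\ell(\delta),j\gamma_\ell(\delta))$, where $\gamma_\ell(\delta)\in\{\delta/2,\,\delta/\sqrt{2},\,\delta\}$ is linear in $\delta$ (for $L_2$ we track $\gamma_2^2$, but the position of $B(\delta)$ as a subset of the plane is still a continuous function of $\delta$). Hence the anchor position and the segment length both depend continuously on $\delta$. It follows that the Hausdorff distance $d_H(B(\delta'),B(\delta))$ tends to $0$ as $\delta'\to\delta$. Pick $\epsilon_0>0$ so small that $|\delta'-\delta|<\epsilon_0$ implies $d_H(B(\delta'),B(\delta))<\eta/2$. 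Then every point of $B(\delta')$ lies within distance $\eta/2$ of some point of $B(\delta)$, and every point of $B(\delta)$ is at distance at least $\eta$ from $R$, so by the triangle inequality every point of $B(\delta')$ is at distance at least $\eta/2$ from $R$. In particular $R\cap B(\delta')=\emptyset$ for all $\delta'$ in $(\delta-\epsilon_0,\delta+\epsilon_0)$, which is the desired neighbourhood.

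The only mild subtlety, and the step I would be most careful about, is the $L_2$ case where the algorithm represents $\delta^2$ rather than $\delta$ for bit-complexity reasons. This is purely a representational issue and does not affect the geometric statement: whether we parametrise the family of blocker-shapes by $\delta$ or by $\delta^2$, the actual point-set $B$ varies continuously as a subset of $\mathbb{R}^2$, so the same compactness argument applies. No other case analysis is required, since the argument only uses that $R$ and $B$ are closed and bounded and that $B$ depends continuously on the time parameter.
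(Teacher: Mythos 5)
Your proof is correct and is essentially the formal elaboration of the paper's own one-line justification: the paper merely notes that blocker-shape coordinates are linear in $\delta$ (hence continuous) and that both $R$ and $B$ are closed, and leaves the compactness-plus-continuity argument implicit. Your write-up, including the positive minimum distance $\eta$ from compactness and the Hausdorff-distance bound, is exactly what that terse remark is gesturing at, and your handling of the $L_2$ representational issue is a reasonable added clarification.
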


\begin{proof}[Proof of Lemma~\ref{lem:closed=on-right}]
To prove that the intervals where {\sc Placement} succeeds are closed on the right, we will show that the intervals where it fails are open on the left.
Consider any value $\delta'$ such that {\sc Placement}$(\delta')$ fails.
We must prove that {\sc Placement} fails at  
$\delta := \delta' - \epsilon$
for sufficiently small $\eps$ (determined below). We will apply 
Observation~\ref{obs:events}.
We must show that 
none of events (1)-(4) can happen between $\delta$ and $\delta'$:
\begin{description}
\item[Event (1).]
We claim that the set of big rectangles is the same throughout.  Namely, if a rectangle is big at time $\delta'$ then (by our complicated definition)
it is also big at all times between $\delta$
and $\delta'$ presuming $\eps$ was chosen smaller than the $\eps_0$ in our definition of big rectangles.
Now suppose rectangle $R$ is small at $\delta'$.  This means that either $R$ does not intersect any blocker-shape at $\delta'$ (in which case, by Observation~\ref{obs:big-shift}, $R$ is small at $\delta' - \epsilon$ for sufficiently small $\epsilon$) or
the boundary of $R$ contacts a blocker-shape, but shifting to $\delta'- \epsilon$ for any $\epsilon>0$ removes the contact (in which case, $R$ is small at $\delta' - \epsilon$).

\item[Event (2).] Since there are $\Theta(n^2)$ values of $\hat{\delta}$ where the distance between rectangle-centres equals $\hat{\delta}$, we can choose $\eps$ small enough such that no such value falls between $\delta'{-}\eps=\delta$ and $\delta'$.

\item[Event (3).] Recall that
a small rectangle $r$ owns blocker-shape $B$ at time $\delta$ if some point of $B$
belongs to the  \emph{open} ball of the appropriate norm and radius 
around the centre point $p(r)$ of rectangle $r$.   Since $B$ changes continuously with time, therefore for sufficiently small $\eps$ no such event occurs 
in the interval $(\delta'-\eps,\delta')$.


\item[Event (4).]
By Observation~\ref{obs:big-shift}
the set of blocker-shapes intersected by $R$ can only increase in the interval.
\end{description}

We also claimed that a critical value $\delta$ provides an $f_\ell$-approximation.  
This holds because {\sc Placement}($\delta+\eps$) fails for all small $\eps>0$, hence $\delta_\ell^*<f_\ell(\delta+\eps)$ for all $\eps>0$ and so $\delta_\ell^* \leq f_\ell\delta$.
\end{proof}

\begin{proof}[Proof of Lemma~\ref{lemma:detect-critical}]
To test if $\delta$ is a critical value, we run
{\sc Placement} at $\delta$ (we want it to succeed) and then run {\sc Placement}  ``symbolically''
at $\delta + \epsilon$ where $\epsilon$ an infinitesimal. 
The idea is that when the algorithm performs any test that depends on $\delta$, we see how the test behaves at $\delta + \eps$ instead.  Alternatively, we can use the bounds of Lemma~\ref{lemma:critical-bits-bound} to get a specific small-enough value of $\eps$ to use.
\end{proof}

\begin{proof}[Proof of Lemma \ref{lemma:critical-bits-bound}]
If $\delta$ is a critical value, then, by definition, {\sc Placement($\delta$)} succeeds and {\sc Placement($\delta + \epsilon$)} fails for any sufficiently small $\epsilon > 0$.    
By Observation~\ref{obs:events} there must have been an event as we go from $\delta$ to $\delta+\eps$, and since $\eps$ is arbitrarily small the event must have been at $\delta$.

Our plan is to show that for any of these events, $\delta$ (or $\delta^2$ in the case of $L_2$) is a rational number with bounded numerator and denominator.
We group events (1) and (4) together and deal with the following three cases:


\remove{ 
We complete this section by giving the proofs of Lemmas~\ref{lem:closed=on-right}, \ref{lemma:detect-critical}, and~\ref{lemma:critical-bits-bound}.


The lemmas use 
the idea of continuously transforming from one grid to another. 
First, for a given value $\delta$, the grid points for the {\sc Placement} algorithm are $\mathcal{G}(\delta) = \{ \gamma_\ell(\delta)  \cdot p : p\in \mathbb{Z}^2 \}$, recall that $\gamma_\ell = \gamma_\ell(\delta) =\frac{1}{2^{1/\ell}}\delta$.  Every point $p \in \mathbb{Z}^2$ has a corresponding point $\gamma_\ell(\delta)\cdot p \in \mathcal{G}(\delta)$, and we can think of $p$ moving continuously as a function of $\delta$. 
Combinatorially, instead of thinking of multiple calls to the {\sc Placement} algorithm as independent, we think of grid points and blocker shapes as objects that move continuously as we increase or decrease $\delta$ (blocker shapes will also continuously stretch or contract, as the distance between grid points increases). We will refer to 
$\delta$ as a ``time" variable, and we can analyze how any given blocker shape ``moves through time" as we increase or decrease $\delta$. With this in mind, we can think of the matching graph $H$ as adding or losing blockers shapes at certain values of $\delta$, or some other change happens, which the definition of \emph{critical} values  tries to capture.

\begin{proof}[Proof of Lemma~\ref{lem:closed=on-right}]
We only need to show that the intervals where Placement succeeds are closed on the right.  It follows immediately that any critical value provides an $f_\ell$ approximation.

We prove that intervals of values where Placement fails are open on the left, i.e., if Placement fails for $\delta$ then it fails for $
\delta - \epsilon$ for sufficiently small $\epsilon > 0$. 
So suppose Placement fails at $\delta$.
We examine each step of the algorithm where success/failure is decided.

\begin{description}
\item[Step 1.] We claim that
\changed{for $\epsilon$ sufficiently small} the set of big rectangles is the same for $\delta$ as for $\delta - \epsilon$. One direction is ensured by our definition of big rectangles: if $R$ is big at $\delta$ then by definition, it is big at $\delta - \epsilon$. 
\changed{For the other direction, we use this observation:}

\begin{observation}
\label{obs:big-shift}
If a \attention{\st{big}} rectangle $R$ does not intersect a blocker-shape $B$ at $\delta$ then, since $R$ and $B$ are closed, $R$ does not intersect (the shifted) $B$ in a \changed{small enough} neighbourhood of $\delta$.  
\end{observation}

\changed{So suppose rectangle $R$ is small at $\delta$.  This means that either $R$ does not intersect any blocker-shape at $\delta$ (in which case, by the observation, $R$ is small at $\delta - \epsilon$ for sufficiently small $\epsilon$) or
the boundary of $R$ contacts a blocker-shape, but shifting to $\delta - \epsilon$ for any $\epsilon>0$ removes the contact (in which case, $R$ is small at $\delta - \epsilon$).
}


\item[Step 3.] If this step fails at $\delta$ then two centres of small rectangles have distance $< \delta$.  Then at $\delta - \epsilon$ (where the rectangles are still small) their centres have distance $< \delta - \epsilon$ for sufficiently small $\epsilon$, so this step fails at $\delta - \epsilon$.

\item[Step 4.] We claim that 
if small rectangle $r$ owns blocker-shape $B$ at $\delta$ then $r$ owns the (shifted) blocker-shape $B$ at $\delta - \epsilon$ for $\epsilon$ sufficiently small. 
This just depends on the definition of ``ownership'' via the  \emph{open} ball of the appropriate norm and radius 
around the centre point $p(r)$ of rectangle $r$. 
More precisely, if there is a point $q$ on $B$ within this open ball then, if $\epsilon$ is sufficiently small, the shift of $B$ (i.e., the shift of point $q$) plus the shrinking of the ball's radius will leave 
the shifted point $q$ still  within the shrunken ball around $p(r)$, so $r$ still owns the blocker-shape at $\delta - \epsilon$.

\remove{
$d_\ell(p(r),q) < \delta$, say $d(p(r),q) =  \delta - \tau$ then, if $\epsilon$ is very small, then the shift of $B$ (i.e., the shift of point $q$) plus the shift of the ball to radius $\delta - \epsilon$ will be less than $\tau$ so the shifted point $q$ will still be within $\delta - \epsilon$ of $p(r)$ so $r$ still owns the blocker-shape at $\delta - \epsilon$.
}

This claim implies that the set of owned blocker-shapes can only increase from $\delta$ to $\delta - \epsilon$ so the bipartite graph $H$ does not gain any edges because of this step. 

\item[Steps 5--8.] Observation~\ref{obs:big-shift} above implies that 
 for each big rectangle, the set of blocker-shapes it intersects can only decrease from $\delta$ to $\delta - \epsilon$.  Thus, if the matching algorithm fails at $\delta$ it will also fail at $\delta - \epsilon$.
\end{description}

This completes the proof of Lemma~\ref{lem:closed=on-right}.
\end{proof}

\begin{proof}[Proof of Lemma~\ref{lemma:detect-critical}]
To test if $\delta$ is a critical value, we run 
Placement at $\delta$ (we want it to succeed) and then run Placement  ``symbolically'' 
at $\delta + \epsilon$ where $\epsilon$ an infinitesimal. 
\anna{Say more?} 
\end{proof}



{\bf Proof of Lemma \ref{lemma:critical-bits-bound}}
If $\delta$ is a critical value, then, by definition, the Placement algorithm succeeds at $\delta$ and fails at $\delta + \epsilon$ for any sufficiently small $\epsilon > 0$.
This can only happen if one of the following four events occurs between $\delta$ and $\delta + \epsilon$:
\begin{enumerate}
\item the set of small/big rectangles changes,
\item the distance between the centres of two small rectangles is exactly $\delta$,
\item the set of blocker-shapes owned by a small rectangle changes,
\item the set of blocker-shapes intersecting a big rectangle changes.
\end{enumerate}

To justify this, note that if none of these events occur, then the matching algorithm is run on the same graph, so its output will not change.

Our plan is to show that for any of these events, $\delta$ (or $\delta^2$ in the case of $L_2$) is a rational number with bounded numerator and denominator.
We group events (1) and (4) together and deal with the following three cases:

} 

\begin{description}
\item[Case A.] Event (1) or (4). 
These events only occur when a line of the blocker-shape grid becomes coincident with a line of the $D \times D$ grid of the input rectangles, i.e., when $i \gamma_\ell = k$, for some integer $k$, $1 \le k \le D$, and some index $i$, $ 1 \le i \le D/\gamma_\ell$.

In $L_1$, we have $\gamma_1 = \delta/2$, so $\delta = 2k/i$.  The numerator is $2k$ which is $\le 2D$.  The denominator is $i$ which is $\le D/\gamma_1 = 2D / \delta \le 2Dn$, where we use the fact that $\delta \ge 1/n$ (Claim~\ref{claim:delta-bounds}).  

In $L_2$, we have $\gamma_2 = \delta / \sqrt{2}$, so $\delta^2 = 2k^2 / i^2$.  Following the same steps as above, the numerator is $\le 2D^2$ and the denominator is $\le 2D^2 n^2$. 


\item[Case B.] Event (2).  $\delta = d_\ell (c_1,c_2)$ where $c_1$ and $c_2$ are the centres of two rectangles, i.e., the average of two opposite corners.  In $L_1$,  $\delta$ has a numerator $\le 2D$ and denominator $2$, and in $L_2$, $\delta^2$ has a numerator $\le 2D^2$ and denominator $4$.

\item[Case C.] Event (3).  Recall that the algorithm decides ownership using $L_1$ distance even when we are working in $L_2$.  
Thus, event (3) only occurs when the centre point of a rectangle lies on the boundary of the $L_1$-ball 
$C'$ shown by the cyan boundary in Figure~\ref{fig:blocker-ball}(left).  This means that  
a diagonal line of slope $\pm 1$ through points of the blocker-shape grid becomes coincident with a half-integer point, i.e., $(i \pm j) \gamma_\ell  = k/2$, for some integer $k$, $1 \le k \le 2D$, and indices $i, j$,  $ 1 \le i \le D/\gamma_\ell$.

In $L_1$, we have $\gamma_1 = \delta/2$,
so $\delta = k/(i \pm j)$.  The numerator is $\le 2D$ and the denominator is $\le 4Dn$ by the same analysis as in Case A.

In $L_2$, we have $\gamma_2 = \delta / \sqrt{2}$, so $\delta^2 = k^2/(i \pm j)^2$.  The numerator is $\le D^2$ and the denominator is $\le 8D^2n^2$.
\end{description}
This completes the proof of Lemma~\ref{lemma:critical-bits-bound}.
\end{proof}

\section{Further details on hardness results}
\label{app:hardness}

\remove{ 
\anna{remove duplicate stuff}
In this section we give NP-hardness and 
APX-hardness 
results for the 
distant representatives problem.
We first show that, even for the special case of unit horizontal segments, the decision version of the problem is NP-complete for the $L_1$ and $L_\infty$ norm, and NP-hard for the $L_2$ norm. (For the $L_2$ norm, bit  complexity issues prevent us from placing the problem in NP.) 
The NP-hardness result for unit horizontal intervals in the $L_2$ norm was proved previously by Roeloffzen in  his Master's thesis~\cite[Section 2.3]{roeloffzenfinding}. 
However, he did not take care to bound  the bit complexity of the coordinates of the interval endpoints that he constructed in his  reduction.  
Our reduction for $L_2$ is similar to his, and we fill in these missing details.

Next, we enhance our reductions to ``gap-producing reductions'' 
 to obtain lower bounds on the approximation  factors that can be  achieved in polynomial time.  Here, our goal is to compare with 
 our approximation algorithms
 for the case  of rectangles, so rather than  considering the special case of unit horizontal segments, we consider the more general case of horizontal and vertical segments in  the plane. 
Our main result is that,
assuming P $\ne$ NP, no polynomial time approximation algorithm achieves a factor better than
$1.5$ in $L_1$ and $L_\infty$ and $1.4425$ in $L_2$.

Our reductions are from Monotone Rectilinear Planar 3-SAT,
which is NP-complete by a result of de~Berg and Khosravi~\cite{deberg2010optimal}.
In this version of 3SAT, ``monotone'' means that every clause has either three positive literals or three negative literals.  ``Rectilinear planar'' means that we are given a representation by non-crossing line segments in the plane where each variable is represented by a vertical segment at $x$-coordinate $0$, each positive [negative] clause is represented by a vertical segment at a positive [negative, resp.] $x$-coordinate, and there is a horizontal line segment joining any variable to any  clause that contains it.  See Figure~\ref{fig:3-SAT-plan}(a) for an example instance of the problem.  Note that the representation is rotated $90^\circ$ from the original in~\cite{deberg2010optimal}, and the vertical segments corresponding to variables and  clauses are thickened to form constant width rectangles. For an instance of Monotone Rectilinear Planar 3-SAT with $n$ variables and $m$ clauses, the representation can be on an $O(m) \times O(n+m)$ grid.

\begin{figure}[tb]
    \centering
    \includegraphics[width=\textwidth]{Figures/3-SAT-plan.pdf}
    \caption{(a)  An instance of Monotone Rectilinear Planar 3-SAT. (b) The corresponding ``Short Clause Representation'' that gives our layout of variable gadgets, wires, and clause gadgets. (c) A detail of our construction for clause $C_1 = x_1 \vee x_2 \vee x_3$ in the $L_1$ norm
    showing how the truth-value setting $x_1 = $ False, $x_2 = $ True,  $x_3 = $ False, permits representative points (shown as red dots) at distance at least $\delta_1 = 2$.
}
    \label{fig:3-SAT-plan}
\end{figure}



\  




} 

\subsection{ NP-hardness results }
\label{sec:appendix-NP-hardness}

\remove{ 
\begin{theorem}
The decision version of the distant representatives problem for unit horizontal segments in the $L_1, L_2$ or $L_{\infty}$ norm is NP-hard.  
\end{theorem}

In Section~\ref{sec:bit-complexity} we show that the decision problems in fact lie in NP for the 
$L_1$ and $L_\infty$ norms even for the   more general   case of rectangles, and we discuss the bit complexity issues that prevent us from placing the decision problem in NP  for the $L_2$  norm.

As noted above, the NP-hardness result for unit horizontal intervals in the $L_2$ norm was proved previously by Roeloffzen in  his Master's thesis~\cite[Section 2.3]{roeloffzenfinding}.
However, he did not take care to bound  the bit complexity of the coordinates of the interval endpoints that he constructed in his  reduction.  
Our reduction for $L_2$ is similar to his, and we fill in these missing details. The core issue is that we need to construct (many) pairs of points $p$ and $q$ in the plane and bound the number of bits in their coordinates and in their $L_2$ distances
which requires the use of 
scaled Pythagorean triples.

Our reduction is from Monotone Rectilinear Planar 3-SAT.  
We first modify the representa\-tion 
to a ``Short Clause Representation''
where each clause rectangle has fixed height and is connected to its three literals via three ``wires''---the middle one remains horizontal, the bottom one bends to enter the clause rectangle from the bottom, and the top one bends twice to enter the clause rectangle from the far side.  See Figure~\ref{fig:3-SAT-plan}(b).
Each wire is directed from the variable to the clause, and represents a literal. 
The representation is still on an 
$O(m) \times O(n+m)$ grid.

To complete the  reduction to the distant representatives problem we 
replace the rectangles with variable and clause gadgets constructed from unit horizontal intervals, and also implement the  wires using such intervals.  
The details 
depend on the norm $L_\ell$, $\ell  = 1, 2, \infty$.
We also set a value of $\delta_\ell$ to obtain a decision problem that asks if  there is an assignment of a representative point to each interval that is \emph{valid}, i.e., such that no two points are closer than $\delta_\ell$.  
We set $\delta_1 = 2$, $\delta_2 = \frac{13}{5}$, and $\delta_\infty = \frac{1}{2}$.
An example of the construction for $L_1$ (with $\delta_1 = 2$) is shown in Figure~\ref{fig:3-SAT-plan}(c).

} 

Our constructions will ensure the following properties.

\begin{enumerate}[label=\textbf{P\arabic*},ref=P\arabic*]
    \item \label{prop:variable}
    Each variable gadget has a \emph{variable interval} whose  representative point can only be at its left endpoint (representing the True  value of the variable)  or at its right endpoint (representing the False value of the variable).
    With either choice, there is a valid assignment of representative points to all intervals in the variable gadget.   

    
    \item \label{prop:wire} 
    A wire is constructed as a sequence of intervals. 
    There are two special valid assignments of representative points to the intervals of a wire which we call the ``false setting'' and the ``true setting''. 
    Details will be given later on, but for now, we just note that 
    along the horizontal  portion of a wire, the false setting places the representative point of each interval at the interval's forward end (relative to the  direction  of  the wire).  See Figure~\ref{fig:3-SAT-plan}(c).
    
    The true/false settings for wires behave as follows. 
    If a wire corresponds to a literal that is false (based on  the left/right  position of the representative  point of the corresponding variable interval),  then the false setting is the \emph{only} valid assignment of representative points for the  intervals in  the   wire.
    If a wire corresponds to a literal that is true, then the true setting of the wire is \emph{a} valid assignment of representative points. 
    Note the   asymmetry here---the false setting is forced but the true setting is not.

    \item \label{prop:clause} Each clause gadget consists of one \emph{clause interval}.
    If all three wires coming in  to a clause gadget have the false setting then there is no valid assignment of a representative  point to the clause interval.  If at least one of the wires has a true setting then there is a valid assignment of a representative  point to the clause interval.
    
\end{enumerate}

\begin{lemma}
Any construction with the above properties gives a correct reduction from Monotone Rectilinear Planar 3-SAT to the decision version of the distant representatives problem.
\label{lem:propertiesLemma}
\end{lemma}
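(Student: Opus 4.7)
The plan is to establish both directions of the reduction by combining the three listed properties in the natural way: properties P1 and P2 provide the variable-to-literal translation, while P3 enforces clause satisfaction. I would phrase everything in terms of valid assignments of representative points, meaning assignments with all pairwise distances at least $\delta_\ell$.

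For the forward direction, I will assume $\Phi$ has a satisfying truth assignment and build a valid assignment of representative points. For each variable, I use the True/False branch of P1 to place the representative of its variable interval at the corresponding endpoint and extend this to a valid assignment on the rest of the variable gadget. For each wire, viewed as a literal, I choose the true setting if the literal is true and the false setting otherwise; P2 ensures this is a valid assignment along each wire and is consistent with whichever endpoint P1 chose for the incident variable interval. Finally, for each clause, at least one literal is true under the satisfying assignment, so at least one incoming wire is in the true setting, and the relevant half of P3 provides a valid representative point for the clause interval. Since all the gadgets are joined only through the shared wire intervals, and the settings agree where they meet, the assembled global assignment is valid.

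For the converse, I will assume the distant representatives instance has a valid assignment and extract a satisfying truth assignment. Using P1, each variable interval must have its representative at its left or right endpoint, giving a True/False value for the variable. For any wire whose literal is false under this reading, the forcing half of P2 asserts that the wire must be in the false setting. Now fix any clause $C$. If all three literals of $C$ were false, then all three wires entering $C$'s gadget would be in the false setting, and P3 would then forbid any valid representative point for the clause interval, contradicting validity of the global assignment. Hence every clause contains at least one true literal, so $\Phi$ is satisfied.

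The only thing to watch is the logical asymmetry in P2, namely that the false setting is forced on false literals but the true setting is only permitted on true literals; this asymmetry is exactly what is needed for the two directions above and is not an obstacle. All genuine difficulty in the reduction is deferred to verifying P1--P3 for each concrete gadget construction (which depends on the norm $L_\ell$ and on $\delta_\ell$), while the present lemma is purely the combinatorial glue and involves no further geometry.
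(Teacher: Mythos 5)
Your proof is correct and takes essentially the same approach as the paper: both directions combine P1 to translate variable-interval endpoints into truth values, P2 to propagate forced false settings (and permitted true settings) along wires, and P3 to witness or contradict clause satisfaction. The only cosmetic difference is that you make explicit the remark that gadgets interact only via shared wire intervals, which the paper leaves implicit.
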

\begin{proof}
We must prove that the  original instance, $\Phi$, of Monotone Rectilinear 3-SAT is satisfiable if and only if the constructed instance, $\mathcal I$, of the distant representatives problem has a valid assignment of representative points, i.e., an assignment of representative points such that any  two points are at least  distance $\delta_\ell$ apart.  

First suppose that $\Phi$ is satisfiable. By  Property~\ref{prop:variable} we can  choose a 
a valid assignment  of representative points to the intervals of the variable gadgets such that a variable being True/False corresponds to using the  left/right  endpoint (respectively) of  the variable interval. 
We then choose the true/false settings  of the wires  according to Property~\ref{prop:wire}---the  false setting  for wires of false literals and the true setting for wires of true literals. 
Since $\Phi$ is satisfiable, every clause contains a True literal. The corresponding  incoming wire has been given a true setting. Then,
by Property~\ref{prop:clause}, the clause interval has a valid assignment of representative  points. Thus $\mathcal I$ has a valid assignment of representative points.

For the other direction, suppose $\mathcal I$  has a valid assignment of representative points.  By  Property~\ref{prop:variable} this corresponds to a truth value assignment to the  variables. By Property~\ref{prop:wire} the  wires corresponding  to false literals can only have the false setting (though we don't know about the wires corresponding to true literals).   By Property~\ref{prop:clause} every clause has at least one incoming wire that does not have the  false setting, and  this wire must then correspond to a True literal.  Thus, every clause is satisfied and $\Phi$ is satisfiable. 
\end{proof}

In the following subsections we describe the variable gadgets, wires, and clause gadgets  
for each of the  norms $L_1, L_2, L_\infty$.  
In each case we prove that the above properties hold. 

\subsubsection{$L_1$ norm, $\delta_1 = 2$.}

\paragraph*{Variable gadget.} We use a \emph{ladder} consisting of unit intervals, called \emph{rungs}, in a vertical pile, unit distance apart. 
See Figure~\ref{fig:L1-construction}(a).
Number the rungs starting with rung 1 at the top.  Observe that the $L_1$ distance between opposite endpoints of two consecutive rungs  is $\delta_1 = 2$.  Thus there are precisely two ways to assign representative points to a ladder of at least two rungs.  For Property~\ref{prop:variable}, let the variable interval be rung 1, and associate the value True [False] if rung 1 has its representative point on the left [right, resp.].

\begin{figure}[htb]
    \centering
    \includegraphics[width=.7\textwidth]{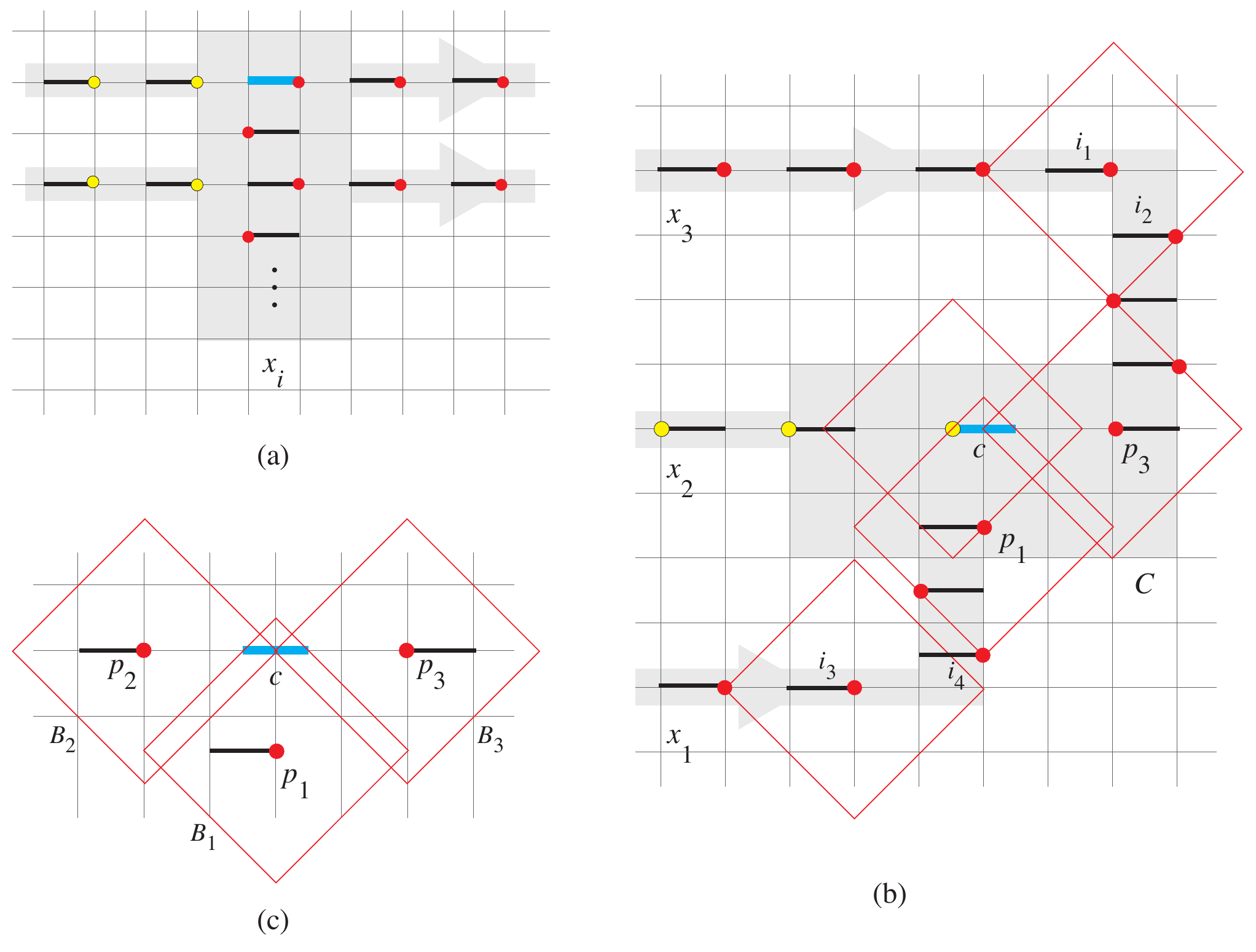}
    \caption{Construction for the $L_1$ norm.  (a) Variable gadget.  Intervals have length 1 and  $\delta_1 = 2$.   Variable $x_i$ is shown with the False setting where the representative point (the red dot) on the variable interval (shown in cyan) is on the right end.  The two wires heading right are forced to have the false setting (shown with red dots).  The two wires heading left have the true setting (shown with yellow dots).
    (b) Wires entering the 
    clause gadget for clause $C = x_1 \vee x_2 \vee x_3$.  The clause interval $c$ (shown in cyan) is displaced horizontally by $1/2$.
    Valid representative  points  are shown for the truth-value setting $x_1 = $ False, $x_2 = $ True, $x_3 = $ False.
    (c) A close-up of the clause interval $c$ showing  the $L_1$ balls $B_i$ of radius $\delta_1 = 2$ centred at $p_i$, $i=1,2,3$. 
    }
    \label{fig:L1-construction}
\end{figure}

\paragraph*{Horizontal wire.} For each horizontal  portion  of a wire, use a sequence of unit intervals separated by gaps of length 1. Attach the wires to the odd numbered rungs of a ladder in a variable gadget, with the rung acting  as the first interval of the wire. The false  setting has representative  points at the forward end of each interval (relative to the direction of the wire).  The true setting has representative points at the other end of each interval. 
For Property~\ref{prop:wire} 
(that the false setting is forced) 
observe that 
if a variable is False then its odd-numbered rungs have their  representative  points on the right, so  any horizontal wire extending to the right is forced to use representative points on the right (the forward end) of every interval of the wire.  
On  the other hand, if a variable is True then horizontal wires extending to the right \emph{may} use the true setting. 
Analogous properties hold for the horizontal wires extending to the left.

\paragraph*{Turning wires.}
We focus on the situation for a positive clause---the situation for a negative clause is symmetric.
The top wire coming in to a clause gadget turns downward via a wire ladder as shown  in Figure~\ref{fig:L1-construction}(b). Note that the false setting of interval $i_1$ in the  figure forces the false setting  of interval $i_2$, which then forces the settings down the wire ladder.  Note that the wire ladder can be as long as needed.  Since wires emanate from odd-numbered rungs of variable ladders, the wire ladder has an  even number of rungs and 
the bottom  interval of the wire ladder, at the horizontal line of the middle wire coming in to the clause, has its false setting on the left (see  point $p_3$ in the figure).  
One can verify that the true setting (with representative points at the opposite end  of each interval)  is  valid.

The bottom wire coming in to  a clause gadget turns  upward as shown  in Figure~\ref{fig:L2-construction}(b) via a wire ladder of intervals that are on the half-grid.
This wire ladder has an odd number of rungs, and 
we can ensure at least 3 rungs.  
The false setting of interval $i_4$ is forced because of the false setting of interval $i_3$ together with the ladder above $i_4$.  
The topmost interval of the wire ladder has its false setting on  the right.
One can verify that the true setting (with representative points at the opposite end  of each interval)  is  valid.

We have now established Property~\ref{prop:wire} for wires that turn.





\paragraph*{Clause gadget.}
See Figure~\ref{fig:L1-construction}(c).  The figure shows the clause interval $c$ together with the last interval in each of  the three  wires that come in to the clause gadget, and  the false settings of their representative  points at $p_1$, $p_2$, $p_3$.  
The $L_2$ distance between $p_1$ and either endpoint of $c$ is $\delta_1 = 2$.
Let $B_i$ be the $L_1$ ball of radius $\delta_1$ centred at $p_i$, $i=1,2,3$.  
We now verify Property~\ref{prop:clause}.
Observe that no  point of the interval $c$ is outside all three balls. Thus, if all three incoming  wires have the false setting, there is  no valid representative  point for interval $c$.
However, if at least one of the incoming wires has the true setting, we claim that there is a valid representative point on interval $c$:
If $p_1$ is at the left of its interval, use the midpoint of $c$, and if either of $p_2, p_3$ is at the other endpoint of its interval, use the endpoint of $c$ on that side.
Thus  Property~\ref{prop:clause} holds.


\subsubsection{$L_2$ norm, $\delta_2 = \frac{13}{5}$}

Consider two unit intervals one above the  other, separated by vertical distance $d_y$.  The $L_2$ distance between opposite endpoints of the intervals  is $d = \sqrt{1 + d_y^2}$. In order to have rational values for $d_y$ and $d$, we need scaled  Pythagorean  triples, natural numbers $a,b,c$ with $a^2 + b^2 = c^2$.   We base  our construction on the Pythagorean  triple $5, 12, 13$.  (The triple $3,4,5$ causes some interference.)  
To avoid writing  fractions everywhere, we  describe the construction for intervals of  length $5$, with $\delta_2 = 13$.  Scaling everything by $\frac{1}{5}$ gives us back unit intervals. 

Our construction of variable gadgets and horizontal wires is like the $L_1$ case, just with different spacing.

\paragraph*{Variable  gadget.}   
Use a ladder with  rungs of length 5  spaced 12 vertical units apart. 
See Figure~\ref{fig:L2-construction}(a).
The $L_2$ distance between  opposite endpoints of two consecutive rungs is $\delta_2 = 13$.  
Associate the value True [False] if rung 1 has its representative point on the left [right, resp.].  Property~\ref{prop:variable} holds.

\begin{figure}[htb]
    \centering
    \includegraphics[width=.8\textwidth]{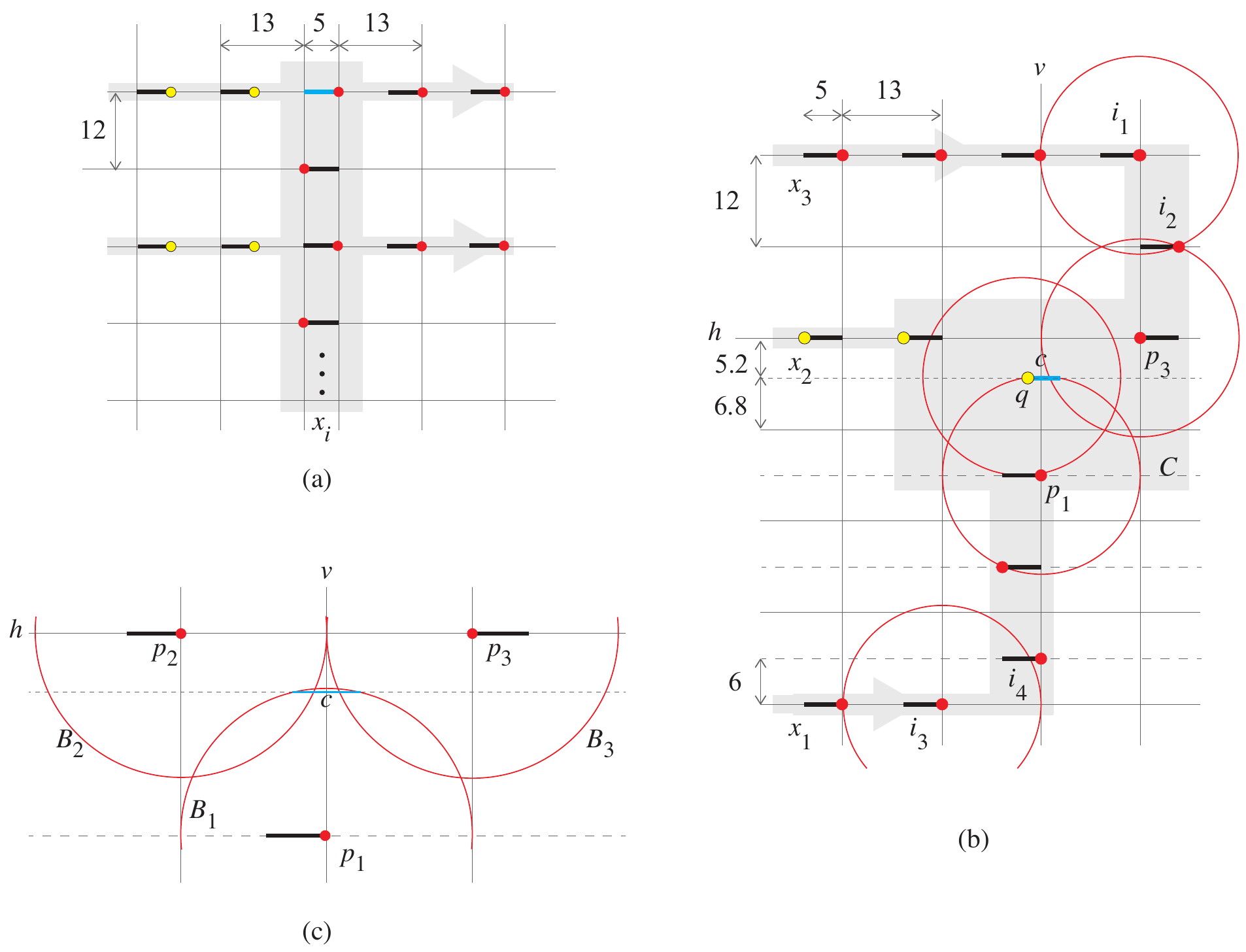}
    \caption{Construction for the $L_2$ norm.  (a) Variable gadget.  Intervals have length 5 and  $\delta_2 = 13$.   Variable $x_i$ is shown with the False setting where the representative point (the red dot) is on the right end of the variable interval (shown in cyan).  The two wires heading right are forced to have the false setting (shown with red dots).  The two wires heading left have the true setting (yellow dots).
    (b) Wires entering the 
    clause gadget for clause $C = x_1 \vee x_2 \vee x_3$.  The clause interval $c$ (shown in cyan) extends from $v - 2.5$ to $v + 2.5$ at $y$-coordinate $h - 5.2$, where $v$ and $h$ are the grid coordinates as shown.
    Valid representative  points are shown for the truth-value setting $x_1 = $ False, $x_2 = $ True, $x_3 = $ False.  
    (c) A close-up of the clause interval $c$ showing  the balls $B_i$ of radius $\delta_2 = 13$ centred at $p_i$, $i=1,2,3$.  
    }
    \label{fig:L2-construction}
\end{figure}

\paragraph*{Horizontal wire.}  For each horizontal portion of a wire, use intervals of length 5 separated by gaps of length 8 (so  the  right endpoints  of two consecutive intervals are distance 13 apart).  Attach the wires to the odd-numbered rungs of the ladder of the variable gadget.  The false setting has representative points at the forward end of each interval, and is forced if the corresponding  literal is False.
The true setting has representative points at the other end of each interval, and is valid if the corresponding literal is True.
So Property~\ref{prop:wire} holds.  

\paragraph*{Turning wires.} 
As for $L_1$, we focus on the situation for a positive clause---the situation for a negative clause is symmetric.
The top wire coming in to a clause gadget turns downward as shown  in Figure~\ref{fig:L2-construction}(b). 
As for $L_1$, the false setting of interval $i_1$ forces the false setting  of interval $i_2$, which then forces 
the bottom  interval of the wire ladder (coming in to the clause) to have its false setting on the left (see  point $p_3$ in the figure).  
One can verify that the true setting (with representative points at the opposite end  of each interval) is  valid.

The bottom wire coming in to  a clause gadget turns  upward as shown  in Figure~\ref{fig:L2-construction}(b) via a wire ladder of intervals that are on the half-grid, i.e., $i_4$ in the figure is 6 units above $i_3$. 
This wire ladder has an odd number of rungs, and 
we can ensure at least 3 rungs.  
The false setting of interval $i_4$ is forced because of the false setting of interval $i_3$ together with the ladder above $i_4$.  
The topmost interval of the wire ladder has its false setting on  the right.
One can verify that the true setting (with representative points at the opposite end  of each interval)  is  valid.

We have now established Property~\ref{prop:wire} for wires that turn.

\paragraph*{Clause gadget.}  See Figure~\ref{fig:L2-construction}(c).  The figure shows the clause interval $c$ together with the last interval in each of  the three  wires that come in to the clause gadget, and  the false settings of their representative  points at $p_1$, $p_2$, $p_3$.  
The clause interval $c$ extends from $v - 2.5$ to $v + 2.5$ at $y$-coordinate $h - 5.2$, where $v$ and $h$ are the grid coordinates as shown. 
Then the $L_2$ distance between $p_1$ and either endpoint of $c$ is  $\sqrt{2.5^2 + 12.8^2} \approx 13.04$ which  is greater  than $\delta_2 = 13$.
Let $B_i$ be the ball of radius $\delta_2$ centred at $p_i$, $i=1,2,3$.  The  endpoints of $c$ lie just outside $B_1$.
We now verify Property~\ref{prop:clause}.
Observe that no  point of the interval $c$ is outside all three balls. Thus, if all three incoming  wires have the false setting, there is  no valid representative  point for interval $c$.  We now consider what happens if at least one incoming wire has the true setting, i.e., if $p_1$, $p_2$, or $p_3$ were at the other end of its interval.  
If $p_2$ were at the other endpoint of its interval, then the left endpoint of $c$ would be a valid representative  point.   Similarly, if $p_3$ were at the  other endpoint of its interval, then the right endpoint of $c$ would be a valid representative  point.  Finally, if $p_1$ were at the  other endpoint of its interval, then the midpoint of $c$ would be a valid representative  point. 
Thus  Property~\ref{prop:clause} holds.

\subsubsection{$L_\infty$ norm, $\delta_\infty = \frac{1}{2}$}

In this case ladders still work, but it is difficult to  attach wires to ladders, so we use a more complicated variable  construction. 
%
%
A further difficulty for the $L_\infty$ case is that we were  unable  to construct a clause gadget of unit intervals based on choosing representative points only on the left/right endpoints of intervals.
(Although this is easy if the clause interval can have length 2.)
Instead, our construction will  place  representative  points at the endpoints or at the middle of each interval, which is why we set $\delta_\infty =  \frac{1}{2}$. 
For this norm, we describe horizontal wires first.

\paragraph*{Horizontal wire.} We use a double row of unit intervals spaced $1/6$ apart vertically.   Specifically, along one  horizontal line, we place a sequence of unit intervals with endpoints at each integer coordinate, and along the horizontal line $1/6$ below, we place a sequence of unit intervals with endpoints at each half integer coordinate.  See Figures~\ref{fig:infty-variable} and~\ref{fig:Linf-construction}.  For Property~\ref{prop:wire}, note that if two consecutive intervals have their representative points at their right endpoints, then all intervals further to the right on the wire must also  have their representative points at their right endpoints. 
Along a horizontal wire, the true setting places representative points at the midpoints of the intervals.

\paragraph*{Variable gadget.}  This gadget has eight intervals
as shown in  Figure~\ref{fig:infty-variable}.
The three intervals $i_1$ are coincident (or almost so), as are the three intervals $i_2$.  These force the representative point on interval $i_3$ to the middle of the interval.  Then the representative point on the variable interval (coloured cyan in the figure) must be 
either the left endpoint (representing a True value) or the right endpoint (representing a False value).

This eight-interval configuration is expanded to a ``double ladder'' with intervals spaced $1/6$ apart vertically as shown in Figure~\ref{fig:infty-variable}.
Down the ladder on the false side (which is the right side in the figure), the representative point 
for each interval is forced by those on the two  intervals above it. 
Down the ladder on the true side (the left side in the figure) we may use the assignment of representative points as shown in the figure. 
Wires extend to the right and left of the double ladder as shown in the figure.  Any wire corresponding to a false literal is forced to the false setting.  Any wire corresponding to a true literal may have the true setting.
Properties~\ref{prop:variable} and~\ref{prop:wire} hold. 

\begin{figure}[htb]
    \centering
    \includegraphics[width=.6\textwidth]{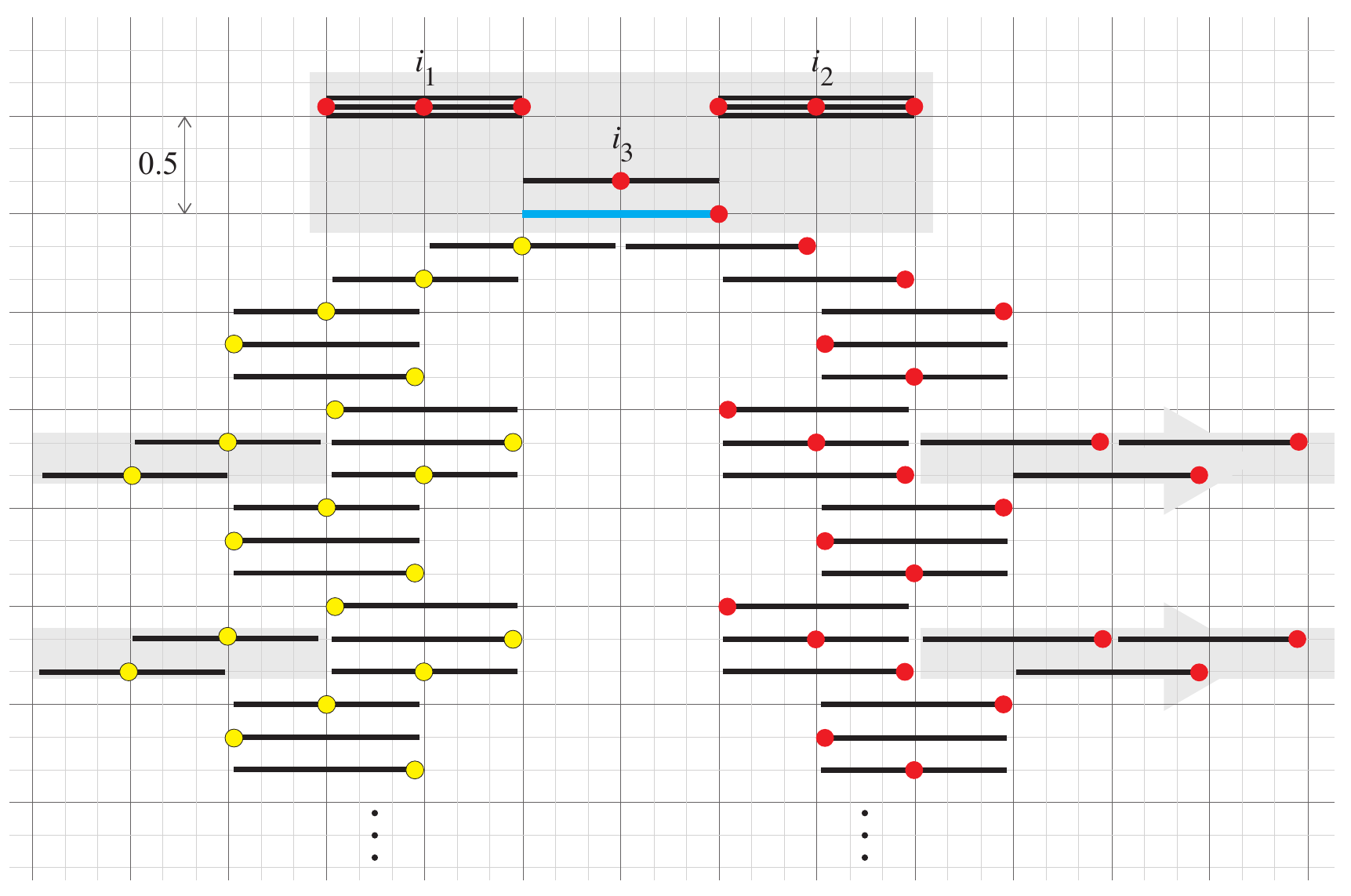}
    \caption{The variable gadget and emanating  wires for $L_\infty$. Grid spacing is $1/6$.  The shaded rectangle at the  top shows  the variable gadget.  The variable interval (shown in  cyan) has a choice of representative point at the left endpoint (True) or the right  endpoint (False).  The  False choice is shown here.  The placement of representative points down the right hand ladder (shown  as red dots) is then forced, and then the wires emanating to the right are forced to the false setting.   The placement of points down the left hand ladder (shown as yellow dots) is allowed, and the wires emanating to the left are allowed to be in the true setting. }
    \label{fig:infty-variable}
\end{figure}

\paragraph*{Turning wires.}
The top wire coming in to a positive clause gadget turns downward as shown  in Figure~\ref{fig:Linf-construction}. 
The false setting along the wire (see (a) in the figure) forces  
the bottom  interval of the ladder to have its false setting on the left (see  point $p_3$ in the figure).  The ladder can be extended to the appropriate length by adding multiples of three intervals.
The true setting is shown in Figure~\ref{fig:Linf-construction}(b), and allows the right endpoint of the clause interval to be used as a representative point. 

The bottom wire coming in to  a clause gadget turns  upward as shown  in Figure~\ref{fig:Linf-construction}. 
The false setting along the wire (see (a) in the figure) forces the top interval of the ladder to have its false setting on the right (see point $p_1$ in the figure). 
The ladder can be extended to the appropriate length by adding multiples of three intervals.
The true setting is shown in Figure~\ref{fig:Linf-construction}(b), and allows the midpoint of the clause interval to be used as a representative point.
We have now established Property~\ref{prop:wire} for wires that turn.

\begin{figure}[htb]
    \centering
    \includegraphics[width=\textwidth]{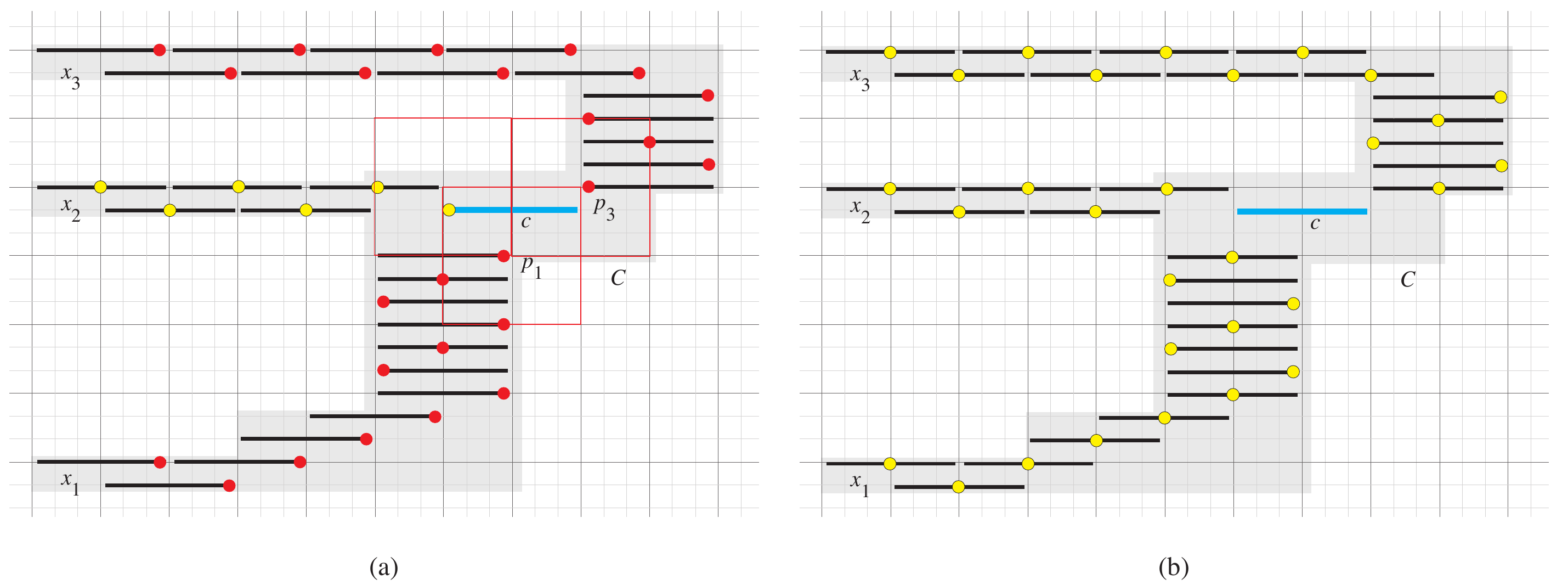}
    \caption{The clause gadget and its entering wires for $L_\infty$
    for clause $C = x_1 \vee x_2 \vee x_3$.  The clause interval $c$ is shown in cyan.
    (a) Valid representative  points  for the truth-value setting $x_1 = $ False, $x_2 = $ True, $x_3 = $ False.  
    (b) The true settings on the incoming wires.
    (These points are not forced.)
    }
    \label{fig:Linf-construction}
\end{figure}



\paragraph*{Clause gadget.}
See Figure~\ref{fig:Linf-construction}.
The figure shows the clause interval $c$ together with the three wires coming in to the clause gadget.  
Let $p_1$, $p_2$, $p_3$ be the  representative points of the last intervals in the wires entering the clause gadget.  Consider the false positions of $p_1, p_2, p_3$. (Figure~\ref{fig:Linf-construction}(a)
shows the false positions of $p_1$ and $p_3$.)
The $L_\infty$ distance between $p_1$ and either endpoint of $c$ is $\delta_\infty = \frac{1}{2}$.
Let $B_i$ be the $L_\infty$ ball of radius $\delta_\infty$ centred at $p_i$, $i=1,2,3$.  
We now verify Property~\ref{prop:clause}.
Observe that no  point of the interval $c$ is outside all three balls. Thus, if all three incoming  wires have the false setting, there is  no valid representative  point for interval $c$.
However, if at least one of the incoming wires has the true setting, we claim that there is a valid representative point on interval $c$.  Refer to Figure~\ref{fig:Linf-construction}(b).
If $p_1$ is at the middle of its interval, use the midpoint of $c$, and if either of $p_2, p_3$ is at the middle of its interval, use the endpoint of $c$ on that side.
Thus  Property~\ref{prop:clause} holds.

\subsection{Bit complexity and containment in NP}
\label{sec:bit-complexity}

In this section we show that the decision version of distant representatives for rectangles is contained in NP for the $L_1$ and $L_\infty$ norms, and discuss why this is open for the $L_2$ norm.

\remove{
In this section we 
address the question: given an input to the distant representatives problem, what is the bit complexity of $\delta^*$ and of representative points realizing $\delta^*$?
In $L_2$ we show that $\delta^*$ may be irrational.  In $L_1$ and $L_\infty$ we give polynomial bounds on the size of $\delta^*$ and representative points. 
Furthermore, for $L_\infty$ we show that there are only $O(n^3)$ possible values for $\delta^*$. 
As a corollary of these polynomial bounds, the decision version of distant representatives lies in NP for the $L_1$ and $L_\infty$ norms. This is an open question for $L_2$.
}
\remove{
\subsubsection{$L_\infty$-norm}

\anna{This stuff was moved up into the optimization section.}

\subsubsection{$L_1$-norm}

While we can find a set of $O(n^3)$ candidates for the value of $\delta_\infty^*$, we do not know of a similar result for $L_1$.  
One can easily generalize Lemma~\ref{lem:tight} and show (for example) that any rectangle $R_j$ is either left-represented or there exists a representative $p_i$ to the left of $p_j$ with $d_\ell(p_i,p_j)=\delta^*$.  But Corollary~\ref{cor:tight} does not hold, since the distance between two points now depends on both $x$-coordinate and $y$-coordinate, and in consequence 
it seems impossible to extract a formula for the optimum value of $\delta^*$ from this.

\anna{Reformulate this (below) to claim poly bound on size of $\delta^*$ and representative points.  State as a corollary that the decision problem lies in NP.  Note that we do not need to \emph{solve} an LP to get our results.}

Using Corollary~\ref{cor:tight} and Lemma~\ref{lem:delta_form}, one can easily argue
that the decision variant of the distant representatives problem is in NP for the
$L_\infty$-norm, because there exists an optimal solution whether the coordinates
are polynomial in the input-coordinates.   

We can prove membership in NP also for the $L_1$-norm, and to this end, 
appeal to a more general argument from Abrahamsen et al.~\cite{abrahamsen2020framework}.
\anna{Our problem doesn't exactly match theirs (they are packing given objects (i.e. $\delta$ is given) and they just attribute to folklore anyway, so I suggest we just mention folklore rather than referring to them.}

\begin{lemma}
The decision version of the distant representatives problem is in NP for the
$L_1$-norm. 
\end{lemma}
\anna{Note that the proof shows something stronger---$\delta^*$ has a poly number of bits.}
\begin{proof}
We must argue that we can describe and verify a solution using polynomially many bits
and polynomial time.  We describe a solution via two permutations $\pi_x$ and $\pi_y$
that are the sorting permutations of an optimum solution with respect to the
$x$-coordinate and $y$-coordinates (with ties broken arbitrarily).  Given these
sorting permutations, the optimal solutions in the $L_1$-norm 
can then easily be described as the solution of a linear program as follows:
$$
\begin{array}{llll}
\max & \delta_1^* & & \\
\text{subject to} 
& x_j-x_i + y_j-y_i \geq \delta_1^* & \text{for all pairs $i,j$ with $\pi_x(j)>\pi_x(i)$ and $\pi_y(j)>\pi_y(i)$} \\
& x_j-x_i + y_i-y_j \geq \delta_1^* & \text{for all pairs $i,j$ with $\pi_x(j)>\pi_x(i)$ and $\pi_y(j)<\pi_y(i)$} \\
& x_i-x_j + y_j-y_i \geq \delta_1^* & \text{for all pairs $i,j$ with $\pi_x(j)<\pi_x(i)$ and $\pi_y(j)>\pi_y(i)$} \\
& x_i-x_j + y_i-y_j \geq \delta_1^* & \text{for all pairs $i,j$ with $\pi_x(j)<\pi_x(i)$ and $\pi_y(j)<\pi_y(i)$}  \\
& x_j-x_i \geq 0 & \text{for all pairs $i,j$ with $\pi_x(j)>\pi_x(i)$} \\
& y_j-y_i \geq 0 & \text{for all pairs $i,j$ with $\pi_y(j)>\pi_y(i)$} \\
& \ell_i \leq x_i \leq r_i & \text{for all $i$} \\
& b_i \leq y_i \leq t_i & \text{for all $i$} 
\end{array}
$$
The optimum solution of a linear program can be found in $O(N+\log D)$ time, where $N$ is
the number of variables and constraints and $D$ is the maximum absolute value of a restriction
For our linear program, we have $N\in O(n^2)$ while
$D$ equals the maximum absolute coordinate, so $\log D$ is polynomial in the
input-size.  So we can verify the solution to the $L_1$-norm decision problem in polynomial
time as desired.
\end{proof}

We note that this linear-program approach could also be used to prove that the decision problem
for the $L_\infty$-norm is in NP; here we would describe the solution by giving the sorting
permutations as well as and indicator for every pair of rectangles whether the distance is
achieved in the $x$-direction or the $y$-direction.
}

\subsubsection{$L_\infty$-norm and $L_1$-norm}

One can immediately argue that the problem lies in NP for the $L_\infty$-norm
due to Lemma~\ref{lemma:n-cubed}.
We can prove membership in NP also for the $L_1$-norm, and to this end, 
prove a more general statement on the bit complexity.

\begin{lemma}
\label{lemma:L1-bit-bound}
The number of bits for $\delta^*$ and the coordinates of an optimal solution is polynomial in $n+\log D$ for the $L_1$-norm.
\end{lemma}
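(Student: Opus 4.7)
My plan is to exhibit an optimal solution as a vertex of a linear program whose description length is polynomial in $n + \log D$, and then invoke the standard bound that an LP with polynomially sized coefficients has a vertex optimum with polynomially sized coordinates.

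First I would fix any optimal assignment $p_1^*, \dots, p_n^*$ and let $\pi_x$, $\pi_y$ be the permutations that sort it by $x$- and $y$-coordinate respectively, with ties broken arbitrarily. Having guessed $\pi_x, \pi_y$, the $L_1$-distance $|x_j-x_i| + |y_j-y_i|$ becomes a linear expression in the variables because the signs of $x_j-x_i$ and $y_j-y_i$ are determined by the permutations. The optimal solution is then captured by the following LP over the variables $x_1,\dots,x_n, y_1,\dots,y_n, \delta$:
\begin{align*}
\max\ & \delta \\
\text{s.t. } & s_{ij}^x(x_j-x_i) + s_{ij}^y(y_j-y_i) \geq \delta \text{ for every pair $i\neq j$,} \\
& x_j-x_i \geq 0 \text{ whenever } \pi_x(j)>\pi_x(i), \\
& y_j-y_i \geq 0 \text{ whenever } \pi_y(j)>\pi_y(i), \\
& \ell_i \leq x_i \leq r_i, \quad b_i \leq y_i \leq t_i \text{ for every $i$,}
\end{align*}
where $s_{ij}^x, s_{ij}^y \in \{-1,+1\}$ are the signs determined by $\pi_x, \pi_y$. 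This LP has $O(n^2)$ constraints, $2n+1$ variables, coefficients in $\{-1,0,+1\}$, and right-hand sides of absolute value at most $D$.

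Next I would observe that the chosen $p_i^* = (x_i^*, y_i^*)$ together with $\delta = \delta^*$ is feasible for this LP, so its LP optimum is at least $\delta^*$. Conversely, every LP-feasible point is a valid set of distant representatives achieving the LP objective (the first family of constraints forces the $L_1$ distances to be at least $\delta$ regardless of how the solution relates to the original $\pi_x, \pi_y$). Hence the LP optimum equals $\delta^*$, and every vertex optimum of the LP is also a distant-representatives optimum. By the standard result from linear programming (e.g.~Cramer's rule applied to the $O(n)$ tight constraints at a vertex), an optimal vertex has all coordinates expressible as ratios of determinants of matrices whose entries have bit length $O(\log D)$ and dimension $O(n)$, giving coordinates and $\delta^*$ with bit length polynomial in $n+\log D$.

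The only subtle point is that we must know such permutations $\pi_x, \pi_y$ exist for \emph{some} optimum, but this is automatic: every candidate optimal solution has sorted orders, so taking them as $\pi_x, \pi_y$ and running the argument yields a vertex optimum of the corresponding LP with the claimed bit complexity. No actual guessing is needed for the bit-complexity statement, since we are only asserting existence. From this lemma, membership in NP for the $L_1$ decision problem follows by certifying $(\pi_x, \pi_y)$ together with the polynomial-size vertex solution and verifying the LP constraints directly.
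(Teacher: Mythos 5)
Your proof is correct and takes essentially the same route as the paper: fix the sorted orders $\pi_x,\pi_y$ of an optimal solution, linearize the $L_1$ distances into an LP, and invoke the polynomial bit-size of LP vertex optima. The paper is terser (it simply appeals to ``linear programming is in NP''), whereas you spell out the Cramer's-rule bound and the converse direction that every LP-feasible point is a valid set of representatives, but the argument is the same.
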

\begin{proof}
Fix an arbitrary optimal solution, and let $\pi_x$ and $\pi_y$ be its sorting permutation with respect to the $x$-coordinate and $y$-coordinate (breaking ties arbitrarily).  Then clearly some optimum solution is a solution to the following linear program (with variables $\delta_1^*$ and $x(R),y(R)$ for each rectangle $R$):
$$
\begin{array}{rlrlll}
\max & \delta_1^* & \\
\multicolumn{2}{r}{\text{subject to} }
& (x(R),y(R))\in R & \text{for all rectangles $R$} \\
&& x(R)\leq x(R')  & \text{for all rectangles $R,R'$ with $R$ before $R'$ in $\pi_x$} \\
&& y(R)\leq y(R')  & \text{for all rectangles $R,R'$ with $R$ before $R'$ in $\pi_y$} \\
\multicolumn{3}{r}{\quad d_1(\, (x(R),y(R)),(x(R'),y(R')\,) \geq \delta_1^*} & \text{for all rectangles $R,R'$}
\end{array}
$$
This is indeed a linear program, since we fixed the relative order of coordinates via $\pi_x$ and $\pi_r$ and hence $d_1(R,R')=|x(R)-x(R')|+|y(R)-y(R')|$ is a linear function in the domain.  Since linear programming is in NP 
we know that there exists an optimum solution where variables and solution-value have polynomial size in the input-numbers.
\end{proof}

\subsubsection{$L_2$-norm}

For the $L_2$-norm  $\delta^*$ may be irrational, even if all coordinates of rectangles are integers.  To see this, consider an input consisting of three identical unit squares (all on top of each other).  Then we must place three points inside a unit square while maximizing their pairwise distances.
The optimum solution is realized by placing an  equilateral triangle inside a square with the base of the triangle rotated $15^\circ$ from horizontal.  This gives $\delta^*=\sqrt 6 - \sqrt 2$ and point coordinates $(0,0), (1, 2 - \sqrt{3}), (2 - \sqrt{3}, 1)$ \url{https://mathworld.wolfram.com/EquilateralTriangle.html}. 
Note that $\delta^2$ is also irrational.

For the decision problem in the $L_2$ norm, 
we do not know if irrational representative points may be necessary if $\delta$ is rational.
Nor do we know
whether the decision problem is in NP---note that the linear constraints in the LP for the $L_1$ case in the proof of Lemma~\ref{lemma:L1-bit-bound} become quadratic constraints.  

\subsection{APX-hardness}
\label{sec:appendix-APX-hardness}

\remove{

In this section, we prove hardness-of-approximation results for the distant representatives problem on  horizontal and vertical segments in the plane.
Specifically, 
we prove lower bounds on the approximation factors that can be achieved in polynomial time, assuming P $\ne$ NP.

\begin{theorem} For $\ell = 1, 2, \infty$, let $g_\ell$ be the constant shown in Table~\ref{Tab:ratios}.  Suppose P $\ne NP$.  Then, for the $L_\ell$ norm,  there is no polynomial time algorithm with approximation factor less than $g_\ell$ for the distant representatives problem for horizontal and vertical segments.
\label{PTAS_claim}
\end{theorem}
\begin{table}[ht]
    \centering
    
    \begin{tabular}{c|c c c}
           & $L_1$ & $L_2$ & $L_\infty$\\
           \hline
           lower bound &  $g_1 = 1.5$ & $g_2 = 1.4425$ 
           & $g_{\infty} = 1.5$ \\
    \end{tabular}
    
    \caption{Best approximation ratios that can be achieved unless P=NP.} 
    \label{Tab:ratios}
\end{table}


We prove Theorem~\ref{PTAS_claim} using a 
\emph{gap reduction}.  
This standard approach is based on the fact that if there were polynomial time approximation algorithms with approximation factors better than $g_\ell$ then the \emph{gap versions} of the problem (as stated below) would be solvable in polynomial time.
Thus, proving that the gap versions are NP-hard implies that there are no polynomial time $g_\ell$-approximation algorithms unless P=NP.

Recall that $\delta^*_\ell$ is the max over all assignments of representative points, of the min distance between two points.

\begin{quote}
{\bf Gap Distant Representatives Problem.}\\
{\bf Input:} A set $I$ of horizontal and vertical segments in the plane.\\
{\bf Output:} 
\begin{itemize}
    \item YES if $\delta^*_\ell(I) \ge 1$;
    \item NO if $\delta^*_\ell(I) \le 1/g_\ell$;
    \item  and it does not matter what the output is for other inputs.
\end{itemize}

\end{quote}

To prove Theorem~\ref{PTAS_claim} it therefore suffices to prove:

\begin{theorem}
The Gap Distant Representatives problem is NP-hard.
\label{theorem:gap-NP-hard}
\end{theorem}


We prove this via a reduction from Monotone Rectilinear Planar 3-SAT, much like in the previous section. 
The gadgets are simpler because we can use vertical segments, but we must prove stronger properties. 
Given an instance $\Phi$ of Monotone Rectilinear Planar 3-SAT
we construct in polynomial time a set of horizontal and vertical segments $I$ 
with the following properties.

\begin{claim}
\label{claim:if-SAT}
If $\Phi$ is satisfiable then $\delta^*_\ell(I) = 1$.
\end{claim}

\begin{claim}
\label{claim:if-not-SAT}
If $\Phi$ is not satisfiable then $\delta^*_\ell(I) \le 1/g_\ell$.
\end{claim}

Thus a polynomial time algorithm for the Gap Distant Representatives problem yields a polynomial time algorithm for Monotone Rectilinear Planar 3-SAT.

We first describe the construction and then prove the claims.

} 

\paragraph*{Further reduction details}

\remove{ 
We reduce directly from Monotone Rectilinear Planar 3-SAT.  

\begin{figure}[htb]
    \centering
    \includegraphics[width=\textwidth]{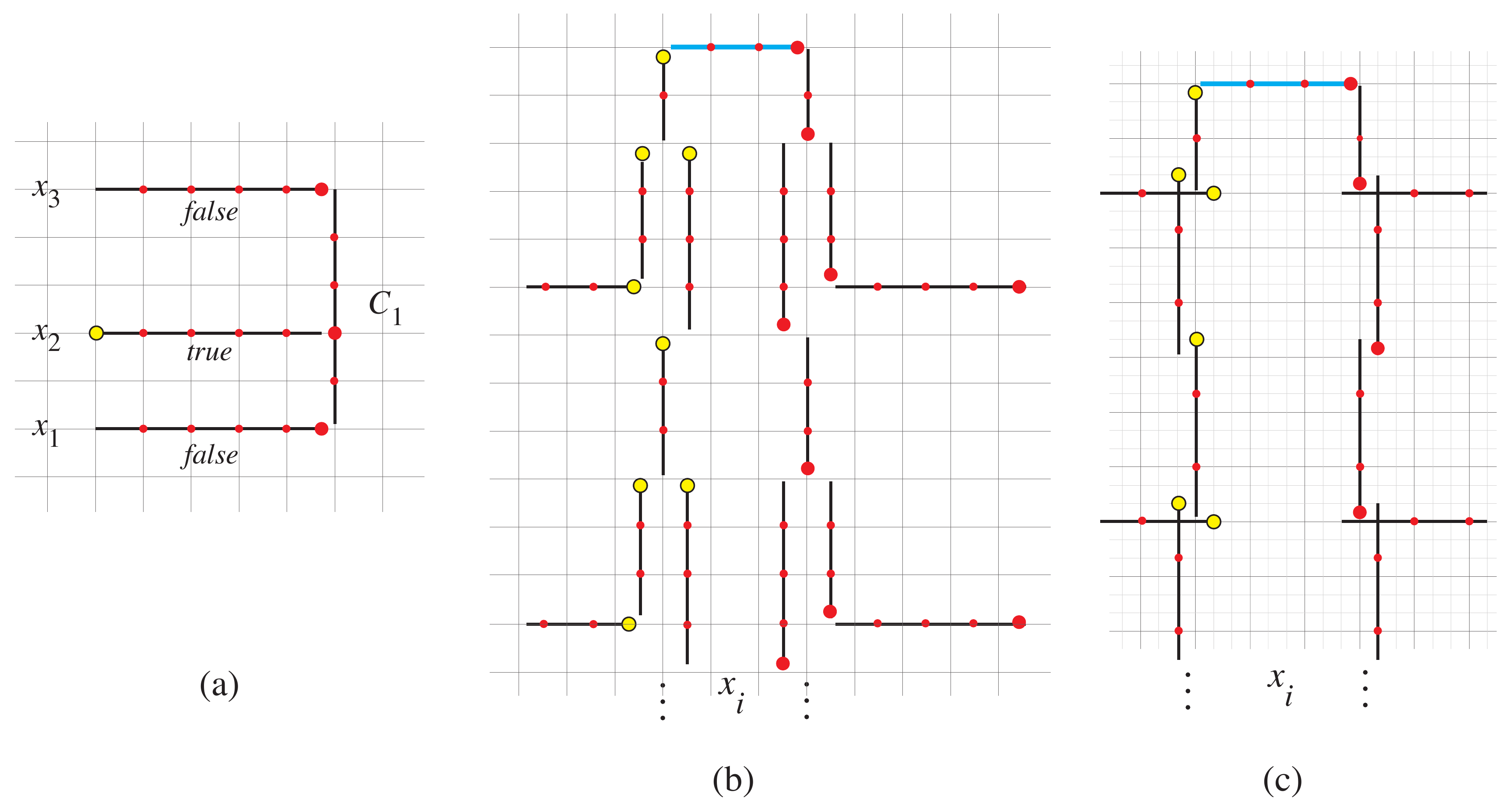}
    \caption{Wire, clause and splitter gadgets.  For clarity, segments are not drawn all the way to their endpoints.
    (a) Three wires attaching to the clause gadget for $C_1$.  Small red dots indicate the representative points on the 0-length segments.
     Wires $x_1$ and $x_2$ are in the false setting and wire $x_2$ is in the true setting, which allows the representative point for $C_1$ to be placed where the $x_2$ wire meets it, while keeping representative points at least distance 1 apart.
    (b) The basic splitter gadget for $L_{\infty}$ on the half grid showing two wires extending left and two right.  The variable segment (in thick cyan) for the variable $x_i$ has its representative point (the large red dot) at the right, which is the false setting.
    The representative points shown by large red/yellow dots are distance at least 1 apart in $L_\infty$.
    (c) The splitter gadget for $L_1$ on a grid subdivided into thirds, with chosen representative points distance at least 1 apart.
     }
    \label{fig:approx-plan}
\end{figure}

\paragraph*{Wire.} A wire is a long horizontal segment 
with 0-length segments at unit distances along it, except at its left and right endpoints. 
See Figure~\ref{fig:approx-plan}(a). 
The representative point for a 0-length segment must be the single point in the segment.  These are shown as small red dots in the figure.
As before, a wire is directed from the variable gadget to the clause gadget.
We distinguish a 
``false setting'' where the wire has its representative point within distance 1 of its forward end (at the clause gadget) and a ``true setting'' where the wire has its representative point within distance 1 of its tail end (at the variable gadget).

\paragraph*{Clause gadget.}  A clause gadget is a vertical segment.  Three wires corresponding to the three literals in the clause meet the vertical segment as shown in Figure~\ref{fig:approx-plan}(a).  There are 0-length segments at unit distance along the clause interval except where the three wires meet it. 
} 

\remove{
A variable segment has length 3, with two 0-length segments placed 1 and 2 units from the endpoints.  A representative point in the right half corresponds to a false value for the variable, and a representative point in the left half corresponds to a true value. 
In order to transmit the variable's value to all the connecting horizontal wires we build a ``splitter'' gadget.  The basic splitter gadget for $L_\infty$ is shown in Figure~\ref{fig:approx-plan}(b).
} 
To obtain our claimed lower bounds, the basic splitter gadget of Figure~\ref{fig:approx-plan}(c) must be modified for the $L_2$ and $L_1$ norms.
For $L_2$ we modify the spacing slightly.  In particular (see Figure~\ref{fig:distances}(c)), we overlap successive vertical segments by a small amount $t$. In order to keep the endpoint of each segment at distance 1 from the nearest 0-length segments, we must have   
$t \le t^* = 1 - \sqrt{3}/2 \approx .13397$ as indicated by the large radius 1 circle in the figure.  To get back on the grid, we increase the gap between the next two 0-length segment to $1+t$.  
In order to have rational coordinates, we use $t = 2/15 = .133\dot{3}$.
(Closer continued fraction approximations to $t^*$ give marginal improvements in $g_2$.)   

For $L_1$ we  construct an alternate splitter gadget with crossing segments as shown in Figure~\ref{fig:L1-splitter}.
As in the $L_2$ case, the gap between a 0-length segment and a segment endpoint or another 0-length segment is increased in some cases, specifically from 1 to $4/3$.

\begin{figure}[htb]
    \centering
    \includegraphics[width=.3\textwidth]{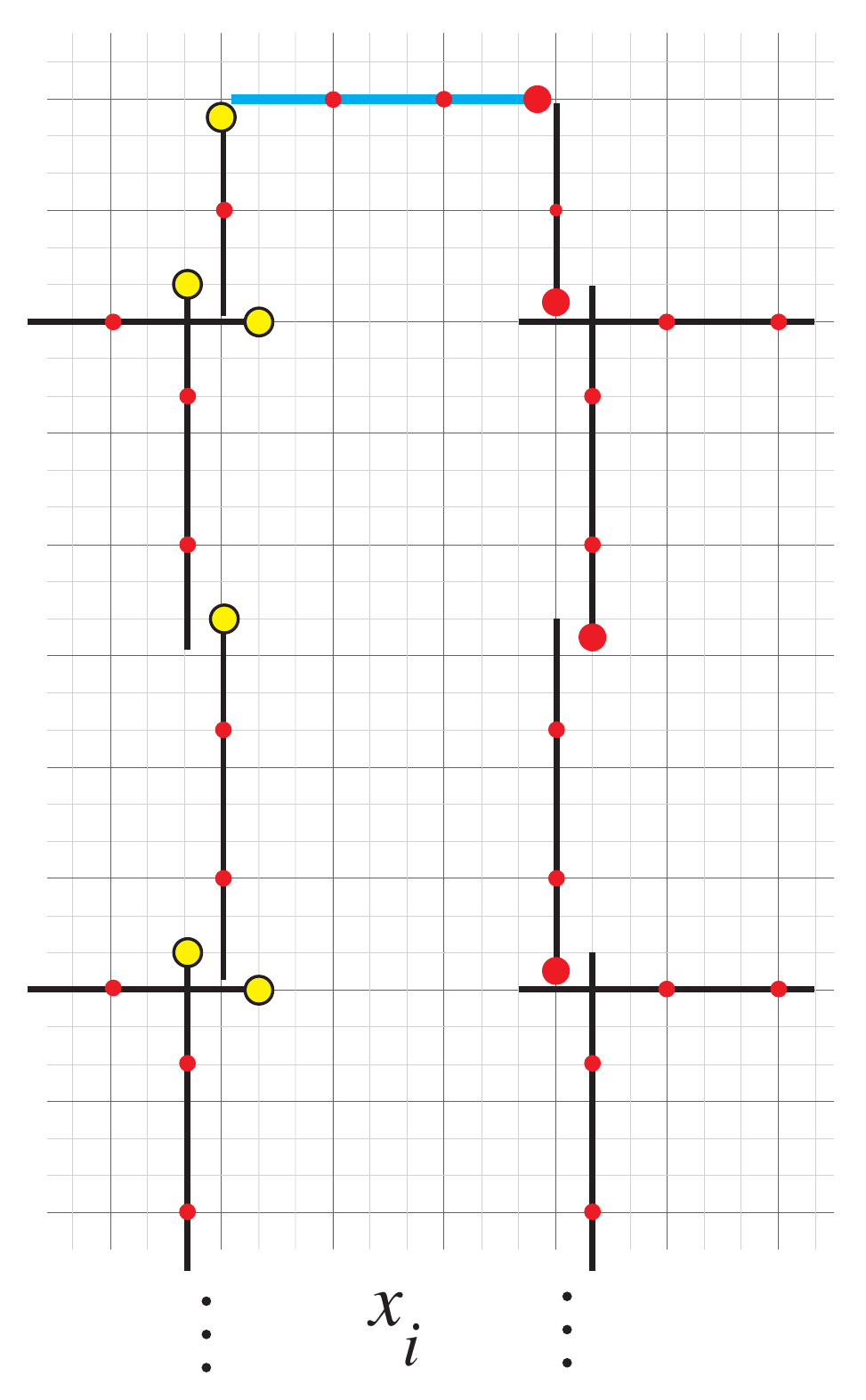}
    \caption{
     The splitter gadget for $L_1$ on a grid subdivided into thirds with 
    two wires extending left and two right.  The variable segment (in thick cyan) for the variable $x_i$ has its representative point (the large red dot) at the right, which is the false setting.
    The representative points shown by large red/yellow dots are distance at least 1 apart in $L_1$.
      }
    \label{fig:L1-splitter}
\end{figure}

\paragraph*{Reduction correctness}

\begin{proof}[Proof of Claim~\ref{claim:if-SAT}]
Suppose the formula $\Phi$ is satisfiable.  We show that there is an assignment of representative points for the intervals in $I$ such that the distance between any two points is at least 1 in the $L_\ell$ norm. 
We cannot do better than distance 1 because there are 0-length segments at distance exactly 1.
Thus $\delta^*_\ell(I) = 1$.

For each variable segment, we place its representative point at the right endpoint if the variable is True in $\Phi$, and at the left endpoint otherwise. 
Representative points for the other segments in the splitter gadget are placed at the endpoints of the segments as shown in Figure~\ref{fig:approx-plan}.
See also Figure~\ref{fig:distances}(c) for details of the $L_2$ case.
For each wire, we place its representative point at the forward end (at the clause gadget) if the corresponding literal is false, and at the tail end (in the splitter gadget) if the corresponding literal is true.  Each clause has a True literal---choose one and  place the representative point for the clause segment at the place where the wire for this True literal meets it.  
See Figure~\ref{fig:approx-plan} which shows that in all cases, the distance between any two representative points is at least 1.
\end{proof}

\begin{proof}[Proof of Claim~\ref{claim:if-not-SAT}]
Suppose the formula $\Phi$ is not satisfiable. Consider any assignment of representative points to the intervals of $I$.  We will show that there are two representative points within distance $1/g_\ell$.  Note that $1/g_1 = 1/g_\infty = 2/3$ and  $1/g_2 = .69324 . . . $. Observe that finding two points within distance $2/3$ suffices for all norms.

The representative points determine a truth value assignment $\cal V$ to the variables as follows: if a variable segment has its representative point in the right half, assign it True, otherwise assign it False.  Since $\Phi$ is not satisfiable, there must be some clause $C$ whose three literals are all false under the assignment $\cal V$. Suppose that $C$ contains three positive literals (the case of three negative literals is symmetric). 

Observe that if the representative point on 
a 
segment
is placed in the unit gap between two  0-length segments, then there are two points within distance $1/2$, which is less than $1/g_\ell$.  In the $L_2$ and $L_1$ splitters, we created longer gaps between successive 0-length segments. 
In the $L_2$ splitter, there are gaps of length $1+t$; a point in such a gap would cause two points to be within distance $(1+t)/2 = .56\dot{6}$ which is less than $1/g_2$.  In the $L_1$ splitter there are gaps of length $4/3$; a point in such a gap would cause two points to be within distance $2/3 = 1/g_1$. 

Thus we may suppose that every wire and every segment in a splitter gadget has its representative point ``near'' (i.e., within distance 1 of) one of its two endpoints, and  
that every clause segment has its representative point near an incoming wire. 
For the clause $C$ (the one whose three literals are all false), suppose its representative point is near incoming wire $w$, and suppose $w$ is associated with variable $x$.
We now separate into two cases depending whether the wire $w$ has its representative point near the forward end or the tail end. 

\begin{figure}[htb]
    \centering
    \includegraphics[width=.9\textwidth]{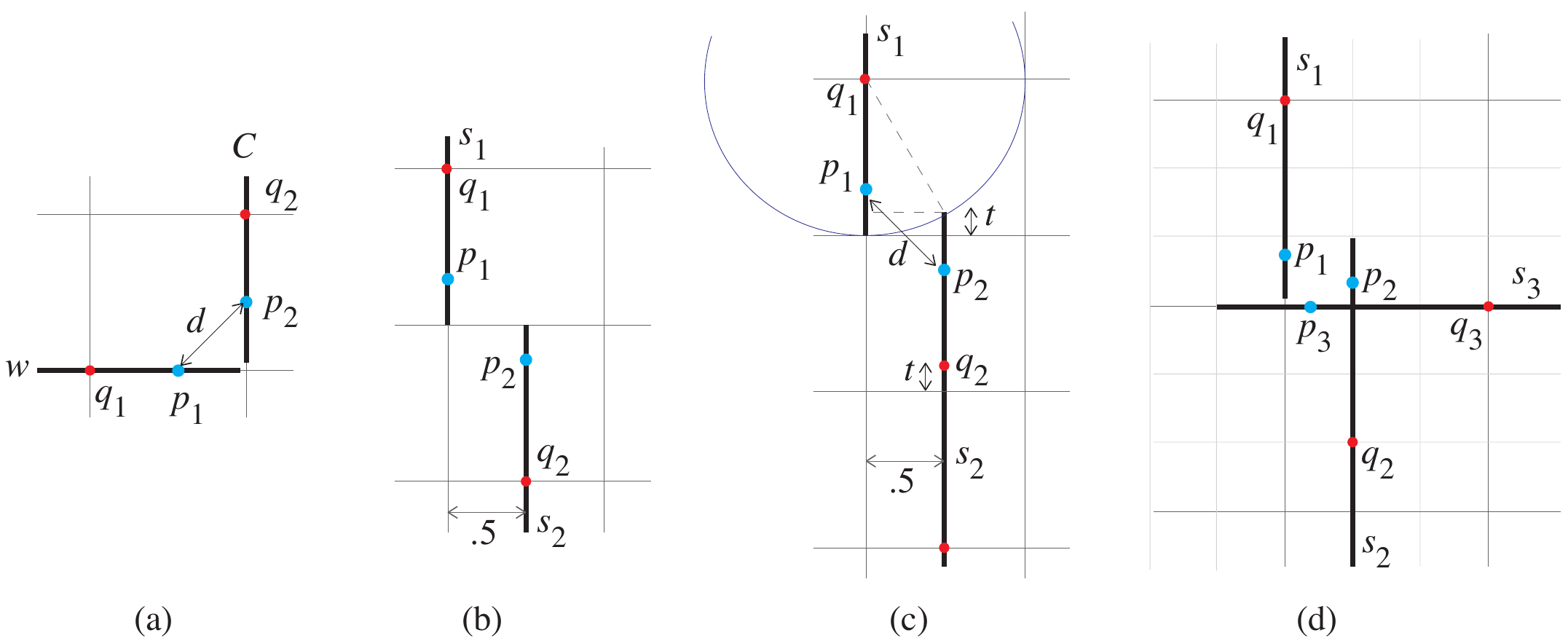}
    \caption{
    (a) A representative point $p_2$ on clause segment $C$ near a representative point $p_1$ on  wire $w$. 
    (b) Representative points $p_1$ and $p_2$ on successive vertical segments of the $L_\infty$ splitter.
    (c) Representative points $p_1$ and $p_2$ on successive segments of the $L_2$ splitter.  In this splitter successive vertical segments overlap by amount $t$. The top endpoint of $s_2$ is not inside the unit circle centred at $q_1$, so long as $t \le t^* = 1 - \sqrt{3}/2$.  
    (d) Representative points $p_1, p_2, p_3$ on successive  segments of the $L_1$ splitter.
    }
    \label{fig:distances}
\end{figure}

\smallskip
\noindent
{\bf Case 1.} The wire $w$ has its representative point near the forward (clause) end.
The situation is as shown in Figure~\ref{fig:distances}(a), with representative point $p_1$ on the wire $w$ and representative point $p_2$ on the clause segment.  If $p_i$ is within distance $2/3$ from its nearest 0-length segment $q_i$, we are done.  Otherwise consider $d_\ell(p_1,p_2)$, marked $d$ in the figure.  We have $d_\ell(p_1, p_2) \le d_1(p_1, p_2) \le 2/3$.

\remove{
There are three distances involved: from $p$ to the closest 0-length segment on the wire, from $p$ to $q$, and from $q$ to the closest 0-length segment on the clause segment.  To maximize the minimum, the three distances should all have the same value $d$.  
We solve for $d$ \graeme{maybe mention that RHS geometrically represents $d_\ell(p,q)$} in each of the norms as follows:

($L_\infty$) $d=1-d$ so $d=1/2$ which is $\le 1/g_\infty = 2/3$. 

\graeme{($L_\infty$) $d=\max(1-d,1-d)=1-d$ so $d=1/2$ which is $\le 1/g_\infty = 2/3$.}

($L_1$) $d=2(1-d)$ so $d=2/3$ which is $ \le 1/g_1 = 2/3$. 

($L_2$) $d = \sqrt{2(1-d)^2}$, so $d^2 - 4d +2=0$ which solves to 
$d= 2-\sqrt 2 \approx .5858$ which is $\le 1/g_2 = .69324 . . . $.

\graeme{Can keep it like this with 3 cases. Or can do one general case, but probably need two or so sentences.}

\noindent
In all cases, we have two points  within distance $1/g_\ell$.
}

\smallskip
\noindent
{\bf Case 2.} The wire $w$ has its representative point near its tail end.  Recall that wire $w$ is associated with variable $x$ which is set False, i.e., the variable segment for $x$ 
has its representative point near the right end.  

In the splitter gadget  
there is a sequence of segments from  the variable segment for $x$ to the wire $w$. 
Somewhere along the sequence there must be two consecutive segments $s_1$ and $s_2$ with representative points $p_1$ and $p_2$ where $p_1$ is near the end  of $s_1$ and $p_2$ is near the start of $s_2$.

If the endpoints of $s_1$ and $s_2$ meet at a right angle, then the analysis in Case 1 shows that there are two points within distance $2/3$.

We separate the remaining cases by the norm.  
For the $L_\infty$ norm  the segments $s_1$ and $s_2$ must be vertical, as shown in 
Figure~\ref{fig:distances}(b).
If either point $p_i$ is within distance $2/3$ of its nearest 0-length segment $q_i$, we are done. 
Otherwise $p_1$ and $p_2$ lie in a rectangle of size $\frac{1}{2} \times \frac{2}{3}$ so their $L_\infty$ distance is at most $2/3$.

Next we consider the $L_1$ norm.  
See Figure~\ref{fig:distances}(d), which shows segment $s_1$ and two possible following segments $s_2$ and $s_3$, together with possible representative points $p_i$ on $s_i$ and the nearest 0-length segment $q_i$ on $s_i$, $i=1,2,3$. If $p_i$ is within distance $2/3$ of $q_i$ we are done.
Otherwise $p_1$ and $p_2$ lie in a square of side-length $1/3$ so their $L_1$ distance is at most $2/3$. 
The same is true for $p_1$ and $p_3$. 

Finally we consider the $L_2$ norm, see points $p_1$ and $p_2$ in 
Figure~\ref{fig:distances}(c).  This is the one case were we do not guarantee two points within distance $2/3$ but only within the higher bound $1/g_2 = .69324 . . . $.  
There are three distances involved: from $p_1$ to the closest 0-length segment $q_1$ on $s_1$, from $p_1$ to $p_2$, and from $p_2$ to the closest 0-length segment $q_2$ on $s_2$. 
To maximize the minimum, the three distances should all have the same value $d$.  

%

Then $d = \sqrt{(\frac{1}{2})^2 + (2-t - 2d)^2}$, so $3d^2 - 4(2-t)d + (2-t)^2 + .25 = 0$. With $t = 2/15$ this solves to 
$d \approx .693237$ which is $\le 1/g_2 = .69324 . . . $. 

This completes the proof of the claim.
\end{proof}

\end{appendix}

\end{document}